\newtheorem{thm}{Theorem}
\newtheorem{lem}{Lemma}
\newtheorem{coro}{Corollary}
\newtheorem{defi}{Definition}
\newtheorem{rem}{Remark}
\begin{document}

\title{The Secrecy Capacity Region of the Gaussian MIMO Broadcast Channel}

\author{Ghadamali Bagherikaram, Abolfazl S. Motahari, Amir K. Khandani\\
Coding and Signal Transmission Laboratory,\\
 Department of
Electrical
and Computer Engineering,\\
 University of Waterloo, Waterloo, Ontario,
 N2L 3G1\\
 Emails: \{gbagheri,abolfazl,khandani\}@cst.uwaterloo.ca
}
 \maketitle
\footnote{Financial support provided by Nortel and the corresponding
matching funds by the Natural Sciences and Engineering Research
Council of Canada (NSERC), and Ontario Centers of Excellence (OCE)
are gratefully acknowledged.}
\begin{abstract}
In this paper, we consider a scenario where a source node wishes to
broadcast two confidential messages for two respective receivers via
a Gaussian MIMO broadcast channel. A wire-tapper also receives the
transmitted signal via another MIMO channel. First we assumed that
the channels are degraded and the wire-tapper has the worst channel.
We establish the capacity region of this scenario. Our achievability
scheme is a combination of the superposition of Gaussian codes and
randomization within the layers which we will refer to as Secret
Superposition Coding. For the outerbound, we use the notion of
enhanced channel to show that the secret superposition of Gaussian
codes is optimal. We show that we only need to enhance the channels
of the legitimate receivers, and the channel of the eavesdropper
remains unchanged. Then we extend the result of the degraded case to
non-degraded case. We show that the secret superposition of Gaussian
codes along with successive decoding cannot work when the channels
are not degraded. we develop a Secret Dirty Paper Coding (SDPC)
scheme and show that SDPC is optimal for this channel. Finally, we
investigate practical characterizations for the specific scenario in
which the transmitter and the eavesdropper have multiple antennas,
while both intended receivers have a single antenna. We characterize
the secrecy capacity region in terms of generalized eigenvalues of
the receivers channel and the eavesdropper channel. We refer to this
configuration as the MISOME case. In high SNR we show that the
capacity region is a convex closure of two rectangular regions.
\end{abstract}
\section{Introduction}
Recently there has been significant research conducted in both
theoretical and practical aspects of wireless communication systems
with Multiple-Input Multiple-Output (MIMO) antennas. Most works have
focused on the role of MIMO in enhancing the throughput and
robustness. In this work, however, we focus on the role of such
multiple antennas in enhancing wireless security.

The information-theoretic single user secure communication problem
was first characterized by Wyner in \cite{1}. Wyner considered a
scenario in which a wire-tapper receives the transmitted signal over
a degraded channel with respect to the legitimate receiver's
channel. He measured the level of ignorance at the eavesdropper by
its equivocation and characterized the capacity-equivocation region.
Wyner's work was then extended to the general broadcast channel with
confidential messages by Csiszar and Korner \cite{2}. They
considered transmitting confidential information to the legitimate
receiver while transmitting a common information to both the
legitimate receiver and the wire-tapper. They established a
capacity-equivocation region of this channel. The secrecy capacity
for the Gaussian wire-tap channel was characterized by
Leung-Yan-Cheong in \cite{3}.

The Gaussian MIMO wire-tap channel has recently been considered by
Khisti and Wornell in \cite{4,5}. Finding the optimal distribution,
which maximizes the secrecy capacity for this channel is a nonconvex
problem. Khisti and Wornell, however, followed an indirect approach
to evaluate the secrecy capacity of  Csiszar and Korner. They used a
genie-aided upper bound and characterized the secrecy capacity as
the saddle-value of a min-max problem to show that Gaussian
distribution is optimal. Motivated by the broadcast nature of the
wireless communication systems, we considered the secure broadcast
channel with an external eavesdropper in \cite{6,6_1} and
characterized the secrecy capacity region of the degraded broadcast
channel and showed that the secret superposition coding is optimal.
Parallel and independent with our work of \cite{6,6_1}, Ekrem et.
al. in \cite{6_2,6_3} established the secrecy capacity region of the
degraded broadcast channel with an external eavesdropper. The
problem of Gaussian MIMO broadcast channel without an external
eavesdropper is also solved by Lui. et. al. in \cite{6_4,6_5,6_6}.

The capacity region of the conventional Gaussian MIMO broadcast
channel is studied in \cite{7} by Weingarten et al. The notion of an
enhanced broadcast channel is introduced in this work and is used
jointly with entropy power inequality to characterize the capacity
region of the degraded vector Gaussian broadcast channel. They
showed that the superposition of Gaussian codes is optimal for the
degraded vector Gaussian broadcast channel and that dirty-paper
coding is optimal for the nondegraded case.

In the conference version of this paper (see \cite{7_0}), we
established the secrecy capacity region of the degraded vector
Gaussian broadcast channel. Our achievability scheme, was a
combination of the superposition of Gaussian codes and randomization
within the layers which we refereed to as Secret Superposition
Coding. For the outerbound, we used the notion of enhanced channel
to show that the secret superposition of Gaussian codes is optimal.
In this paper, we aim to characterize the secrecy capacity region of
a general secure Gaussian MIMO broadcast channel. Our achievability
scheme is a combination of the dirty paper coding of Gaussian codes
and randomization within the layers. To prove the converse, we use
the notion of enhanced channel and show that the secret dirty paper
coding of Gaussian codes is optimal. We investigate practical
characterizations for the specific scenario in which the transmitter
and the eavesdropper have multiple antennas, while both intended
receivers have a single antenna. This model is motivated when a base
station wishes to broadcast secure information for small mobile
units. In this scenario small mobile units have single antenna while
the base station and the eavesdropper can afford multiple antennas.
We characterize the secrecy capacity region in terms of generalized
eigenvalues of the receivers channel and the eavesdropper channel.
We refer to this configuration as the MISOME case.In high SNR we
show that the capacity region is a convex closure of two rectangular
regions.

Parallel with our work, Ekrem et. al \cite{7_1} and Liu et. al.
\cite{7_2,7_3}, independently considered the secure MIMO broadcast
channel and established its capacity region. Ekrem et. al. used the
relationships between the minimum-mean-square-error and the mutual
information, and equivalently, the relationships between the Fisher
information and the differential entropy to provide the converse
proof. Liu et. al. considered the vector Gaussian MIMO broadcast
channel with and without an external eavesdropper. They presented a
vector generalization of Costa's Entropy Power Inequality to provide
their converse proof. In our proof, however, we enhance the channels
properly and show that the enhanced channels are proportional. We
then use the proportionality characteristic to provide the converse
proof. The rest of the paper is organized as follows. In section II
we introduce some preliminaries. In section III, we establish the
secrecy capacity region of the degraded vector Gaussian broadcast
channel. We extend our results to non-degraded and non vector case
in section IV. In Section V, we investigate the MISOME case. Section
VI concludes the paper.

\section{Preliminaries}
Consider a Secure Gaussian Multiple-Input Multiple-Output Broadcast
Channel (SGMBC) as depicted in Fig. \ref{f1}.
\begin{figure}
\centerline{\includegraphics[scale=.5]{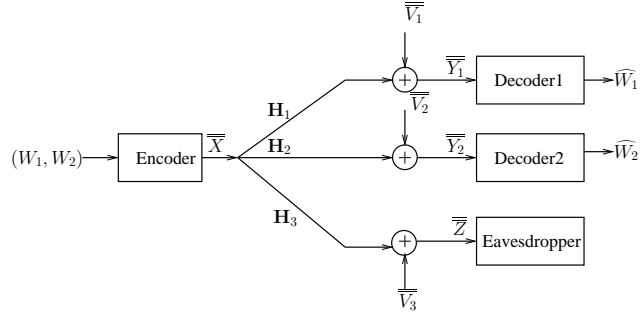}} \caption{Secure
Gaussian MIMO Broadcast Channel} \label{f1}
\end{figure}
 In this confidential setting, the transmitter wishes to send two independent messages
$(W_{1},W_{2})$ to the respective receivers in $n$ uses of the
channel and prevent the eavesdropper from having any information
about the messages. At a specific time, the signals received by the
destinations and the eavesdropper are given by
\begin{IEEEeqnarray}{lr}\label{eq1}\nonumber
\mathbf{y_{1}}=\mathbf{H}_{1}\mathbf{x}+\mathbf{n_{1}},\\
\mathbf{y_{2}}=\mathbf{H}_{2}\mathbf{x}+\mathbf{n_{2}},\\
\mathbf{z}=\mathbf{H}_{3}\mathbf{x}+\mathbf{n_{3}}, \nonumber
\end{IEEEeqnarray}
where
\begin{itemize}
  \item $\mathbf{x}$ is a real input vector of size $t\times 1$ under an input covariance constraint. We require that $E[\mathbf{x}\mathbf{x}^{T}]\preceq \mathbf{S}$ for a positive semi-definite matrix $\mathbf{S}\succeq 0$. Here,$\prec,\preceq,\succ$, and $\succeq$ represent partial ordering between symmetric matrices where $\mathbf{B}\succeq \mathbf{A}$ means that $(\mathbf{B}-\mathbf{A})$ is a positive semi-definite matrix.
  \item $\mathbf{y_{1}}$, $\mathbf{y_{2}}$, and $\mathbf{z}$ are real output vectors  which are received by the destinations and the eavesdropper respectively. These are vectors of size $r_{1} \times 1$, $r_{2} \times 1$, and $r_{3} \times 1$, respectively.
  \item $\mathbf{H}_{1}$, $\mathbf{H}_{2}$, and $\mathbf{H}_{3}$ are fixed, real gain matrices which model the channel gains between the transmitter and the receivers. These are matrices of size $r_{1} \times t$, $r_{2} \times t$, and $r_{3} \times t$ respectively. The channel state information is assumed to be known perfectly at the transmitter and at all receivers.
  \item $\mathbf{n_{1}}$, $\mathbf{n_{2}}$ and $\mathbf{n_{3}}$ are real Gaussian random vectors with zero means and covariance matrices $\mathbf{N_{1}}=E[\mathbf{n_{1}}\mathbf{n_{1}}^{T}]\succ 0$, $\mathbf{N_{2}}=E[\mathbf{n_{2}}\mathbf{n_{2}}^{T}]\succ 0$, and $\mathbf{N_{3}}=E[\mathbf{n_{3}}\mathbf{n_{3}}^{T}]\succ 0$ respectively.
\end{itemize}
Let $W_{1}$ and $W_{2}$ denote the the message indices of user $1$
and user $2$, respectively. Furthermore, let
$\overline{\overline{X}}$, $\overline{\overline{Y}}_{1}$,
$\overline{\overline{Y}}_{2}$, and $\overline{\overline{Z}}$ denote
the random channel input and random channel outputs matrices over a
block of $n$ samples. Let $\overline{\overline{V}}_{1}$,
$\overline{\overline{V}}_{2}$, and $\overline{\overline{V}}_{3}$
denote the additive noises of the channels. Thus,
\begin{IEEEeqnarray}{lr}\nonumber
\overline{\overline{Y}}_{1}=\mathbf{H}_{1}\overline{\overline{X}}+\overline{\overline{V}}_{1},\\
\overline{\overline{Y}}_{2}=\mathbf{H}_{2}\overline{\overline{X}}+\overline{\overline{V}}_{2},\\
\nonumber
\overline{\overline{Z}}=\mathbf{H}_{3}\overline{\overline{X}}+\overline{\overline{V}}_{3}.
\end{IEEEeqnarray}
Note that $\overline{\overline{V}}_{i}$ is an $r_{i} \times n$
random matrix and $\mathbf{H}_{i}$ is an $r_{i} \times t$
deterministic matrix where $i=1,2,3$. The columns of
$\overline{\overline{V}}_{i}$ are independent Gaussian random
vectors with covariance matrices $\mathbf{N_{i}}$ for $i=1,2,3$. In
addition $\overline{\overline{V}}_{i}$ is independent of
$\overline{\overline{X}}$, $W_{1}$ and $W_{2}$. A
$((2^{nR_{1}},2^{nR_{2}}),n)$ code for the above channel consists of
a stochastic encoder
\begin{equation}
f:(\{1,2,...,2^{nR_{1}}\}\times\{1,2,...,2^{nR_{2}}\})\rightarrow
\overline{\overline{\mathcal{X}}},
\end{equation}
and two decoders,
\begin{equation}
g_{1}:\overline{\overline{\mathcal{Y}}}_{1}\rightarrow
\{1,2,...,2^{nR_{1}}\},
\end{equation}
and
\begin{equation}
g_{2}:\overline{\overline{\mathcal{Y}}}_{2}\rightarrow
\{1,2,...,2^{nR_{2}}\}.
\end{equation}
where a script letter with double overline denotes the finite
alphabet of a random vector. The average probability of error is
defined as the probability that the decoded messages are not equal
to the transmitted messages; that is,
\begin{equation}
P_{e}^{(n)}=P(g_{1}(\overline{\overline{Y}}_{1})\neq W_{1}\cup
g_{2}(\overline{\overline{Y}}_{2})\neq W_{2}).
\end{equation}

The secrecy levels of confidential messages $W_{1}$ and $W_{2}$ are
measured at the eavesdropper in terms of equivocation rates, which
are defined as follows.
\begin{defi}
The equivocation rates $R_{e1}$, $R_{e2}$ and $R_{e12}$
 for the
secure broadcast channel are:
\begin{IEEEeqnarray}{lr}
R_{e1}=\frac{1}{n}H(W_{1}|\overline{\overline{Z}}),\\
\nonumber R_{e2}=\frac{1}{n}H(W_{2}|\overline{\overline{Z}}), \\
\nonumber R_{e12}=\frac{1}{n}H(W_{1},W_{2}|\overline{\overline{Z}}).
\end{IEEEeqnarray}
\end{defi}
The perfect secrecy rates $R_{1}$ and $R_{2}$ are the amount of
information that can be sent to the legitimate receivers both
reliably \emph{and} confidentially.
\begin{defi}
A secrecy rate pair $(R_{1},R_{2})$ is said to be achievable if for
any $\epsilon>0,\epsilon_{1}>0,\epsilon_{2}>0,\epsilon_{3}>0$, there
exists a sequence of $((2^{nR_{1}},2^{nR_{2}}),n)$ codes, such that
for sufficiently large $n$,
\begin{IEEEeqnarray}{rl}
\label{l0}P_{e}^{(n)}&\leq \epsilon,\\
\label{l1}
 R_{e1}&\geq R_{1}-\epsilon_{1},\\
\label{l2} R_{e2}&\geq R_{2}-\epsilon_{2},\\
\label{l3}
 R_{e12}&\geq R_{1}+R_{2}-\epsilon_{3}.
\end{IEEEeqnarray}
\end{defi}
In the above definition, the first condition concerns the
reliability, while the other conditions guarantee perfect secrecy
for each individual message and both messages as well. The model
presented in (\ref{eq1}) is SGMBC. However, we will initially
consider two subclasses of this channel and then generalize our
results for the SGMBC.

The  first subclass that we will consider is the Secure Aligned
Degraded MIMO Broadcast Channel (SADBC). The MIMO broadcast channel
of (\ref{eq1}) is said to be aligned if the number of transmit
antennas is equal to the number of receive antennas at each of the
users and the eavesdropper ($t=r_{1}=r_{2}=r_{3}$) and the gain
matrices are all identity matrices
$(\mathbf{H}_{1}=\mathbf{H}_{2}=\mathbf{H}_{3}=\mathbf{I})$.
Furthermore, if the additive noise vectors' covariance matrices are
ordered such that $0\prec \mathbf{N_{1}}\preceq
\mathbf{N_{2}}\preceq \mathbf{N_{3}}$, then the channel is SADBC.

The second subclass we consider is a generalization of the SADBC.
The MIMO broadcast channel of (\ref{eq1}) is said to be Secure
Aligned MIMO Broadcast Channel (SAMBC) if it is aligned and not
necessarily degraded. In other words, the additive noise vector
covariance matrices are not necessarily ordered. A time sample of an
SAMBC is given by the following expressions,
\begin{IEEEeqnarray}{lr}
\mathbf{y_{1}}=\mathbf{x}+\mathbf{n_{1}},\\ \nonumber
\mathbf{y_{2}}=\mathbf{x}+\mathbf{n_{2}},\\ \nonumber
\mathbf{z}=\mathbf{x}+\mathbf{n_{3}},
\end{IEEEeqnarray}
where, $\mathbf{y_{1}}$, $\mathbf{y_{2}}$, $\mathbf{z}$,
$\mathbf{x}$ are real vectors of size $t \times 1$ and
$\mathbf{n_{1}}$, $\mathbf{n_{2}}$, and $\mathbf{n_{3}}$ are
independent and real Gaussian noise vectors such that
$\mathbf{N_{i}=E[\mathbf{n_{i}}\mathbf{n_{i}}^{T}]} \succ 0_{t
\times t}$ for $i=1,2,3$.

\section{The Capacity Region of The SADBC}
In this section, we characterize the capacity region of the SADBC.
In \cite{6}, we considered the degraded broadcast channel with
confidential messages and establish its secrecy capacity region.
\begin{thm}\label{t1}
The capacity region for transmitting independent secret messages
over the degraded broadcast channel is the convex hull of the
closure of all $(R_{1},R_{2})$ satisfying
\begin{IEEEeqnarray}{rl}\label{l7}
    R_{1}&\leq I(X;Y_{1}|U)-I(X;Z|U), \\ \label{l71}
    R_{2}&\leq I(U;Y_{2})-I(U;Z).
\end{IEEEeqnarray}
for some joint distribution $P(u)P(x|u)P(y_{1}, y_{2},z|x)$.
\end{thm}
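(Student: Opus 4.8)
The plan is to prove achievability by a \emph{secret superposition coding} scheme and to prove a matching converse using Fano's inequality together with the three secrecy constraints \eqref{l1}--\eqref{l3} and a single-letterization argument. I describe each phase and then point to the step I expect to be hardest.

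\textbf{Achievability.} Fix an input distribution $P(u)P(x|u)$ and auxiliary (``dummy'') rates $R_1',R_2'\ge 0$. I would generate $2^{n(R_2+R_2')}$ independent cloud centres $u^n$ according to $\prod_i P(u_i)$, indexed by pairs $(w_2,w_2')$, and, for each cloud centre, $2^{n(R_1+R_1')}$ independent satellite codewords $x^n$ according to $\prod_i P(x_i|u_i)$, indexed by pairs $(w_1,w_1')$. To transmit $(w_1,w_2)$ the encoder draws $(w_1',w_2')$ uniformly at random and sends the corresponding $x^n$. Because the channel is degraded, $0\prec \mathbf{N_1}\preceq \mathbf{N_2}\preceq \mathbf{N_3}$, receiver $2$ can recover $(w_2,w_2')$ whenever $R_2+R_2'<I(U;Y_2)$, while receiver $1$ can recover both layers whenever $R_1+R_1'<I(X;Y_1|U)$ (the cloud centre being decodable at receiver $1$ as well since $I(U;Y_1)\ge I(U;Y_2)$). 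Choosing $R_2'=I(U;Z)$ and $R_1'=I(X;Z|U)$ leaves the net rates $R_2=I(U;Y_2)-I(U;Z)$ and $R_1=I(X;Y_1|U)-I(X;Z|U)$, i.e.\ \eqref{l7}--\eqref{l71}; the convex hull then follows by the usual time-sharing argument. It remains to verify \eqref{l1}--\eqref{l3}: conditioned on the true messages (and on $W_2$ for \eqref{l1}, marginally for \eqref{l2}), the eavesdropper's observation $\overline{\overline{Z}}$ leaves the dummy indices essentially uniform because $R_1'$ and $R_2'$ are matched to what $\overline{\overline{Z}}$ can resolve in each layer. Expanding $H(W_1,W_2|\overline{\overline{Z}})$, $H(W_1|\overline{\overline{Z}})$, $H(W_2|\overline{\overline{Z}})$ by the chain rule and controlling the cross terms with list-size estimates for the two layers and the Markov chain $U\to X\to Y_1\to Y_2\to Z$ gives $I(W_1,W_2;\overline{\overline{Z}})\le n\epsilon_3$, $I(W_1;\overline{\overline{Z}})\le n\epsilon_1$, $I(W_2;\overline{\overline{Z}})\le n\epsilon_2$, which is equivalent to \eqref{l1}--\eqref{l3}.

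\textbf{Converse.} Suppose $(R_1,R_2)$ is achievable. Fano's inequality gives $nR_1\le I(W_1;\overline{\overline{Y}}_1)+n\delta_n$ and $nR_2\le I(W_2;\overline{\overline{Y}}_2)+n\delta_n$ with $\delta_n\to 0$, and \eqref{l1}--\eqref{l3} give $I(W_1;\overline{\overline{Z}})\le n\epsilon_1$, $I(W_2;\overline{\overline{Z}})\le n\epsilon_2$, $I(W_1,W_2;\overline{\overline{Z}})\le n\epsilon_3$. Using independence of $W_1,W_2$ together with $I(W_1;\overline{\overline{Z}}|W_2)\le I(W_1,W_2;\overline{\overline{Z}})$, one obtains $nR_1\le I(W_1;\overline{\overline{Y}}_1|W_2)-I(W_1;\overline{\overline{Z}}|W_2)+n\epsilon'$ and $nR_2\le I(W_2;\overline{\overline{Y}}_2)-I(W_2;\overline{\overline{Z}})+n\epsilon'$ with $\epsilon'\to0$. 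I would then single-letterize both expressions: introducing an auxiliary variable of the form $U_i=(W_2,Z^{i-1},Y_{2,i+1}^n)$ and invoking the Csiszar sum identity, one shows $I(W_2;\overline{\overline{Y}}_2)-I(W_2;\overline{\overline{Z}})\le \sum_i\big[I(U_i;Y_{2,i})-I(U_i;Z_i)\big]$ and, with the channel input $X_i$ in the role of $X$, $I(W_1;\overline{\overline{Y}}_1|W_2)-I(W_1;\overline{\overline{Z}}|W_2)\le \sum_i\big[I(X_i;Y_{1,i}|U_i)-I(X_i;Z_i|U_i)\big]$. Degradedness guarantees $U_i\to X_i\to Y_{1,i}\to Y_{2,i}\to Z_i$, and a standard time-sharing random variable turns the sums into the single-letter bounds \eqref{l7}--\eqref{l71}, matching the achievable region.

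\textbf{Main obstacle.} Two steps carry the weight. On the achievability side it is the equivocation analysis: the scheme must protect $W_1$, $W_2$, \emph{and} the pair $(W_1,W_2)$ at once --- none of \eqref{l1}--\eqref{l3} follows from the others --- so one has to track precisely which combinations of message and dummy indices the eavesdropper can list-decode in each superposition layer and check that the residual uncertainty never drops below the intended rate. On the converse side the delicate point is the single-letterization: one must choose an auxiliary random variable that serves both rate bounds while remaining consistent with the degradedness order, and then apply the Csiszar sum identity so that the subtracted ``$-I(\cdot;Z_i)$'' terms attach to the \emph{same} $U_i$; this bookkeeping, rather than any individual inequality, is where the real effort lies.
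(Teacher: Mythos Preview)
Your proposal is correct and follows essentially the same route the paper indicates: secret superposition coding (Cover superposition plus random binning in each layer) for achievability, and a converse that combines Fano's inequality with the secrecy constraints and then single-letterizes via the Csisz\'ar sum identity, exploiting the degraded order $X\to Y_1\to Y_2\to Z$. The paper itself only sketches this and defers the details to \cite{6,6_1}; your outline is in fact more explicit than what appears here, and the auxiliary choice $U_i=(W_2,Z^{i-1},Y_{2,i+1}^n)$ together with the identification of the equivocation bookkeeping and the shared-$U_i$ single-letterization as the load-bearing steps is accurate.
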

\begin{proof}
Our achievable coding scheme is based on Cover's superposition
scheme and random binning. We refer to this scheme as the Secret
Superposition Scheme. In this scheme, randomization in the first
layer increases the secrecy rate of the second layer. Our converse
proof is based on a combination of the converse proof of the
conventional degraded broadcast channel and Csiszar Lemma. Please
see \cite{6,6_1} for details.
\end{proof}
Note that evaluating (\ref{l7}) and (\ref{l71}) involves solving a
functional, nonconvex optimization problem. Usually nontrivial
techniques and strong inequalities are used to solve optimization
problems of this type. Indeed, for the single antenna case,
\cite{8,8_1} successfully evaluated the capacity expression of
(\ref{l7}) and (\ref{l71}). Liu et al. in \cite{9} evaluated the
capacity expression of MIMO wire-tap channel by using the channel
enhancement method. In the following section, we state and prove our
result for the capacity region of SADBC.

First, we define the achievable rate region due to Gaussian codebook
under a covariance matrix constraint $\mathbf{S}\succeq 0$. The
achievability scheme of Theorem \ref{t1} is the secret superposition
of Gaussian codes and successive decoding at the first receiver.
According to the above theorem, for any covariance matrix input
constraint $\mathbf{S}$ and two semi-definite  matrices
$\mathbf{B_{1}}\succeq 0$ and $\mathbf{B_{2}}\succeq 0$ such that
$\mathbf{B_{1}}+\mathbf{B_{2}} \preceq \mathbf{S}$, it is possible
to achieve the following rates,
\begin{IEEEeqnarray}{rl}\nonumber
    R_{1}^{G}(\mathbf{B_{1,2}},\mathbf{N_{1,2,3}})=\frac{1}{2}\left[\log\left|\mathbf{N_{1}^{-1}}(\mathbf{B_{1}}+\mathbf{N_{1}})\right|-\frac{1}{2}\log\left|\mathbf{N_{3}^{-1}}(\mathbf{B_{1}}+\mathbf{N_{3}})\right|\right]^{+},\\ \nonumber
   R_{2}^{G}(\mathbf{B_{1,2}},\mathbf{N_{1,2,3}})=\frac{1}{2}\left[\log\frac{\left|\mathbf{B_{1}}+\mathbf{B_{2}}+\mathbf{N_{2}}\right|}{\left|\mathbf{B_{1}}+\mathbf{N_{2}}\right|}-\frac{1}{2}\log\frac{\left|\mathbf{B_{1}}+\mathbf{B_{2}}+\mathbf{N_{3}}\right|}{\left|\mathbf{B_{1}}+\mathbf{N_{3}}\right|}\right]^{+}.
\end{IEEEeqnarray}
The Gaussian rate region of SADBC is defined as follows.
\begin{defi}
Let $\mathbf{S}$ be a positive semi-definite matrix. Then, the
Gaussian rate region of SADBC under a covariance matrix constraint
$\mathbf{S}$ is given by
\begin{IEEEeqnarray}{rl}\nonumber
\mathcal{R}^{G}(\mathbf{S},\mathbf{N_{1,2,3}})=   \left\{
                                                       \begin{array}{ll}
                                                          \left(R_{1}^{G}(\mathbf{B_{1,2}},\mathbf{N_{1,2,3}}),R_{2}^{G}(\mathbf{B_{1,2}},\mathbf{N_{1,2,3}})\right)|  \\ \hbox{s.t}~~\mathbf{S}-(\mathbf{B_{1}}+\mathbf{B_{2}})\succeq 0,~\mathbf{B_{k}}\succeq 0, ~ k=1,2
                                                       \end{array}
                                                     \right\}.
\end{IEEEeqnarray}
\end{defi}
We will show that $\mathcal{R}^{G}(\mathbf{S},\mathbf{N_{1,2,3}})$
is the capacity region of the SADBC. Before that, certain
preliminaries need to be addressed. We begin by characterizing the
boundary of the Gaussian rate region.
\begin{rem}
Note that in characterizing the capacity region of the conventional Gaussian MIMO broadcast channel
 Weingarten et al. \cite{7} proved that on the boundary of the above region we have $\mathbf{B_{1}}+\mathbf{B_{2}}=\mathbf{S}$
  which maximizes the rate $R_{2}$. In our argument, however, the boundary is not characterized with this equality as rate $R_{2}$ may decreases by increasing $\mathbf{B_{1}}+\mathbf{B_{2}}$.
\end{rem}
\begin{defi}
The rate vector $R^{*}=(R_{1},R_{2})$ is said to be an optimal
Gaussian rate vector under the covariance matrix $\mathbf{S}$, if
$R^{*}\in \mathcal{R}^{G}(\mathbf{S},\mathbf{N_{1,2,3}})$ and if
there is no other rate vector $R^{'*}=(R_{1}^{'},R_{2}^{'})\in
\mathcal{R}^{G}(\mathbf{S},\mathbf{N_{1,2,3}})$ such that
$R_{1}^{'}\geq R_{1}$ and $R_{2}^{'}\geq R_{2}$ where at least one
of the inequalities is strict. The set of positive semi-definite
matrices $(\mathbf{B_{1}^{*}},\mathbf{B_{2}^{*}})$ such that
$\mathbf{B_{1}^{*}}+\mathbf{B_{2}^{*}}\preceq \mathbf{S}$ is said to
be realizing matrices of an optimal Gaussian rate vector if the rate
vector
$\left(R_{1}^{G}(\mathbf{B_{1,2}^{*}},\mathbf{N_{1,2,3}}),R_{2}^{G}(\mathbf{B_{1,2}^{*}},\mathbf{N_{1,2,3}})\right)$
is an optimal Gaussian rate vector.
\end{defi}
In general, there is no known closed form solution for the realizing
matrices of an optimal Gaussian rate vector. Note that finding an
optimal Gaussian rate vector once again, involves solving a
nonconvex optimization problem. The realizing matrices of an optimal
Gaussian rate vector, $\mathbf{B_{1}^{*}},\mathbf{B_{2}^{*}}$ are
the solution of the following optimization problem:
\begin{IEEEeqnarray}{rl}\label{op}
&\max_{(\mathbf{B_{1}},\mathbf{B_{2}})}
R_{1}^{G}(\mathbf{B_{1,2}},\mathbf{N_{1,2,3}})+\mu
R_{2}^{G}(\mathbf{B_{1,2}},\mathbf{N_{1,2,3}})\\ \nonumber
&\hbox{s.t}~~ \mathbf{B_{1}}\succeq 0,~~~\mathbf{B_{2}}\succeq
0,~~~\mathbf{B_{1}}+\mathbf{B_{2}}\preceq\mathbf{S},
\end{IEEEeqnarray}
where $\mu \geq 1$. Next, we define a class of enhanced channel. The
enhanced channel has some fundamental properties which help us to
characterize the secrecy capacity region. We will discuss its
properties later on.
\begin{defi}
A SADBC with noise covariance matrices
$(\mathbf{N_{1}^{'}},\mathbf{N_{2}^{'}},\mathbf{N_{3}^{'}})$ is an
enhanced version of another SADBC with noise covariance matrices
$(\mathbf{N_{1}},\mathbf{N_{2}},\mathbf{N_{3}})$ if
\begin{equation}
\mathbf{N_{1}^{'}}\preceq \mathbf{N_{1}},~~\mathbf{N_{2}^{'}}\preceq
\mathbf{N_{2}},~~\mathbf{N_{3}^{'}}=
\mathbf{N_{3}},~~\mathbf{N_{1}^{'}}\preceq \mathbf{N_{2}^{'}}.
\end{equation}
\end{defi}
Obviously, the capacity region of the enhanced version contains the
capacity region of the original channel. Note that in characterizing
the capacity region of the conventional Gaussian MIMO broadcast
channel, all channels must be enhanced by reducing the noise
covariance matrices. In our scheme, however, we only enhance the
channels for the legitimate receivers and the channel of the
eavesdropper remains unchanged. This is due to the fact that the
capacity region of the enhanced channel must contain the original
capacity region. Reducing the noise covariance matrix of the
eavesdropper's channel, however, may reduce the secrecy capacity
region. The following theorem connects the definitions of the
optimal Gaussian rate vector and the enhanced channel.
\begin{thm}\label{t2}
Consider a SADBC with positive definite noise covariance matrices
$(\mathbf{N_{1}},\mathbf{N_{2}},\mathbf{N_{3}})$. Let
$\mathbf{B_{1}^{*}}$ and $\mathbf{B_{2}^{*}}$ be realizing matrices
of an optimal Gaussian rate vector under a transmit covariance
matrix constraint $\mathbf{S}\succ 0$. There then exists an enhanced
SADBC with noise covariance matrices
$(\mathbf{N_{1}^{'}},\mathbf{N_{2}^{'}},\mathbf{N_{3}^{'}})$ that
the following properties hold.
\begin{enumerate}
  \item Enhancement:\\ \nonumber
   $\mathbf{N_{1}^{'}}\preceq \mathbf{N_{1}},~~~\mathbf{N_{2}^{'}}\preceq \mathbf{N_{2}},~~~\mathbf{N_{3}^{'}}= \mathbf{N_{3}},~~~\mathbf{N_{1}^{'}}\preceq \mathbf{N_{2}^{'}}$,
 \item Proportionality:\\ \nonumber
  There exists an $\alpha \geq 0$ and a matrix $\mathbf{A}$ such that \\ \nonumber
  $(\mathbf{I}-\mathbf{A})(\mathbf{B_{1}^{*}}+\mathbf{N_{1}^{'}})=\alpha\mathbf{A}(\mathbf{B_{1}^{*}}+\mathbf{N_{3}^{'}})$,
  \item Rate and optimality preservation: \\ \nonumber
  $R_{k}^{G}(\mathbf{B_{1,2}^{*}},\mathbf{N_{1,2,3}})=R_{k}^{G}(\mathbf{B_{1,2}^{*}},\mathbf{N_{1,2,3}^{'}}) ~~~ \forall k=1,2$,  \nonumber furthermore, $\mathbf{B_{1}^{*}}$ and $\mathbf{B_{2}^{*}}$ are realizing matrices of an optimal Gaussian rate vector in the enhanced channel.
\end{enumerate}
\end{thm}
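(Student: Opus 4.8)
\emph{Proof plan.} The plan is to adapt the channel-enhancement technique of Weingarten et al.\ \cite{7} and its wire-tap refinement by Liu et al.\ \cite{9} to the present setting; the one genuinely new feature is that the eavesdropper noise $\mathbf{N_3}$ cannot be altered, so in place of the degradedness ordering the enhanced channels inherit in the non-secret problem, the best relation one can force between the enhanced first receiver and the eavesdropper is the proportionality of the second property. I would first record the first-order optimality conditions for the nonconvex program (\ref{op}). Since $\mathbf{S}\succ 0$, the point $\mathbf{B_1}=\mathbf{B_2}=\epsilon\mathbf{I}$ is strictly feasible for small $\epsilon>0$, so a constraint qualification holds and the KKT conditions are necessary at the maximizer $(\mathbf{B_1^*},\mathbf{B_2^*})$; on the face of the boundary of interest both bracketed rate expressions are positive, so the $[\cdot]^{+}$ operators drop out and $R_1^G,R_2^G$ are smooth. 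Introducing positive semidefinite multipliers $\mathbf{M_1},\mathbf{M_2}$ for $\mathbf{B_1}\succeq 0$, $\mathbf{B_2}\succeq 0$ and $\mathbf{M_S}$ for $\mathbf{S}-\mathbf{B_1}-\mathbf{B_2}\succeq 0$, and using $\nabla_{\mathbf{B}}\log|\mathbf{B}+\mathbf{N}|=(\mathbf{B}+\mathbf{N})^{-1}$, stationarity in $\mathbf{B_2}$ and in $\mathbf{B_1}$ yields two matrix identities relating the matrices $(\mathbf{B_1^*}+\mathbf{N_i})^{-1}$ and $(\mathbf{B_1^*}+\mathbf{B_2^*}+\mathbf{N_j})^{-1}$ to $\mathbf{M_1},\mathbf{M_2},\mathbf{M_S}$, together with the complementary-slackness relations $\mathbf{M_1}\mathbf{B_1^*}=\mathbf{M_2}\mathbf{B_2^*}=\mathbf{M_S}(\mathbf{S}-\mathbf{B_1^*}-\mathbf{B_2^*})=0$. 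Subtracting the two stationarity identities cancels $\mathbf{M_S}$ and leaves the auxiliary relation
$$ \mathbf{M_1}-\mathbf{M_2}=\tfrac{\mu}{2}(\mathbf{B_1^*}+\mathbf{N_2})^{-1}-\tfrac{1}{2}(\mathbf{B_1^*}+\mathbf{N_1})^{-1}-\tfrac{\mu-1}{2}(\mathbf{B_1^*}+\mathbf{N_3})^{-1}, $$
which will drive the enhancement ordering.

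Next I would \emph{define} the enhanced channel: keep $\mathbf{N_3'}=\mathbf{N_3}$ and let $\mathbf{N_1'},\mathbf{N_2'}$ be the symmetric matrices determined by
$$ (\mathbf{B_1^*}+\mathbf{N_1'})^{-1}=(\mathbf{B_1^*}+\mathbf{N_1})^{-1}+2\mathbf{M_1},\qquad (\mathbf{B_1^*}+\mathbf{B_2^*}+\mathbf{N_2'})^{-1}=(\mathbf{B_1^*}+\mathbf{B_2^*}+\mathbf{N_2})^{-1}+\tfrac{2}{\mu}\mathbf{M_2}. $$
From these and complementary slackness one gets the ``transparency'' identities $(\mathbf{B_1^*}+\mathbf{N_1'})^{-1}\mathbf{B_1^*}=(\mathbf{B_1^*}+\mathbf{N_1})^{-1}\mathbf{B_1^*}$ and, applying the Sherman--Morrison--Woodbury formula to $\mathbf{B_1^*}+\mathbf{N_2'}=(\mathbf{B_1^*}+\mathbf{B_2^*}+\mathbf{N_2'})-\mathbf{B_2^*}$ with $\mathbf{M_2}\mathbf{B_2^*}=0$, also $(\mathbf{B_1^*}+\mathbf{N_2'})^{-1}=(\mathbf{B_1^*}+\mathbf{N_2})^{-1}+\tfrac{2}{\mu}\mathbf{M_2}$. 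The first property then follows: adding a positive semidefinite matrix to a positive-definite inverse can only shrink it, so $\mathbf{N_1'}\preceq\mathbf{N_1}$ and $\mathbf{N_2'}\preceq\mathbf{N_2}$; positive-definiteness of $\mathbf{N_1'},\mathbf{N_2'}$ is obtained by splitting the space into the range of $\mathbf{B_1^*}$ (resp.\ $\mathbf{B_2^*}$), on which the correction vanishes, and its complement; and $\mathbf{N_1'}\preceq\mathbf{N_2'}$, i.e.\ $(\mathbf{B_1^*}+\mathbf{N_1'})^{-1}\succeq(\mathbf{B_1^*}+\mathbf{N_2'})^{-1}$, reduces --- after inserting the two definitions and the auxiliary relation and using $\mathbf{M_2}\succeq 0$ and $\mu\ge1$ --- to $(\mu-1)\left[(\mathbf{B_1^*}+\mathbf{N_2})^{-1}-(\mathbf{B_1^*}+\mathbf{N_3})^{-1}\right]\succeq 0$, which holds because $\mathbf{N_2}\preceq\mathbf{N_3}$.

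For the third property, the transparency identities immediately give the determinant equalities $\left|(\mathbf{N_1'})^{-1}(\mathbf{B_1^*}+\mathbf{N_1'})\right|=\left|(\mathbf{N_1})^{-1}(\mathbf{B_1^*}+\mathbf{N_1})\right|$ and the corresponding equality of the two determinant ratios in $R_2^G$; since the eavesdropper terms are untouched, this is exactly $R_k^G(\mathbf{B_{1,2}^*},\mathbf{N_{1,2,3}})=R_k^G(\mathbf{B_{1,2}^*},\mathbf{N_{1,2,3}'})$ for $k=1,2$. Substituting the primed identities back into the two stationarity equations shows that they again hold at $(\mathbf{B_1^*},\mathbf{B_2^*})$ in the enhanced channel, now with $\mathbf{M_1},\mathbf{M_2}$ replaced by $0$ and the same $\mathbf{M_S}$; that $(\mathbf{B_1^*},\mathbf{B_2^*})$ is therefore again a realizing pair of an optimal Gaussian rate vector follows, as in \cite{7}, from the preserved optimality conditions once the degradedness of the enhanced channel is invoked (using also that enhancement increases $R_1^G$ and $R_2^G$ pointwise). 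Finally, the proportionality property is obtained by combining the two enhanced stationarity equations with the identity for $(\mathbf{B_1^*}+\mathbf{N_2'})^{-1}$ to produce a linear relation among $(\mathbf{B_1^*}+\mathbf{N_1'})^{-1}$, $(\mathbf{B_1^*}+\mathbf{N_2'})^{-1}$ and $(\mathbf{B_1^*}+\mathbf{N_3})^{-1}$, from which one reads off an admissible $\alpha\ge0$ and the matrix $\mathbf{A}=(\mathbf{B_1^*}+\mathbf{N_1'})\left[(\mathbf{B_1^*}+\mathbf{N_1'})+\alpha(\mathbf{B_1^*}+\mathbf{N_3})\right]^{-1}$ realizing $(\mathbf{I}-\mathbf{A})(\mathbf{B_1^*}+\mathbf{N_1'})=\alpha\mathbf{A}(\mathbf{B_1^*}+\mathbf{N_3})$.

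I expect the main obstacle to be, not any single display, but the requirement that one enhanced channel simultaneously meet enhancement, proportionality to the frozen eavesdropper channel, and preservation of both rates and of optimality: the construction has only just enough freedom. Two points are genuinely delicate. First, the positive-definiteness of $\mathbf{N_1'},\mathbf{N_2'}$ when $\mathbf{B_1^*},\mathbf{B_2^*}$ are rank-deficient --- the defining identities control $(\mathbf{B_1^*}+\mathbf{N_1'})^{-1}$, not $(\mathbf{N_1'})^{-1}$, so one must track how $\mathbf{M_1}$, whose range avoids that of $\mathbf{B_1^*}$ by complementary slackness, sits relative to $\mathbf{N_1}$, possibly via a perturbation-and-limit argument. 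Second, one must verify that the construction and all three properties survive uniformly in the degenerate regimes flagged in the Remark --- in particular when $\mathbf{B_1^*}+\mathbf{B_2^*}\preceq\mathbf{S}$ is not tight, so that $\mathbf{M_S}$ is singular and the familiar ``$\mathbf{B_1}+\mathbf{B_2}=\mathbf{S}$ on the boundary'' shortcut is unavailable --- as well as the boundary cases $\mu=1$ and an active $[\cdot]^{+}$. Justifying KKT for the nonconvex program is, by contrast, routine, handled by the strictly feasible interior point above.
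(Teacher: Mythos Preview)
Your plan is essentially the paper's: write the KKT conditions for (\ref{op}), define the enhanced noises by absorbing the slack multipliers into $(\mathbf{B_1^*}+\mathbf{N_1})^{-1}$ and $(\mathbf{B_1^*}+\mathbf{N_2})^{-1}$ while keeping $\mathbf{N_3'}=\mathbf{N_3}$, and use complementary slackness to propagate the identities and verify the three properties. Your anchoring of $\mathbf{N_2'}$ at the level of $(\mathbf{B_1^*}+\mathbf{B_2^*}+\mathbf{N_2'})^{-1}$, versus the paper's anchoring at $(\mathbf{B_1^*}+\mathbf{N_2'})^{-1}$, is an immaterial packaging difference---the two are equivalent by the same $\mathbf{B_2^*}\mathbf{M_2}=0$ identity you invoke via Sherman--Morrison--Woodbury. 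You are in fact more careful than the paper about $\mathbf{N_1'}\preceq\mathbf{N_2'}$ (which the paper asserts but does not spell out) and about the degenerate boundary cases.

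One point to tighten is proportionality. As literally stated in the theorem, the relation $(\mathbf{I}-\mathbf{A})(\mathbf{B_1^*}+\mathbf{N_1'})=\alpha\mathbf{A}(\mathbf{B_1^*}+\mathbf{N_3'})$ is trivially satisfiable for any $\alpha\ge0$ by your very formula for $\mathbf{A}$, so your last paragraph proves nothing useful. What the paper actually establishes---and what is consumed downstream in the converse of Theorem~\ref{t3} via Costa's entropy-power inequality---is that the \emph{specific} matrix $\mathbf{A}$ determined by $\mathbf{B_1^*}+\mathbf{N_2'}=(\mathbf{I}-\mathbf{A})(\mathbf{B_1^*}+\mathbf{N_1'})+\mathbf{A}(\mathbf{B_1^*}+\mathbf{N_3'})$ also satisfies the proportionality relation, with the concrete value $\alpha=1/(\mu-1)$ forced by the enhanced first KKT identity $(\mathbf{B_1^*}+\mathbf{N_1'})^{-1}+(\mu-1)(\mathbf{B_1^*}+\mathbf{N_3'})^{-1}=\mu(\mathbf{B_1^*}+\mathbf{N_2'})^{-1}$. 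Your sketch does produce this identity, but you then define $\mathbf{A}$ without tying it back to $\mathbf{N_2'}$; you should make that link explicit, since it is the only substantive content of property~2.
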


\begin{proof}
The realizing matrices $\mathbf{B_{1}^{*}}$ and $\mathbf{B_{2}^{*}}$
are the solution of the optimization problem of (\ref{op}). Using
Lagrange Multiplier method, this constraint optimization problem is
equivalent to the following unconditional optimization problem:
\begin{IEEEeqnarray}{rl}\nonumber
\max_{(\mathbf{B_{1}},\mathbf{B_{2}})}
&R_{1}^{G}(\mathbf{B_{1,2}},\mathbf{N_{1,2,3}})+\mu
R_{2}^{G}(\mathbf{B_{1,2}},\mathbf{N_{1,2,3}})+Tr\{\mathbf{B_{1}}\mathbf{O_{1}}\}\\
\nonumber
&+Tr\{\mathbf{B_{2}}\mathbf{O_{2}}\}+Tr\{\mathbf{(S-B_{1}-B_{2})}\mathbf{O_{3}}\},
\end{IEEEeqnarray}

where $\mathbf{O_{1}}$, $\mathbf{O_{2}}$, and $\mathbf{O_{3}}$ are
positive semi-definite $t \times t$ matrices such that
$Tr\{\mathbf{B_{1}^{*}}\mathbf{O_{1}}\}=0$,
$Tr\{\mathbf{B_{2}^{*}}\mathbf{O_{2}}\}=0$, and
$Tr\{\mathbf{(S-B_{1}^{*}-B_{2}^{*})}\mathbf{O_{3}}\}=0$. As all
$\mathbf{B_{k}^{*}}, ~k=1,2$, $\mathbf{O_{i}},~ i=1,2,3$, and
$\mathbf{S-B_{1}^{*}-B_{2}^{*}}$ are positive semi-definite
matrices, then we must have
$\mathbf{B_{k}^{*}}\mathbf{O_{k}}=0,~~k=1,2$ and
$(\mathbf{S-B_{1}^{*}-B_{2}^{*}})\mathbf{O_{3}}=0$. According to the
necessary KKT conditions, and after some manipulations we have:
\begin{IEEEeqnarray}{rl}\label{eq3}
(\mathbf{B_{1}^{*}}+\mathbf{N_{1}})^{-1}+(\mu-1)(\mathbf{B_{1}^{*}}+\mathbf{N_{3}})^{-1}+\mathbf{O_{1}}=
\mu(\mathbf{B_{1}^{*}}+\mathbf{N_{2}})^{-1}+\mathbf{O_{2}},\\\label{eq3_1}
\mu(\mathbf{B_{1}^{*}}+\mathbf{B_{2}^{*}}+\mathbf{N_{2}})^{-1}+\mathbf{O_{2}}=
\mu(\mathbf{B_{1}^{*}}+\mathbf{B_{2}^{*}}+\mathbf{N_{3}})^{-1}+\mathbf{O_{3}}.
\end{IEEEeqnarray}
We choose the noise covariance matrices of the enhanced SADBC as the
following:
\begin{IEEEeqnarray}{rl}\label{en}
\mathbf{N_{1}^{'}}&=\left(\mathbf{N_{1}}^{-1}+\mathbf{O_{1}}\right)^{-1},\\
\nonumber
\mathbf{N_{2}^{'}}&=\left(\left(\mathbf{B_{1}^{*}}+\mathbf{N_{2}}\right)^{-1}+\frac{1}{\mu}\mathbf{O_{2}}\right)^{-1}-\mathbf{B_{1}^{*}},\\
\nonumber \mathbf{N_{3}^{'}}&=\mathbf{N_{3}}.
\end{IEEEeqnarray}
As $\mathbf{O_{1}}\succeq 0$ and $\mathbf{O_{2}}\succeq 0$, then the
above choice has the enhancement property. Note that 
\begin{IEEEeqnarray}{rl}\label{eq8}
\left(\left(\mathbf{B_{1}^{*}}+\mathbf{N_{1}}\right)^{-1}+\mathbf{O_{1}}\right)^{-1}&=
\left(\left(\mathbf{B_{1}^{*}}+\mathbf{N_{1}}\right)^{-1}\left(\mathbf{I}+\left(\mathbf{B_{1}^{*}}+\mathbf{N_{1}}\right)\mathbf{O_{1}}\right)\right)^{-1}\\
\nonumber&\stackrel{(a)}{=}\left(\mathbf{I}+\mathbf{N_{1}}\mathbf{O_{1}}\right)^{-1}\left(\mathbf{B_{1}^{*}}+\mathbf{N_{1}}\right)-\mathbf{B_{1}^{*}}+\mathbf{B_{1}^{*}}\\
\nonumber&=\left(\mathbf{I}+\mathbf{N_{1}}\mathbf{O_{1}}\right)^{-1}\left(\left(\mathbf{B_{1}^{*}}+\mathbf{N_{1}}\right)-\left(\mathbf{I}+\mathbf{N_{1}}\mathbf{O_{1}}\right)\mathbf{B_{1}^{*}}\right)+\mathbf{B_{1}^{*}}\\
\nonumber&\stackrel{(b)}{=}\left(\mathbf{I}+\mathbf{N_{1}}\mathbf{O_{1}}\right)^{-1}\mathbf{N_{1}}+\mathbf{B_{1}^{*}}\\
\nonumber&=\left(\mathbf{N_{1}}\left(\mathbf{N_{1}^{-1}}+\mathbf{O_{1}}\right)\right)^{-1}\mathbf{N_{1}}+\mathbf{B_{1}^{*}}\\
\nonumber&=\left(\mathbf{N_{1}^{-1}}+\mathbf{O_{1}}\right)^{-1}+\mathbf{B_{1}^{*}}\\
\nonumber &=\mathbf{B_{1}^{*}}+\mathbf{N_{1}^{'}},
\end{IEEEeqnarray}
where $(a)$ and $(b)$ follows from the fact that
$\mathbf{B_{1}^{*}}\mathbf{O_{1}}=0$. Therefore, according to
(\ref{eq3}) the following property holds for the enhanced channel.
\begin{IEEEeqnarray}{rl}\label{eq4}\nonumber
     (\mathbf{B_{1}^{*}}+\mathbf{N_{1}^{'}})^{-1}+(\mu-1)(\mathbf{B_{1}^{*}}+\mathbf{N_{3}^{'}})^{-1}= \mu(\mathbf{B_{1}^{*}}+\mathbf{N_{2}^{'}})^{-1}.
\end{IEEEeqnarray}
Since $\mathbf{N_{1}^{'}}\preceq \mathbf{N_{2}^{'}}\preceq
\mathbf{N_{3}^{'}}$ then, there exists a matrix $\mathbf{A}$ such
that
$\mathbf{N_{2}^{'}}=(\mathbf{I}-\mathbf{A})\mathbf{N_{1}^{'}}+\mathbf{A}\mathbf{N_{3}^{'}}$
where
$\mathbf{A}=(\mathbf{N_{2}^{'}}-\mathbf{N_{1}^{'}})(\mathbf{N_{3}^{'}}-\mathbf{N_{1}^{'}})^{-1}$.
Therefore, the above equation can be written as.
\begin{IEEEeqnarray}{rl}\nonumber
     &(\mathbf{B_{1}^{*}}+\mathbf{N_{1}^{'}})^{-1}+(\mu-1)(\mathbf{B_{1}^{*}}+\mathbf{N_{3}^{'}})^{-1}=\\ \nonumber&
     \mu\left[(\mathbf{I}-\mathbf{A})(\mathbf{B_{1}^{*}}+\mathbf{N_{1}^{'}})+\mathbf{A}(\mathbf{B_{1}^{*}}+\mathbf{N_{3}^{'}})\right]^{-1}.
\end{IEEEeqnarray}
Let
$(\mathbf{I}-\mathbf{A})(\mathbf{B_{1}^{*}}+\mathbf{N_{1}^{'}})=\alpha\mathbf{A}(\mathbf{B_{1}^{*}}+\mathbf{N_{3}^{'}})$
then after some manipulations, the above equation becomes
\begin{IEEEeqnarray}{rl}
     \frac{1}{\alpha}\mathbf{I}+(\mu-1-\frac{1}{\alpha})\mathbf{A}=\frac{\mu}{\alpha+1}\mathbf{I}.
\end{IEEEeqnarray}
The above equation is satisfied by $\alpha=\frac{1}{\mu-1}$ which
completes the proportionality property. We can now prove the rate
conservation property. The expression
$\frac{\left|\mathbf{B_{1}^{*}}+\mathbf{N_{1}^{'}}\right|}{\left|\mathbf{N_{1}^{'}}\right|}$
can be written as follow.
\begin{IEEEeqnarray}{rl}\label{eq5}
\frac{\left|\mathbf{B_{1}^{*}}+\mathbf{N_{1}^{'}}\right|}{\left|\mathbf{N_{1}^{'}}\right|}&=\frac{\left|\mathbf{I}\right|}{\left|\mathbf{N_{1}^{'}}\left(\mathbf{B_{1}^{*}}+\mathbf{N_{1}^{'}}\right)^{-1}\right|}\\
\nonumber &=\frac{\left|\mathbf{I}\right|}{\left|\left(\mathbf{B_{1}^{*}}+\mathbf{N_{1}^{'}}-\mathbf{B_{1}^{*}}\right)\left(\mathbf{B_{1}^{*}}+\mathbf{N_{1}^{'}}\right)^{-1}\right|}\\
\nonumber &=\frac{\left|\mathbf{I}\right|}{\left|\mathbf{I}-\mathbf{B_{1}^{*}}\left(\mathbf{B_{1}^{*}}+\mathbf{N_{1}^{'}}\right)^{-1}\right|}\\
\nonumber &=\frac{\left|\mathbf{I}\right|}{\left|\mathbf{I}-\mathbf{B_{1}^{*}}\left((\mathbf{B_{1}^{*}}+\mathbf{N_{1}})^{-1}+\mathbf{O_{1}}\right)\right|}\\
\nonumber &\stackrel{(a)}{=}\frac{\left|\mathbf{I}\right|}{\left|\mathbf{I}-\mathbf{B_{1}^{*}}\left(\mathbf{B_{1}^{*}}+\mathbf{N_{1}}\right)^{-1}\right|}\\
\nonumber
&=\frac{|\mathbf{B_{1}^{*}}+\mathbf{N_{1}}|}{|\mathbf{N_{1}}|},
\end{IEEEeqnarray}
where $(a)$ once again follows from the fact that
$\mathbf{B_{1}^{*}}\mathbf{O_{1}}=0$. To complete the proof of rate
conservation, consider the following equalities.
\begin{IEEEeqnarray}{rl}\label{eq6}
\frac{\left|\mathbf{B_{1}^{*}}+\mathbf{B_{2}^{*}}+\mathbf{N_{2}^{'}}\right|}{\left|\mathbf{B_{1}^{*}}+\mathbf{N_{2}^{'}}\right|}&=\frac{\left|\mathbf{B_{2}^{*}}\left(\mathbf{B_{1}^{*}}+\mathbf{N_{2}^{'}}\right)^{-1}+\mathbf{I}\right|}{\left|\mathbf{I}\right|}\\
\nonumber
&=\frac{\left|\mathbf{B_{2}^{*}}\left(\left(\mathbf{B_{1}^{*}}+\mathbf{N_{2}}\right)^{-1}+\frac{1}{\mu}\mathbf{O_{2}}\right)+\mathbf{I}\right|}{\left|\mathbf{I}\right|}\\
\nonumber
&\stackrel{(a)}{=}\frac{\left|\mathbf{B_{1}^{*}}+\mathbf{B_{2}^{*}}+\mathbf{N_{2}}\right|}{\left|\mathbf{B_{1}^{*}}+\mathbf{N_{2}}\right|},
\end{IEEEeqnarray}
where\ $(a)$ follows from the fact
$\mathbf{B_{2}^{*}}\mathbf{O_{2}}=0$. Therefore, according to
(\ref{eq5}), (\ref{eq6}), and the fact that
$\mathbf{N_{3}^{'}}=\mathbf{N_{3}}$, the rate preservation property
holds for the enhanced channel. To prove the optimality
preservation, we need to show that
$(\mathbf{B_{1}^{*}},\mathbf{B_{2}^{*}})$ are also realizing
matrices of an optimal Gaussian rate vector in the enhanced channel.
For that purpose, we show that the necessary KKT conditions for the
enhanced channel coincides with the KKT conditions of the original
channel. The expression
$\mu(\mathbf{B_{1}^{*}}+\mathbf{B_{2}^{*}}+\mathbf{N_{2}^{'}})^{-1}$
can be written as follows
\begin{IEEEeqnarray}{rl}
\mu\left(\mathbf{B_{1}^{*}}+\mathbf{B_{2}^{*}}+\mathbf{N_{2}^{'}}\right)^{-1}&\stackrel{(a)}{=}\mu\left(\mathbf{B_{1}^{*}}+\mathbf{B_{2}^{*}}+\left(\mathbf{N_{2}}^{-1}+\frac{1}{\mu}\mathbf{O_{2}}\right)^{-1}\right)^{-1}\\
\nonumber
&=\mu\left(\mathbf{B_{1}^{*}}+\mathbf{B_{2}^{*}}\left(\mathbf{I}+\mathbf{B_{2}^{*}}^{-1}\left(\mathbf{N_{2}}^{-1}+\frac{1}{\mu}\mathbf{O_{2}}\right)^{-1}\right)\right)^{-1}\\
\nonumber &=\mu\left(\mathbf{B_{1}^{*}}+\mathbf{B_{2}^{*}}\left(\mathbf{I}+\left(\left(\mathbf{N_{2}}^{-1}+\frac{1}{\mu}\mathbf{O_{2}}\right)\mathbf{B_{2}^{*}}\right)^{-1}\right)\right)^{-1}\\
\nonumber &\stackrel{(b)}{=}\mu\left(\mathbf{B_{1}^{*}}+\mathbf{B_{2}^{*}}\left(\mathbf{I}+\left(\mathbf{N_{2}}^{-1}\mathbf{B_{2}^{*}}\right)^{-1}\right)\right)^{-1}\\
\nonumber &=\mu\left(\mathbf{B_{1}^{*}}+\mathbf{B_{2}^{*}}\left(\mathbf{I}+\mathbf{B_{2}^{*}}^{-1}\mathbf{N_{2}}\right)\right)^{-1}\\
\nonumber
&=\mu\left(\mathbf{B_{1}^{*}}+\mathbf{B_{2}^{*}}+\mathbf{N_{2}}\right)^{-1}
\end{IEEEeqnarray}
where $(a)$ follows from the definition of $\mathbf{N_{2}^{'}}$ and
$(b)$ follows from the fact that
$\mathbf{B_{2}^{*}}\mathbf{O_{2}}=0$. Therefore, according to
(\ref{eq8}), and the above equation, the KKT conditions of
(\ref{eq3}) and (\ref{eq3_1}) for the original channel can be
written as follows for the enhanced channel.
\begin{IEEEeqnarray}{rl}\label{kkt}
(\mathbf{B_{1}^{*}}+\mathbf{N_{1}^{'}})^{-1}+(\mu-1)(\mathbf{B_{1}^{*}}+\mathbf{N_{3}^{'}})^{-1}=
\mu(\mathbf{B_{1}^{*}}+\mathbf{N_{2}^{'}})^{-1},\\\label{kkt2}
\mu(\mathbf{B_{1}^{*}}+\mathbf{B_{2}^{*}}+\mathbf{N_{2}^{'}})^{-1}=
\mu(\mathbf{B_{1}^{*}}+\mathbf{B_{2}^{*}}+\mathbf{N_{3}^{'}})^{-1}+\mathbf{O_{3}}-\mathbf{O_{2}}.
\end{IEEEeqnarray}
where $\mathbf{O_{3}}-\mathbf{O_{2}}\succeq 0$. Therefore,
$R_{1}^{G}(\mathbf{B_{1,2}},\mathbf{N_{1,2,3}^{'}})+\mu
R_{2}^{G}(\mathbf{B_{1,2}},\mathbf{N_{1,2,3}^{'}})$ is maximized
when $\mathbf{B_{k}}=\mathbf{B_{k}^{*}}$ for $k=1,2$.
\end{proof}
We can now use Theorem \ref{t2} to prove that
$\mathcal{R}^{G}(\mathbf{S},\mathbf{N_{1,2,3}})$ is the capacity
region of the SADBC. We follow Bergman's approach \cite{10} to prove
a contradiction. Note that since the original channel is not
proportional, we cannot apply  Bergman's proof on the original
channel directly. Here we apply his proof on the enhanced channel
instead.
\begin{thm}\label{t3}
Consider a SADBC with positive definite noise covariance matrices
$(\mathbf{N_{1}},\mathbf{N_{2}},\mathbf{N_{3}})$. Let
$\mathcal{C}(\mathbf{S},\mathbf{N_{1,2,3}})$ denote the capacity
region of the SADBC under a covariance matrix constraint
$\mathbf{S}\succ 0$ .Then,
$\mathcal{C}(\mathbf{S},\mathbf{N_{1,2,3}})=\mathcal{R}^{G}(\mathbf{S},\mathbf{N_{1,2,3}})$.
\end{thm}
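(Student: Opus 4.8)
Achievability is immediate: fix $\mathbf{B}_1,\mathbf{B}_2\succeq 0$ with $\mathbf{B}_1+\mathbf{B}_2\preceq\mathbf{S}$ and apply Theorem \ref{t1} with the jointly Gaussian choice $\mathbf{x}=U+V$, where $U\sim\mathcal{N}(0,\mathbf{B}_2)$ carries $W_2$ (the cloud centre) and $V\sim\mathcal{N}(0,\mathbf{B}_1)$ is independent and carries $W_1$ (the satellite); evaluating the four mutual informations in (\ref{l7})--(\ref{l71}) for the aligned noises reproduces exactly $R_1^{G}(\mathbf{B}_{1,2},\mathbf{N}_{1,2,3})$ and $R_2^{G}(\mathbf{B}_{1,2},\mathbf{N}_{1,2,3})$, so $\mathcal{R}^{G}(\mathbf{S},\mathbf{N}_{1,2,3})\subseteq\mathcal{C}(\mathbf{S},\mathbf{N}_{1,2,3})$. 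The real work is the converse $\mathcal{C}(\mathbf{S},\mathbf{N}_{1,2,3})\subseteq\mathcal{R}^{G}(\mathbf{S},\mathbf{N}_{1,2,3})$, and the plan is to obtain it by a Bergman-type entropy-power argument which, because the given SADBC need not be proportional, is carried out on the \emph{enhanced} channel of Theorem \ref{t2} rather than on the original one.

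First I would reduce the set inclusion to a family of scalar bounds. It suffices to show that for every $\mu\ge 1$ the supporting line $\{R_1+\mu R_2=c_{\mu}\}$ of $\mathcal{R}^{G}(\mathbf{S},\mathbf{N}_{1,2,3})$ also bounds $\mathcal{C}(\mathbf{S},\mathbf{N}_{1,2,3})$ from outside, together with the two extreme points of $\mathcal{R}^{G}$ (the $R_1$-maximizing and the $R_2$-maximizing corners). Granting that $\mathcal{R}^{G}$ is closed and convex (a point I return to below), and since $\mathcal{C}$ is closed and convex with $\mathcal{R}^{G}\subseteq\mathcal{C}$, matching all these supporting lines and corners forces equality. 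So fix $\mu\ge 1$, let $(\mathbf{B}_1^{*},\mathbf{B}_2^{*})$ be realizing matrices of the optimal Gaussian rate vector in direction $\mu$, i.e.\ a solution of (\ref{op}), and let $(\mathbf{N}_1',\mathbf{N}_2',\mathbf{N}_3')$ be the enhanced SADBC of Theorem \ref{t2}. Since the enhanced channel is a genuine degradation of the original one \emph{for the legitimate users only} --- the eavesdropper has $\mathbf{N}_3'=\mathbf{N}_3$ --- its secrecy capacity region contains $\mathcal{C}(\mathbf{S},\mathbf{N}_{1,2,3})$; and by the rate-preservation property $R_k^{G}(\mathbf{B}_{1,2}^{*},\mathbf{N}_{1,2,3})=R_k^{G}(\mathbf{B}_{1,2}^{*},\mathbf{N}_{1,2,3}')$ for $k=1,2$. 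Hence it is enough to prove that on the enhanced channel every achievable pair obeys $R_1+\mu R_2\le R_1^{G}(\mathbf{B}_{1,2}^{*},\mathbf{N}_{1,2,3}')+\mu R_2^{G}(\mathbf{B}_{1,2}^{*},\mathbf{N}_{1,2,3}')$.

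To prove this last bound I would run the single-letter converse of Theorem \ref{t1} on the enhanced channel. Fano's inequality, the equivocation constraints (\ref{l1})--(\ref{l3}) and the usual Csiszar-type manipulations give an auxiliary $U$ with $U\to\mathbf{x}\to(\mathbf{y}_1',\mathbf{y}_2',\mathbf{z})$ and input covariance $\preceq\mathbf{S}$ such that $R_1\le I(\mathbf{x};\mathbf{y}_1'|U)-I(\mathbf{x};\mathbf{z}|U)+\epsilon_n$ and $R_2\le I(U;\mathbf{y}_2')-I(U;\mathbf{z})+\epsilon_n$ with $\epsilon_n\to 0$. Expanding into conditional differential entropies,
\[
R_1+\mu R_2\le\big[h(\mathbf{y}_1'|U)-h(\mathbf{z}|U)\big]-\mu\big[h(\mathbf{y}_2'|U)-h(\mathbf{z}|U)\big]+\mu\big[h(\mathbf{y}_2')-h(\mathbf{z})\big]-\big[h(\mathbf{y}_1'|\mathbf{x})-h(\mathbf{z}|\mathbf{x})\big]+2\epsilon_n ,
\]
where the last bracket is a fixed noise term. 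Applying the conditional vector entropy-power inequality along the degradation chain $\mathbf{N}_1'\preceq\mathbf{N}_2'\preceq\mathbf{N}_3'$ and then maximizing over the input distribution bounds the right-hand side by the maximum, over a satellite covariance and a larger total transmit covariance, both $\preceq\mathbf{S}$, of a concave objective whose stationarity conditions are precisely the KKT conditions (\ref{eq3})--(\ref{eq3_1}) satisfied by $(\mathbf{B}_1^{*},\mathbf{B}_2^{*})$; the proportionality relation $(\mathbf{I}-\mathbf{A})(\mathbf{B}_1^{*}+\mathbf{N}_1')=\alpha\mathbf{A}(\mathbf{B}_1^{*}+\mathbf{N}_3')$ with $\alpha=\frac{1}{\mu-1}$ is exactly what makes the entropy-power step tight at this point. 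Note (cf.\ the Remark above) that the maximizing total covariance need not equal $\mathbf{S}$, which is why the enhancement, and not merely the boundary relation $\mathbf{B}_1+\mathbf{B}_2=\mathbf{S}$, is indispensable here. Evaluating the objective at $(\mathbf{B}_1^{*},\mathbf{B}_2^{*})$ gives $R_1^{G}(\mathbf{B}_{1,2}^{*},\mathbf{N}_{1,2,3}')+\mu R_2^{G}(\mathbf{B}_{1,2}^{*},\mathbf{N}_{1,2,3}')$; letting $n\to\infty$ completes the bound.

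The main obstacle is the one endemic to entropy-power converses: forcing the conditional EPI step to be \emph{simultaneously} tight for the $R_1$-term and the $R_2$-term, here further complicated by the subtracted eavesdropper entropies $-h(\mathbf{z}|U)$ and $-h(\mathbf{z}|\mathbf{x})$. Theorem \ref{t2} is designed precisely to neutralize this: on the enhanced channel a single pair $(\alpha,\mathbf{A})$ aligns all the entropy-power constraints, and because the eavesdropper --- already the most degraded user --- is never enhanced, its outputs enter the chain merely as a further degradation of $\mathbf{y}_2'$ and never disturb the ordering or the secrecy bookkeeping. Two subsidiary points still need care: checking that $\mathcal{R}^{G}$ is closed and convex, so that the supporting-line argument pins it down exactly; and handling the two corner points, where one $[\cdot]^{+}$ vanishes and one message disappears --- there the enhancement construction of Theorem \ref{t2} specializes to the channel-enhancement solution of the secure point-to-point MIMO (wire-tap) problem and yields the extreme-point bounds directly.
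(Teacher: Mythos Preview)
Your proposal is correct and follows essentially the same route as the paper: achievability by plugging Gaussian superposition into Theorem~\ref{t1}, and the converse via the enhancement of Theorem~\ref{t2} followed by a Bergman-style entropy-power argument on the enhanced channel, with the proportionality property making the EPI step tight at $(\mathbf{B}_1^{*},\mathbf{B}_2^{*})$. The only differences are cosmetic: the paper argues by contradiction (assume a point strictly outside $\mathcal{R}^{G}$, push through Costa's EPI to obtain a strict lower bound on $h(\mathbf{y}_2')-h(\mathbf{z}')$, then contradict the Gaussian extremal inequality of \cite{11} combined with the KKT identity (\ref{kkt2})) rather than via your supporting-hyperplane/weighted-sum formulation, and it splits your ``unified EPI step'' into Costa's EPI for the inner layer plus the Liu--Viswanath extremal inequality for the outer layer --- but these are equivalent organizations of the same mechanism.
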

\begin{proof}
The achievability scheme is secret superposition coding with
Gaussian codebook. For the converse proof, we use a contradiction
argument and assume that there exists an achievable rate vector
$\bar{R}=(R_{1},R_{2})$ which is not in the Gaussian region. We can
apply the steps of Bergman's proof of \cite{10} on the enhanced
channel to show that this assumption is impossible. Since
$\bar{R}\notin\mathcal{R}^{G}(\mathbf{S},\mathbf{N_{1,2,3}})$, there
exist realizing matrices of an optimal Gaussian rate vector
$\mathbf{B_{1}^{*}}, \mathbf{B_{2}^{*}}$ such that
\begin{IEEEeqnarray}{rl}
R_{1}&\geq R_{1}^{G}(\mathbf{B_{1,2}^{*}},\mathbf{N_{1,2,3}}), \\
\nonumber R_{2}&\geq
R_{2}^{G}(\mathbf{B_{1,2}^{*}},\mathbf{N_{1,2,3}})+b,
\end{IEEEeqnarray}
for some $b>0$. We know by Theorem \ref{t2} that for every set of
realizing matrices of an optimal Gaussian rate vector
$\mathbf{B_{1}^{*}},\mathbf{B_{2}^{*}}$, there exists an enhanced
SADBC with noise covariance matrices
$\mathbf{N_{1}^{'}},\mathbf{N_{2}^{'}}$, such that the
proportionality and rate preservation properties hold. According to
the rate preservation property, we have
$R_{k}^{G}(\mathbf{B_{1,2}^{*}},\mathbf{N_{1,2}})=R_{k}^{G}(\mathbf{B_{1,2}^{*}},\mathbf{N_{1,2}^{'}}),~~k=1,2$.
Therefore, the preceding expression can be rewritten as follows:
\begin{IEEEeqnarray}{rl}\label{eq10}
R_{1}&\geq R_{1}^{G}(\mathbf{B_{1,2}^{*}},\mathbf{N_{1,2,3}})=R_{1}^{G}(\mathbf{B_{1,2}^{*}},\mathbf{N_{1,2,3}^{'}}), \\
\nonumber R_{2}&\geq
R_{2}^{G}(\mathbf{B_{1,2}^{*}},\mathbf{N_{1,2,3}})+b=R_{2}^{G}(\mathbf{B_{1,2}^{*}},\mathbf{N_{1,2,3}^{'}})+b,
\end{IEEEeqnarray}
According to the Theorem \ref{t1}, $R_{1}$ and $R_{2}$ are bounded
as follows:
\begin{IEEEeqnarray}{rl}\nonumber
R_{1}&\leq h(\mathbf{y_{1}}|\mathbf{u})-h(\mathbf{z}|\mathbf{u})-\left(h(\mathbf{y_{1}}|\mathbf{x},\mathbf{u})-h(\mathbf{z}|\mathbf{x},\mathbf{u})\right)\\
\nonumber R_{2}&\leq
h(\mathbf{y_{2}})-h(\mathbf{z})-\left(h(\mathbf{y_{2}}|\mathbf{u})-h(\mathbf{z}|\mathbf{u})\right)
\end{IEEEeqnarray}
Let $\mathbf{y_{1}^{'}}$ and $\mathbf{y_{2}^{'}}$  denote the
enhanced channel outputs of each of the receiving users. As
$\mathbf{u}\rightarrow\mathbf{y_{k}^{'}}\rightarrow\mathbf{y_{k}}$
forms a Markov chain for $k=1,2$ and $\mathbf{z^{'}}=\mathbf{z}$,
then we can use the data processing inequality to rewrite the above
region as follows:
\begin{IEEEeqnarray}{rl}\label{eq11}
R_{1}&\leq h(\mathbf{y_{1}^{'}}|\mathbf{u})-h(\mathbf{z}^{'}|\mathbf{u})-\left(h(\mathbf{y_{1}^{'}}|\mathbf{x},\mathbf{u})-h(\mathbf{z}^{'}|\mathbf{x},\mathbf{u})\right)\\
\nonumber&=
h(\mathbf{y_{1}^{'}}|\mathbf{u})-h(\mathbf{z}^{'}|\mathbf{u})-\frac{1}{2}\left(\log|\mathbf{N_{1}^{'}}|-\log|\mathbf{N_{3}^{'}}|)\right)\\
\label{eq13} R_{2}&\leq
h(\mathbf{y_{2}^{'}})-h(\mathbf{z}^{'})-\left(h(\mathbf{y_{2}^{'}}|\mathbf{u})-h(\mathbf{z}^{'}|\mathbf{u})\right)
\end{IEEEeqnarray}
Now, the inequalities of (\ref{eq10}) and (\ref{eq11}) have shifted
to the enhanced channel.

Since $R_{1}>R_{1}^{G}(\mathbf{B_{1,2}},\mathbf{N_{1,2,3}^{'}})$,
the inequality (\ref{eq11}) means that
\begin{IEEEeqnarray}{rl}\nonumber
h(\mathbf{y_{1}^{'}}|\mathbf{u})-h(\mathbf{z^{'}}|\mathbf{u})>\frac{1}{2}\left(\log|\mathbf{B_{1}^{*}}+\mathbf{N_{1}^{'}}|-\log|\mathbf{B_{1}^{*}}+\mathbf{N_{3}^{'}}|)\right)
\end{IEEEeqnarray}
By the definition of matrix $\mathbf{A}$ and since
$\mathbf{y_{1}^{'}}\rightarrow\mathbf{y_{2}^{'}}\rightarrow\mathbf{z^{'}}$
forms a Morkov chain, the received signals $\mathbf{z^{'}}$ and
$\mathbf{y_{2}^{'}}$ can be written as
$\mathbf{z^{'}}=\mathbf{y_{1}^{'}}+\mathbf{\widetilde{n}}$ and
$\mathbf{y_{2}^{'}}=\mathbf{y_{1}^{'}}+\mathbf{A}^{\frac{1}{2}}\mathbf{\widetilde{n}}$
where $\mathbf{\widetilde{n}}$ is an independent Gaussian noise with
covariance matrix
$\mathbf{\widetilde{N}}=\mathbf{N_{3}^{'}}-\mathbf{N_{1}^{'}}$.
According to Costa's Entropy Power Inequality and the previous
inequality, we have
\begin{IEEEeqnarray}{rl}\label{eq7}
\nonumber
h(\mathbf{y_{2}^{'}}|\mathbf{u})-&h(\mathbf{z^{'}}|\mathbf{u})
 \\ \nonumber &\geq\frac{t}{2}\log\left(|\mathbf{I}-\mathbf{A}|^{\frac{1}{t}}2^{\frac{2}{t}\left(h(\mathbf{y_{1}^{'}|\mathbf{u}})-h(\mathbf{z|^{'}\mathbf{u}})\right)}+|\mathbf{A}|^{\frac{1}{t}})\right)\\
\nonumber &>\frac{t}{2}\log\left(\frac{|\mathbf{I}-\mathbf{A}|^{\frac{1}{t}}|\mathbf{B_{1}^{*}}+\mathbf{N_{1}^{'}}|^{\frac{1}{t}}}{|\mathbf{B_{1}^{*}}+\mathbf{N_{3}^{'}}|^{\frac{1}{t}}}+|\mathbf{A}|^{\frac{1}{t}})\right)\\
&\stackrel{(a)}{=}\frac{1}{2}\log(\mathbf{B_{1}^{*}}+\mathbf{N_{2}^{'}})-\frac{1}{2}\log(\mathbf{B_{1}^{*}}+\mathbf{N_{3}^{'}})
\end{IEEEeqnarray}
where (a) is due to the proportionality property. Using (\ref{eq13})
and the fact that
$R_{2}>R_{2}^{G}(\mathbf{B_{1,2}},\mathbf{N_{1,2,3}^{'}})$, the
inequality (\ref{eq13}) means that
\begin{IEEEeqnarray}{rl}\nonumber
&h(\mathbf{y_{2}^{'}})-h(\mathbf{z^{'}})\geq R_{2}+h(\mathbf{y_{2}^{'}}|\mathbf{u})-h(\mathbf{z^{'}}|\mathbf{u})>\\
\nonumber
&\frac{1}{2}\log(\mathbf{B_{1}^{*}}+\mathbf{B_{2}^{*}}+\mathbf{N_{2}^{'}})-\frac{1}{2}\log(\mathbf{B_{1}^{*}}+\mathbf{B_{2}^{*}}+\mathbf{N_{3}^{'}})
\end{IEEEeqnarray}
On the other hand, Gaussian distribution maximizes
$h(\mathbf{x}+\mathbf{n_{2}})-h(\mathbf{x}+\mathbf{n_{3}})$ (See
\cite{11}) and $(\mathbf{B_{1}^{*}},\mathbf{B_{2}^{*}})$ satisfying
the KKT conditions of (\ref{kkt2}). Therefore, the above inequality
is a contradiction.
\end{proof}

\section{The Capacity Region of the SAMBC}
In this section, we characterize the secrecy capacity region of the
aligned (but not necessarily degraded) MIMO broadcast channel. Note
that since the SAMBC is not degraded, there is no  single-letter
formula for its capacity region. In addition, the secret
superposition of Gaussian codes along with successive decoding
cannot work when the channel is not degraded. In \cite{6}, we
presented an achievable rate region for the general secure Broadcast
channel. Our achievable coding scheme is based on a combination of
the random binning and the Gelfand-Pinsker binning schemes. We first
review this scheme and then based on this result, we develop an
achievable secret coding scheme for the SAMBC. After that, based on
the Theorem \ref{t2}, we provide a full characterization of the
capacity region of SAMBC.

\subsection{Secret Dirty-Paper Coding Scheme and Achievability Proof}
In \cite{6}, we established an achievable rate region for the
general secure broadcast channel. This scheme enables both joint
encoding at the transmitter by using Gelfand-Pinsker binning and
preserving confidentiality by using random binning. The following
theorem summarizes the encoding strategy. The confidentiality proof
is given in Appendix II for completeness.
\begin{thm}:\label{t4}
Let $V_{1}$ and $V_{2}$ be auxiliary random variables and $\Omega$
be the class of joint probability densities
$P(v_{1},v_{2}x,y_{1},y_{2},z)$ that factors as
$P(v_{1},v_{2})P(x|v_{1},v_{2})P(y_{1},y_{2},z|x)$. Let
$\mathcal{R}_{I}(\pi)$ denote the union of all non-negative rate
pairs $(R_{1},R_{2})$ satisfying
 \begin{IEEEeqnarray}{rl}\nonumber
    R_{1}&\leq I(V_{1};Y_{1})-I(V_{1};Z), \\ \nonumber
    R_{2}&\leq I(V_{2};Y_{2})-I(V_{2};Z), \\ \nonumber
    R_{1}+R_{2}&\leq
    I(V_{1};Y_{1})+I(V_{2};Y_{2})-I(V_{1},V_{2};Z)-I(V_{1};V_{2}),
  \end{IEEEeqnarray}
for a given joint probability density $\pi \in\Omega$. For the
general broadcast channel with confidential messages, the following
region is achievable.
\begin{equation}\label{eq14}
\mathcal{R}_{I}=conv\left\{\bigcup_{\pi\in\Omega}\mathcal{R}_{I}(\pi)\right\}
\end{equation}
where $conv$ is the convex closure operator.
\end{thm}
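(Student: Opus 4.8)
The plan is to prove achievability by random coding, superposing Wyner-style random partitioning (for confidentiality) on Marton's binning (which handles the two possibly correlated auxiliaries $V_1,V_2$). Fix $\pi\in\Omega$, i.e.\ fix $P(v_1,v_2)$ and $P(x|v_1,v_2)$, and introduce two nonnegative overhead rates $\widetilde{R}_1,\widetilde{R}_2$. Generate $2^{n(R_1+\widetilde{R}_1)}$ codewords $v_1^n(w_1,m_1)$, $w_1\in[1:2^{nR_1}]$, $m_1\in[1:2^{n\widetilde{R}_1}]$, i.i.d.\ $\sim\prod_i P(v_{1i})$, and independently $2^{n(R_2+\widetilde{R}_2)}$ codewords $v_2^n(w_2,m_2)$ i.i.d.\ $\sim\prod_i P(v_{2i})$. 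The index $m_k$ is local randomness that plays two roles simultaneously: it supplies the list of candidate codewords needed by Marton's mutual covering lemma, and it is the dummy randomness used to confuse the eavesdropper. To transmit $(w_1,w_2)$, the encoder searches the product bin $\{v_1^n(w_1,\cdot)\}\times\{v_2^n(w_2,\cdot)\}$ for a pair $(v_1^n(w_1,m_1),v_2^n(w_2,m_2))$ that is jointly typical with respect to $P(v_1,v_2)$, picks one uniformly at random among those found (declaring an error if none exists), then draws $x^n\sim\prod_i P(x_i|v_{1i},v_{2i})$ and transmits it. Receiver $k$ performs joint-typicality decoding on its own codebook, looking for the unique $\widehat{w}_k$ with $(v_k^n(\widehat{w}_k,m_k),y_k^n)$ jointly typical for some $m_k$.

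The reliability part is routine. The mutual covering lemma guarantees a jointly typical pair exists with probability tending to $1$ provided $\widetilde{R}_1+\widetilde{R}_2>I(V_1;V_2)$, and the standard joint-typicality error analysis gives $P_e^{(n)}\to 0$ as long as $R_k+\widetilde{R}_k<I(V_k;Y_k)$ for $k=1,2$.

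The heart of the proof is the equivocation analysis, where one must verify $\frac{1}{n}H(W_1|Z^n)\geq R_1-\epsilon_1$, $\frac{1}{n}H(W_2|Z^n)\geq R_2-\epsilon_2$, and $\frac{1}{n}H(W_1,W_2|Z^n)\geq R_1+R_2-\epsilon_3$, where $Z^n$ denotes the eavesdropper's block observation. Since the messages are independent and uniform it suffices to bound the leakages. For the joint term, using that $(W_1,W_2)$ is a deterministic function of $(V_1^n,V_2^n)$ given the codebook, one writes $I(W_1,W_2;Z^n)=I(V_1^n,V_2^n;Z^n)-I(V_1^n,V_2^n;Z^n|W_1,W_2)$; the first term is at most $n(I(V_1,V_2;Z)+\epsilon)$ by the AEP, while the second equals $H(V_1^n,V_2^n|W_1,W_2)-H(V_1^n,V_2^n|W_1,W_2,Z^n)$, whose first entropy concentrates near $n(\widetilde{R}_1+\widetilde{R}_2-I(V_1;V_2))$ (the encoder's selected pair is essentially uniform over the jointly typical pairs of the product bin) and whose second entropy is forced to $o(n)$ by taking $\widetilde{R}_1+\widetilde{R}_2-I(V_1;V_2)\leq I(V_1,V_2;Z)$, so that the eavesdropper can itself decode the pair once the messages are revealed. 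Choosing $\widetilde{R}_1+\widetilde{R}_2$ as close as possible to $I(V_1;V_2)+I(V_1,V_2;Z)$ then drives the joint leakage to zero. The per-message leakages are controlled analogously, by treating the other user's message as genie side information at the eavesdropper and taking the overhead $\widetilde{R}_k$ at least $I(V_k;Z)$, so that the mixture of eavesdropper-channel outputs induced by a fixed message approximates the product output distribution; these requirements are not implied by the joint one and hence must be imposed separately.

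Collecting the constraints $\widetilde{R}_k\geq I(V_k;Z)$, $\widetilde{R}_1+\widetilde{R}_2\geq I(V_1;V_2)+I(V_1,V_2;Z)$, $R_k+\widetilde{R}_k<I(V_k;Y_k)$, and $\widetilde{R}_k\geq 0$, and eliminating $\widetilde{R}_1,\widetilde{R}_2$ by Fourier--Motzkin yields exactly the three inequalities defining $\mathcal{R}_I(\pi)$. Finally $\mathcal{R}_I=\mathrm{conv}\{\bigcup_{\pi\in\Omega}\mathcal{R}_I(\pi)\}$ follows by a standard time-sharing argument. I expect the main obstacle to be the equivocation step, and within it the control of $H(V_1^n,V_2^n|W_1,W_2)$: because Marton's encoder selects $(M_1,M_2)$ non-uniformly --- uniformly among the \emph{jointly typical} pairs of a product bin, a number that fluctuates across codebooks and message pairs --- showing this conditional entropy is close to $n(\widetilde{R}_1+\widetilde{R}_2-I(V_1;V_2))$ requires a careful concentration argument, and it is the coupling between this non-uniform selection rule and the secrecy randomization (rather than any single inequality) that is delicate; the remainder is assembling AEP estimates and Fano-type inequalities. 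The full details appear in Appendix II.
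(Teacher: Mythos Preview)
Your proposal is correct and follows essentially the same approach as the paper: Marton-style double binning on the auxiliaries $(V_1,V_2)$ combined with Wyner-type random partitioning for secrecy, joint-typicality decoding at the legitimate receivers, and an equivocation analysis that hinges on Fano's inequality (the eavesdropper can list-decode $(V_1^n,V_2^n)$ once the bin indices are revealed) together with AEP bounds on $I(V_1^n,V_2^n;Z^n)$ and $I(V_1^n;V_2^n)$. The only cosmetic difference is that the paper fixes the overhead rates directly at $\widetilde{R}_k\approx I(V_k;Z)$ and verifies the inequalities, whereas you keep $\widetilde{R}_1,\widetilde{R}_2$ free and recover the region by Fourier--Motzkin elimination; the delicate point you flag about the concentration of $H(V_1^n,V_2^n\,|\,W_1,W_2)$ under Marton's non-uniform selection is exactly the step the paper handles (somewhat tersely) via the bound $H(W_1,W_2,V_1^n,V_2^n)\geq H(V_1^n,V_2^n)$ and Lemma~\ref{lem3}.
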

\begin{rem}
If we remove the secrecy constraints by removing the eavesdropper,
then the above rate region becomes Marton's achievable region for
the general broadcast channel.
\end{rem}
\begin{proof}1) \textit{Codebook Generation}:
 The structure of the encoder is
depicted in Fig.\ref{f2}.
\begin{figure}
\centerline{\includegraphics[scale=.5]{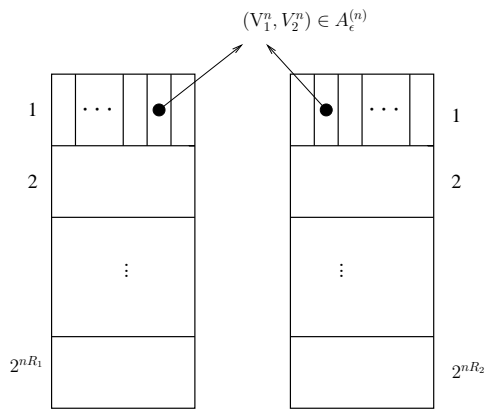}} \caption{The
Stochastic Encoder} \label{f2}
\end{figure}
Fix $P(v_{1})$, $P(v_{2})$ and  $P(x|v_{1},v_{2})$. The stochastic
encoder generates $2^{n(I(V_{1};Y_{1})-\epsilon)}$ independent and
identically distributed sequences $v_{1}^{n}$ according to the
distribution $P(v_{1}^{n})=\prod_{i=1}^{n}P(v_{1,i})$. Next,
randomly distribute these sequences into $2^{nR_{1}}$ bins such that
each bin contains $2^{n(I(V_{1};Z)-\epsilon)}$ codewords. Similarly,
it generates $2^{n(I(V_{2};Y_{2})-\epsilon)}$ independent and
identically distributed sequences $v_{2}^{n}$ according to the
distribution $P(v_{2}^{n})=\prod_{i=1}^{n}P(v_{2,i})$. Randomly
distribute these sequences into $2^{nR_{2}}$ bins such that each bin
contains $2^{n(I(V_{2};Z)-\epsilon)}$ codewords. Index each of the
above bins by $w_{1}\in\{1,2,...,2^{nR_{1}}\}$ and
$w_{2}\in\{1,2,...,2^{nR_{2}}\}$ respectively.

2) \textit{Encoding}: To send messages $w_{1}$ and $w_{2}$, the
transmitter looks for $v_{1}^{n}$ in bin $w_{1}$ of the first bin
set and looks for $v_{2}^{n}$ in bin $w_{2}$ of the second bin set,
such that $(v_{1}^{n},v_{2}^{n})\in
A_{\epsilon}^{(n)}(P_{V_{1},V_{2}})$ where
$A_{\epsilon}^{(n)}(P_{V_{1},V_{2}})$ denotes the set of jointly
typical sequences $v_{1}^{n}$ and $v_{2}^{n}$ with respect to
$P(v_{1},v_{2})$. The rates are such that there exist more than one
joint typical pair. The transmitter randomly chooses one of them and
then generates $x^{n}$ according to
$P(x^{n}|v_{1}^{n},v_{2}^{n})=\prod_{i=1}^{n}P(x_{i}|v_{1,i},v_{2,i})$.
This scheme is equivalent to the scenario in which each bin is
divided into subbins and the transmitter randomly chooses one of the
subbins of bin $w_{1}$ and one of the subbins of bin $w_{2}$. It
then looks for a joint typical sequence $(v_{1}^{n},v_{2}^{n})$ in
the corresponding subbins and generates $x^{n}$.

3) \textit{Decoding}: The received signals at the legitimate
receivers, $y_{1}^{n}$ and $y_{2}^{n}$, are the outputs of the
channels $P(y_{1}^{n}|x^{n})=\prod_{i=1}^{n}P(y_{1,i}|x_{i})$ and
$P(y_{2}^{n}|x^{n})=\prod_{i=1}^{n}P(y_{2,i}|x_{i})$, respectively.
The first receiver looks for the unique sequence $v_{1}^{n}$ such
that $(v_{1}^{n},y_{1}^{n})$ is jointly typical and declares the
index of the bin containing $v_{1}^{n}$ as the message received. The
second receiver uses the same method to extract the message $w_{2}$.

4) \textit{Error Probability Analysis}: Since the region of
(\ref{l0}) is a subset of Marton region, then the error probability
analysis is the same as \cite{3}.

5) \textit{Equivocation Calculation}: Please see Appendix A.
\end{proof}
The achievability scheme in Theorem \ref{t4} introduces random
binning. However, when we want to construct the rate region of
(\ref{eq14}), it is not clear how to choose the auxiliary random
variables $V_{1}$ and $V_{2}$. Here, we employ the Dirty-Paper
Coding (DPC) technique to develop the secret DPC (SDPC) achievable
rate region for the SAMBC. We consider a secret dirty-paper encoder
with Gaussian codebooks as follows.

First, we separate the channel input $\mathbf{x}$ into two random
vectors $\mathbf{b_{1}}$ and $\mathbf{b_{2}}$ such that
\begin{equation}
\mathbf{b_{1}}+\mathbf{b_{2}}=\mathbf{x}
\end{equation}
Here, $\mathbf{b_{1}}$ and $\mathbf{b_{2}}$ and $\mathbf{v_{1}}$ and
$\mathbf{v_{2}}$ are chosen as follows:
\begin{IEEEeqnarray}{lr}\nonumber
\mathbf{b_{1}}\sim \mathcal{N}(0,\mathbf{B_{1}}),\\ \nonumber
\mathbf{b_{2}}\sim \mathcal{N}(0,\mathbf{B_{2}}),\\
\nonumber \mathbf{v_{2}}=\mathbf{b_{2}},\\ \label{eq15}
\mathbf{v_{1}}=\mathbf{b_{1}}+\mathbf{C}\mathbf{b_{2}}.
\end{IEEEeqnarray}
where $\mathbf{B_{1}}=E[\mathbf{b_{1}}\mathbf{b_{1}^{T}}]\succeq0$
and $\mathbf{B_{2}}=E[\mathbf{b_{2}}\mathbf{b_{2}^{T}}]\succeq0$ are
covariance matrices such that $\mathbf{B_{1}}+\mathbf{B_{2}}\preceq
\mathbf{S}$, and the matrix $\mathbf{C}$ is given as follows:
\begin{equation}
\mathbf{C}=\mathbf{B_{1}}\left(\mathbf{N_{1}}+\mathbf{B_{1}}\right)^{-1}
\end{equation}
By substituting (\ref{eq15}) into the Theorem \ref{t4}, we obtain
the following SDPC rate region for the SAMBC.
\begin{lem}(SDPC Rate Region):
Let $\mathbf{S}$ be a positive semi-definite matrix. Then the
following SDPC rate region
 of an SAMBC with
a covariance matrix constraint $\mathbf{S}$ is achievable.
\begin{equation}
\mathcal{R}^{SDPC}(\mathbf{S},\mathbf{N_{1,2,3}})=conv\left\{\bigcup_{\pi\in\prod}\mathcal{R}^{SDPC}(\pi,\mathbf{S},\mathbf{N_{1,2,3}})\right\}
\end{equation}
where $\prod$ is the collection of all possible permutations of the
ordered set $\{1,2\}$, $conv$ is the convex closure operator and
$\mathcal{R}^{SDPC}(\pi,\mathbf{S},\mathbf{N_{1,2,3}})$ is given as
follows:
\begin{IEEEeqnarray}{rl}\nonumber
\mathcal{R}^{SDPC}(\pi,\mathbf{S},\mathbf{N_{1,2,3}})=   \left\{
                                                       \begin{array}{ll}
                                                          \left(R_{1},R_{2}\right)\big|  R_{k}=R_{\pi^{-1}(k)}^{SDPC}(\pi,\mathbf{B_{1,2}},\mathbf{N_{1,2,3}})~ k=1,2\\ \hbox{s.t}~~\mathbf{S}-(\mathbf{B_{1}}+\mathbf{B_{2}})\succeq 0,~\mathbf{B_{1}}\succeq 0, ~\mathbf{B_{2}}\succeq 0
                                                       \end{array}
                                                     \right\}.
\end{IEEEeqnarray}
where
\begin{IEEEeqnarray}{rl}\nonumber
R_{\pi^{-1}(k)}^{SDPC}\left(\pi,\mathbf{B_{1,2}},\mathbf{N_{1,2,3}}\right)=\frac{1}{2}\left[\log\frac{\left|\sum_{i=1}^{\pi^{-1}(k)}\mathbf{B_{\pi(i)}}+\mathbf{N_{k}}\right|}{\left|\sum_{i=1}^{\pi^{-1}(k)-1}\mathbf{B_{\pi(i)}}+\mathbf{N_{k}}\right|}-\frac{1}{2}\log\frac{\left|\sum_{i=1}^{\pi^{-1}(k)}\mathbf{B_{\pi(i)}}+\mathbf{N_{3}}\right|}{\left|\sum_{i=1}^{\pi^{-1}(k)-1}\mathbf{B_{\pi(i)}}+\mathbf{N_{3}}\right|}\right]^{+}
\end{IEEEeqnarray}
\end{lem}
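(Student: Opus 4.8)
The plan is to substitute the Gaussian dirty‑paper choice (\ref{eq15}) into the three rate constraints of Theorem \ref{t4} and to evaluate, or lower bound, each of the resulting mutual informations in closed form; since $\mathbf{b_{1}},\mathbf{b_{2}},\mathbf{n_{1}},\mathbf{n_{2}},\mathbf{n_{3}}$ are independent Gaussians, every term is a log‑determinant. Three observations do most of the organizing. First, $(\mathbf{v_{1}},\mathbf{v_{2}})=(\mathbf{b_{1}}+\mathbf{C}\mathbf{b_{2}},\mathbf{b_{2}})$ is an invertible linear map of $(\mathbf{b_{1}},\mathbf{b_{2}})$, hence of $\mathbf{x}$, so $I(V_{1},V_{2};Z)=I(\mathbf{x};Z)=\frac{1}{2}\log\frac{|\mathbf{B_{1}}+\mathbf{B_{2}}+\mathbf{N_{3}}|}{|\mathbf{N_{3}}|}$. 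Second, because $\mathbf{v_{2}}=\mathbf{b_{2}}$ is a Gaussian interferer seen by receiver $2$ and by the eavesdropper on top of the independent Gaussian $\mathbf{b_{1}}$, one gets immediately $I(V_{2};Y_{2})=\frac{1}{2}\log\frac{|\mathbf{B_{1}}+\mathbf{B_{2}}+\mathbf{N_{2}}|}{|\mathbf{B_{1}}+\mathbf{N_{2}}|}$ and $I(V_{2};Z)=\frac{1}{2}\log\frac{|\mathbf{B_{1}}+\mathbf{B_{2}}+\mathbf{N_{3}}|}{|\mathbf{B_{1}}+\mathbf{N_{3}}|}$, whose difference is exactly $R_{2}^{SDPC}$ for the identity permutation. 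Third, the only non‑routine quantity is $I(V_{1};Y_{1})-I(V_{1};V_{2})$, which I handle with the vector dirty‑paper identity.

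\emph{The dirty‑paper identity.} I would first show that with the prescribed $\mathbf{C}=\mathbf{B_{1}}(\mathbf{N_{1}}+\mathbf{B_{1}})^{-1}$ the variables $V_{1}$ and $V_{2}$ are conditionally independent given $Y_{1}$. For jointly Gaussian variables this is the vanishing of the partial covariance $\mathrm{Cov}(V_{1},V_{2})-\mathrm{Cov}(V_{1},Y_{1})\,\mathrm{Cov}(Y_{1})^{-1}\mathrm{Cov}(Y_{1},V_{2})$; reading the four covariances off $\mathbf{v_{1}}=\mathbf{b_{1}}+\mathbf{C}\mathbf{b_{2}}$, $\mathbf{v_{2}}=\mathbf{b_{2}}$, $\mathbf{y_{1}}=\mathbf{b_{1}}+\mathbf{b_{2}}+\mathbf{n_{1}}$ and using the identity $\mathbf{B_{1}}+\mathbf{C}\mathbf{B_{2}}=\mathbf{C}(\mathbf{B_{1}}+\mathbf{B_{2}}+\mathbf{N_{1}})$ (which holds precisely for the stated $\mathbf{C}$, for every $\mathbf{B_{2}}\succeq 0$), the partial covariance collapses to $0$. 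Hence $I(V_{1};V_{2}\mid Y_{1})=0$, and expanding $I(V_{1};Y_{1},V_{2})$ in two ways gives $I(V_{1};Y_{1})-I(V_{1};V_{2})=I(V_{1};Y_{1}\mid V_{2})-I(V_{1};V_{2}\mid Y_{1})=I(V_{1};Y_{1}\mid V_{2})$. Finally, conditioned on $\mathbf{v_{2}}=\mathbf{b_{2}}$, $V_{1}$ is a deterministic shift of $\mathbf{b_{1}}$ and $Y_{1}$ a deterministic shift of $\mathbf{b_{1}}+\mathbf{n_{1}}$, so $I(V_{1};Y_{1}\mid V_{2})=\frac{1}{2}\log\frac{|\mathbf{B_{1}}+\mathbf{N_{1}}|}{|\mathbf{N_{1}}|}$, the interference‑free rate of the first layer.

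\emph{Assembling the region.} The same conditioning bounds the eavesdropper's view of the first layer from above: $I(V_{1};Z)-I(V_{1};V_{2})=I(V_{1};Z\mid V_{2})-I(V_{1};V_{2}\mid Z)\le I(V_{1};Z\mid V_{2})=\frac{1}{2}\log\frac{|\mathbf{B_{1}}+\mathbf{N_{3}}|}{|\mathbf{N_{3}}|}$. Therefore $I(V_{1};Y_{1})-I(V_{1};Z)\ge \frac{1}{2}\log\frac{|\mathbf{B_{1}}+\mathbf{N_{1}}|}{|\mathbf{N_{1}}|}-\frac{1}{2}\log\frac{|\mathbf{B_{1}}+\mathbf{N_{3}}|}{|\mathbf{N_{3}}|}=R_{1}^{SDPC}$, and since $I(V_{2};Y_{2})-I(V_{2};Z)=R_{2}^{SDPC}$ while $\frac{1}{2}\log\frac{|\mathbf{B_{1}}+\mathbf{N_{3}}|}{|\mathbf{N_{3}}|}+\frac{1}{2}\log\frac{|\mathbf{B_{1}}+\mathbf{B_{2}}+\mathbf{N_{3}}|}{|\mathbf{B_{1}}+\mathbf{N_{3}}|}=\frac{1}{2}\log\frac{|\mathbf{B_{1}}+\mathbf{B_{2}}+\mathbf{N_{3}}|}{|\mathbf{N_{3}}|}$, the sum‑rate bound $[I(V_{1};Y_{1})-I(V_{1};V_{2})]+I(V_{2};Y_{2})-I(V_{1},V_{2};Z)$ of Theorem \ref{t4} also equals $R_{1}^{SDPC}+R_{2}^{SDPC}$. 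Thus the pair $(R_{1}^{SDPC},R_{2}^{SDPC})$ for the identity ordering satisfies all constraints of Theorem \ref{t4} and is achievable; interchanging the roles of the two users (with $\mathbf{C}=\mathbf{B_{2}}(\mathbf{N_{2}}+\mathbf{B_{2}})^{-1}$) yields the other permutation, and taking the convex closure over both permutations and over all $\mathbf{B_{1}},\mathbf{B_{2}}\succeq 0$ with $\mathbf{B_{1}}+\mathbf{B_{2}}\preceq \mathbf{S}$ gives $\mathcal{R}^{SDPC}(\mathbf{S},\mathbf{N_{1,2,3}})$.

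\emph{Expected obstacle.} The conceptual heart is the dirty‑paper identity, i.e.\ recognizing that the prescribed $\mathbf{C}$ is exactly the matrix rendering $V_{1}\perp V_{2}\mid Y_{1}$; this is what forces the first layer's rate to be interference‑free and makes the closed forms in the lemma come out. The remaining steps are essentially bookkeeping with log‑determinants, except for the $[\cdot]^{+}$ operators, which matter because in the non‑degraded case $\frac{1}{2}\log\frac{|\mathbf{B_{1}}+\mathbf{N_{1}}|}{|\mathbf{N_{1}}|}-\frac{1}{2}\log\frac{|\mathbf{B_{1}}+\mathbf{N_{3}}|}{|\mathbf{N_{3}}|}$ can be negative; this is reconciled with the sum‑rate constraint of Theorem \ref{t4} by the usual time‑sharing argument, switching off any layer whose eavesdropper‑reduced rate would be non‑positive.
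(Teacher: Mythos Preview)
Your proposal is correct and follows essentially the same approach as the paper: instantiate the achievable region of Theorem~\ref{t4} with the Gaussian dirty-paper choice (\ref{eq15}), reduce all mutual informations to log-determinants, and use the dirty-paper identity $I(V_{1};Y_{1})-I(V_{1};V_{2})=I(V_{1};Y_{1}\mid V_{2})$ for the prescribed $\mathbf{C}$. The only noteworthy difference is expository: the paper invokes this identity as a black-box lemma (the vector DPC result that the optimal $\mathbf{C}$ achieves $I(\mathbf{v_{1}};\mathbf{y_{1}}\mid \mathbf{b_{2}})$), whereas you prove it directly by checking that the partial covariance of $V_{1}$ and $V_{2}$ given $Y_{1}$ vanishes; and where the paper first argues that the sum-rate constraint is the binding one for $R_{1}$ before evaluating, you instead verify all three constraints of Theorem~\ref{t4} separately at the target point, which is slightly cleaner but equivalent.
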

Note that for the identity permutation, $\pi_{I}$, where
$\pi_{I}(k)=k$ we have,
\begin{IEEEeqnarray}{rl}\nonumber
\mathcal{R}^{SDPC}(\pi_{I},\mathbf{S},\mathbf{N_{1,2,3}})=\mathcal{R}^{G}(\mathbf{S},\mathbf{N_{1,2,3}})
\end{IEEEeqnarray}
\begin{proof}
We prove the lemma for the case of identity permutation
$\pi_{I}=\{1,2\}$. This proof can similarly be used for the case
that $\pi=\{2,1\}$. According to the Theorem \ref{t4}, we have,
\begin{IEEEeqnarray}{rl}\nonumber
    R_{1}&\leq \min\left\{I(V_{1};Y_{1})-I(V_{1};Z),I(V_{1};Y_{1})+I(V_{2};Z)-I(V_{1},V_{2};Z)-I(V_{1};V_{2})\right\}, \\ \nonumber
         &\stackrel{(a)}{\leq} \min\left\{I(V_{1};Y_{1})-I(V_{1};Z),I(V_{1};Y_{1})-I(V_{1};Z|V_{2})-I(V_{1};V_{2})\right\}, \\
\nonumber
         &\stackrel{(b)}{\leq} I(V_{1};Y_{1})-I(V_{1};Z|V_{2})-I(V_{1};V_{2}), \\ \label{eq16}
    R_{2}&\leq I(V_{2};Y_{2})-I(V_{2};Z),
  \end{IEEEeqnarray}
where $(a)$ follows from the fact that
$I(V_{1},V_{2};Z)=I(V_{2};Z)+I(V_{1};Z|V_{2})$ and $(b)$ follows
from the fact that
$I(V_{1};Z|V_{2})+I(V_{1};V_{2})=I(Z,V_{2};V_{1})\geq I(Z;V_{1})$.
To calculate the upper-bound of $R_{1}$, we need to review the
following lemma which has been noted by several authors \cite{12}.
\begin{lem}
Let $\mathbf{y_{1}}=\mathbf{b_{1}}+\mathbf{b_{2}}+\mathbf{n_{1}}$,
where $\mathbf{b_{1}}$, $\mathbf{b_{2}}$ and $\mathbf{n_{1}}$ are
Gaussian random vectors with covariance matrices $\mathbf{B_{1}}$,
$\mathbf{B_{2}}$ and $\mathbf{N_{1}}$ respectively. Let
$\mathbf{b_{1}}$, $\mathbf{b_{2}}$ and $\mathbf{n_{1}}$ be
independent, and let
$\mathbf{v_{1}}=\mathbf{b_{1}}+\mathbf{C}\mathbf{b_{2}}$, where
$\mathbf{C}$ is an $t\times t$ matrix. Then an optimal matrix
$\mathbf{C}$ which maximizes $I(\mathbf{v_{1}};\mathbf{y_{1}})-
I(\mathbf{v_{1}};\mathbf{b_{2}})$ is
$\mathbf{C}=\mathbf{B_{1}}\left(\mathbf{N_{1}}+\mathbf{B_{1}}\right)^{-1}$.
Further, the maximum value of $I(\mathbf{v_{1}};\mathbf{y_{1}}-
I(\mathbf{v_{1}};\mathbf{b_{2}})$ is
$I(\mathbf{v_{1}};\mathbf{y_{1}}|\mathbf{b_{2}})$.
\end{lem}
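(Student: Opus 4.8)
The plan is to convert the maximization over $\mathbf{C}$ into a differential-entropy minimization and then identify the optimal $\mathbf{C}$ by a conditional-independence argument; this is the standard dirty-paper-coding computation. First I would write
\[
I(\mathbf{v_{1}};\mathbf{y_{1}})-I(\mathbf{v_{1}};\mathbf{b_{2}})=h(\mathbf{v_{1}}|\mathbf{b_{2}})-h(\mathbf{v_{1}}|\mathbf{y_{1}}).
\]
Since $\mathbf{b_{1}}$ is independent of $\mathbf{b_{2}}$, we have $h(\mathbf{v_{1}}|\mathbf{b_{2}})=h(\mathbf{b_{1}}+\mathbf{C}\mathbf{b_{2}}|\mathbf{b_{2}})=h(\mathbf{b_{1}})$, which does not depend on $\mathbf{C}$. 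Hence maximizing the objective is equivalent to minimizing $h(\mathbf{v_{1}}|\mathbf{y_{1}})$ over $\mathbf{C}$.

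Next I would apply the chain rule for mutual information in two orders, namely $I(\mathbf{v_{1}};\mathbf{y_{1}},\mathbf{b_{2}})=I(\mathbf{v_{1}};\mathbf{b_{2}})+I(\mathbf{v_{1}};\mathbf{y_{1}}|\mathbf{b_{2}})$ and $I(\mathbf{v_{1}};\mathbf{y_{1}},\mathbf{b_{2}})=I(\mathbf{v_{1}};\mathbf{y_{1}})+I(\mathbf{v_{1}};\mathbf{b_{2}}|\mathbf{y_{1}})$, to obtain
\[
I(\mathbf{v_{1}};\mathbf{y_{1}})-I(\mathbf{v_{1}};\mathbf{b_{2}})=I(\mathbf{v_{1}};\mathbf{y_{1}}|\mathbf{b_{2}})-I(\mathbf{v_{1}};\mathbf{b_{2}}|\mathbf{y_{1}})\le I(\mathbf{v_{1}};\mathbf{y_{1}}|\mathbf{b_{2}}),
\]
with equality if and only if $\mathbf{v_{1}}$ and $\mathbf{b_{2}}$ are conditionally independent given $\mathbf{y_{1}}$. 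Conditioned on $\mathbf{b_{2}}$, the vector $\mathbf{v_{1}}=\mathbf{b_{1}}+\mathbf{C}\mathbf{b_{2}}$ is a deterministic translate of $\mathbf{b_{1}}$ and $\mathbf{y_{1}}=\mathbf{b_{1}}+\mathbf{b_{2}}+\mathbf{n_{1}}$ is a deterministic translate of $\mathbf{b_{1}}+\mathbf{n_{1}}$, so $I(\mathbf{v_{1}};\mathbf{y_{1}}|\mathbf{b_{2}})=I(\mathbf{b_{1}};\mathbf{b_{1}}+\mathbf{n_{1}})$ is also a constant independent of $\mathbf{C}$. Consequently it suffices to produce one matrix $\mathbf{C}$ for which the conditional-independence condition holds: that matrix is automatically optimal, and the resulting maximum equals $I(\mathbf{v_{1}};\mathbf{y_{1}}|\mathbf{b_{2}})$, exactly the value asserted in the lemma.

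For the last step I would use that $(\mathbf{v_{1}},\mathbf{b_{2}},\mathbf{y_{1}})$ are jointly Gaussian, so $\mathbf{v_{1}}$ and $\mathbf{b_{2}}$ are conditionally independent given $\mathbf{y_{1}}$ precisely when the conditional cross-covariance
\[
\mathrm{Cov}(\mathbf{v_{1}},\mathbf{b_{2}})-\mathrm{Cov}(\mathbf{v_{1}},\mathbf{y_{1}})\,\mathrm{Cov}(\mathbf{y_{1}})^{-1}\,\mathrm{Cov}(\mathbf{y_{1}},\mathbf{b_{2}})
\]
vanishes. Substituting $\mathrm{Cov}(\mathbf{v_{1}},\mathbf{b_{2}})=\mathbf{C}\mathbf{B_{2}}$, $\mathrm{Cov}(\mathbf{v_{1}},\mathbf{y_{1}})=\mathbf{B_{1}}+\mathbf{C}\mathbf{B_{2}}$, $\mathrm{Cov}(\mathbf{y_{1}})=\mathbf{B_{1}}+\mathbf{B_{2}}+\mathbf{N_{1}}$ (invertible because $\mathbf{N_{1}}\succ 0$), and $\mathrm{Cov}(\mathbf{y_{1}},\mathbf{b_{2}})=\mathbf{B_{2}}$, the choice $\mathbf{C}=\mathbf{B_{1}}(\mathbf{N_{1}}+\mathbf{B_{1}})^{-1}$ makes $\mathbf{C}(\mathbf{B_{1}}+\mathbf{N_{1}})=\mathbf{B_{1}}$, hence $\mathbf{C}(\mathbf{B_{1}}+\mathbf{B_{2}}+\mathbf{N_{1}})=\mathbf{B_{1}}+\mathbf{C}\mathbf{B_{2}}$, and the displayed expression collapses to $\mathbf{C}\mathbf{B_{2}}-\mathbf{C}\mathbf{B_{2}}=0$. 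I expect this verification — that the MMSE matrix is exactly the one decoupling $\mathbf{v_{1}}$ from $\mathbf{b_{2}}$ given $\mathbf{y_{1}}$ — to be the only nonroutine point; everything before it is chain-rule bookkeeping, and since the argument only ever right-multiplies by $\mathbf{B_{2}}$, no invertibility of $\mathbf{B_{1}}$ or $\mathbf{B_{2}}$ is required, so the possibly singular case needs no separate treatment.
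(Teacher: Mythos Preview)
Your argument is correct and is the standard vector dirty-paper-coding computation. The paper, however, does not supply its own proof of this lemma: it merely states the result and attributes it to prior work (``has been noted by several authors \cite{12}''), so there is no in-paper proof to compare against. Your approach---rewriting the objective as $h(\mathbf{v_{1}}|\mathbf{b_{2}})-h(\mathbf{v_{1}}|\mathbf{y_{1}})$, bounding via the chain-rule identity $I(\mathbf{v_{1}};\mathbf{y_{1}})-I(\mathbf{v_{1}};\mathbf{b_{2}})=I(\mathbf{v_{1}};\mathbf{y_{1}}|\mathbf{b_{2}})-I(\mathbf{v_{1}};\mathbf{b_{2}}|\mathbf{y_{1}})$, and then verifying that the MMSE choice $\mathbf{C}=\mathbf{B_{1}}(\mathbf{N_{1}}+\mathbf{B_{1}})^{-1}$ forces $I(\mathbf{v_{1}};\mathbf{b_{2}}|\mathbf{y_{1}})=0$ via the Gaussian conditional-covariance criterion---is exactly the argument one finds in the cited literature, and every step checks out. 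In particular, the key algebraic identity $\mathbf{C}(\mathbf{B_{1}}+\mathbf{B_{2}}+\mathbf{N_{1}})=\mathbf{B_{1}}+\mathbf{C}\mathbf{B_{2}}$ that you derive is precisely what makes $\mathrm{Cov}(\mathbf{v_{1}},\mathbf{y_{1}})\,\mathrm{Cov}(\mathbf{y_{1}})^{-1}=\mathbf{C}$, from which the conditional cross-covariance vanishes. Your observation that only right-multiplication by $\mathbf{B_{2}}$ occurs, so no rank assumptions on $\mathbf{B_{1}}$ or $\mathbf{B_{2}}$ are needed, is a nice robustness remark.
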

Now, using the
above Lemma and substituting (\ref{eq15}) into (\ref{eq16}), we
obtain the following achievable rate region when $\pi=\pi_{I}$.
\begin{IEEEeqnarray}{rl}\nonumber
    R_{1}&\leq\frac{1}{2}\left[\log\left|\mathbf{N_{1}^{-1}}(\mathbf{B_{1}}+\mathbf{N_{1}})\right|-\frac{1}{2}\log\left|\mathbf{N_{3}^{-1}}(\mathbf{B_{1}}+\mathbf{N_{3}})\right|\right]^{+},\\ \nonumber
   R_{2}&\leq\frac{1}{2}\left[\log\frac{\left|\mathbf{B_{1}}+\mathbf{B_{2}}+\mathbf{N_{2}}\right|}{\left|\mathbf{B_{1}}+\mathbf{N_{2}}\right|}-\frac{1}{2}\log\frac{\left|\mathbf{B_{1}}+\mathbf{B_{2}}+\mathbf{N_{3}}\right|}{\left|\mathbf{B_{1}}+\mathbf{N_{3}}\right|}\right]^{+}.
\end{IEEEeqnarray}
\end{proof}
\subsection{SAMBC- Converse Proof}
For the converse part, note that not all points on the boundary of
$\mathcal{R}^{SDPC}(\mathbf{S},\mathbf{N_{1,2,3}})$ can be directly
obtained using a single SDPC scheme. Instead, we must use
time-sharing between points corresponding to different permutations.
Therefore, unlike the SADBC case, we cannot use a similar notion to
the optimal Gaussian rate vectors, as not all the boundary points
can immediately characterized as a solution of an optimization
problem. Instead, as the SDPC region is convex by definition, we use
the notion of supporting hyperplanes of \cite{7} to define this
region.

In this section, we first define the supporting hyperplane of a
closed and bounded set. Then, we present the relation between the
ideas of a supporting hyperplane and the enhanced channel in
Theorem \ref{t5} This theorem is an extension of Theorem \ref{t2} to the
SAMBC case. Finally, we use Theorem \ref{t5} to prove that
$\mathcal{R}^{SDPC}(\mathbf{S},\mathbf{N_{1,2,3}})$ is indeed the
capacity region of the SAMBC.
\begin{defi}
The set
$\{\overline{R}=(R_{1},R_{2})|\gamma_{1}R_{1}+\gamma_{2}R_{2}=b\}$,
for fixed and given scalars $\gamma_{1},\gamma_{2}$ and, $b$, is a
supporting hyperplane of a closed and bounded set
$\mathcal{X}\subset \mathbb{R}^{m}$, if
$\gamma_{1}R_{1}+\gamma_{2}R_{2}\leq b$ $\forall (R_{1},R_{2})\in
\mathcal{X}$, with equality for at least one rate vector
$(R_{1},R_{2})\in \mathcal{X}$.
\end{defi}
Note that as $\mathcal{X}$ is closed and bounded,
$\max_{(R_{1},R_{2})\in
\mathcal{X}}\gamma_{1}R_{1}+\gamma_{2}R_{2}$, exists for any
$\gamma_{1},\gamma_{2}$. Thus, we always can find a supporting
hyperplane for the set $\mathcal{X}$. As
$\mathcal{R}^{SDPC}(\mathbf{S},\mathbf{N_{1,2,3}})$ is a closed and
convex set, for each rate pair of
$\overline{R}^{o}=(R_{1}^{o},R_{2}^{o})\notin\mathcal{R}^{SDPC}(\mathbf{S},\mathbf{N_{1,2,3}})$
which lies outside the set, there exists a separating hyperplane
$\{(R_{1},R_{2})|\gamma_{1}R_{1}+\gamma_{2}R_{2}=b\}$ where
$\gamma_{1}\geq0,\gamma_{1}\geq0,b\geq0$ and
\begin{IEEEeqnarray}{rl}\nonumber
\gamma_{1}R_{1}+\gamma_{2}R_{2}&\leq b~~~ \forall
(R_{1},R_{2})\in\mathcal{R}^{SDPC}(\mathbf{S},\mathbf{N_{1,2,3}})\\
\nonumber \gamma_{1}R_{1}^{o}+\gamma_{2}R_{2}^{o}&> b
\end{IEEEeqnarray}
The following theorem illustrates the relation between the ideas of
enhanced channel and a supporting hyperplane.
\begin{thm}\label{t5}
Consider a SAMBC with noise covariance matrices
$(\mathbf{N_{1}},\mathbf{N_{2}},\mathbf{N_{3}})$ and an average
transmit covariance matrix constraint $\mathbf{S}\succ 0$. Assume
that $\{(R_{1},R_{2})|\gamma_{1}R_{1}+\gamma_{2}R_{2}=b\}$ is a
supporting hyperplane of the rate region
$\mathcal{R}^{SDPC}(\pi_{I},\mathbf{S},\mathbf{N_{1,2,3}})$ such
that $0\leq \gamma_{1}\leq \gamma_{2}$, $\gamma_{2}>0$ and $b\geq0$.
Then, there exists an enhanced SADBC with noise covariance matrices
$(\mathbf{N_{1}^{'}},\mathbf{N_{2}^{'}},\mathbf{N_{3}^{'}})$ such
that the following properties hold.
\begin{enumerate}
  \item Enhancement:\\ \nonumber
   $\mathbf{N_{1}^{'}}\preceq \mathbf{N_{1}},~~~\mathbf{N_{2}^{'}}\preceq \mathbf{N_{2}},~~~\mathbf{N_{3}^{'}}= \mathbf{N_{3}},~~~\mathbf{N_{1}^{'}}\preceq \mathbf{N_{2}^{'}}$,
  \item Supporting hyperplane preservation: \\ \nonumber
  $\{(R_{1},R_{2})|\gamma_{1}R_{1}+\gamma_{2}R_{2}=b\}$ is also a
  supporting hyperplane of the rate region $\mathcal{R}^{G}(\mathbf{S},\mathbf{N_{1,2,3}^{'}})$
\end{enumerate}
\end{thm}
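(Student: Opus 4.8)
The statement to prove is Theorem \ref{t5}, which extends the channel-enhancement argument of Theorem \ref{t2} from the degraded (SADBC) setting to the aligned-but-not-degraded (SAMBC) setting. The plan is to reduce the claim about the supporting hyperplane of $\mathcal{R}^{SDPC}(\pi_I,\mathbf{S},\mathbf{N_{1,2,3}})$ to an application of Theorem \ref{t2}. The key observation is that, by the identity in the last remark of Section III, $\mathcal{R}^{SDPC}(\pi_I,\mathbf{S},\mathbf{N_{1,2,3}}) = \mathcal{R}^{G}(\mathbf{S},\mathbf{N_{1,2,3}})$, so the hyperplane $\gamma_1 R_1 + \gamma_2 R_2 = b$ supports the Gaussian rate region of a \emph{degraded} channel with noise ordering $\mathbf{N_1}\preceq\mathbf{N_2}$ (this ordering is why we need $\gamma_1\le\gamma_2$, i.e.\ $\mu=\gamma_2/\gamma_1\ge 1$, so that the optimizing order of superposition is the one encoded in $\pi_I$). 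The first step, then, is to show that a rate vector achieving the hyperplane bound is an optimal Gaussian rate vector in the sense of the Definition preceding Theorem \ref{t2}, with realizing matrices $(\mathbf{B_1^*},\mathbf{B_2^*})$ obtained as the maximizer of the weighted sum $R_1^G + \mu R_2^G$ of \eqref{op} with $\mu = \gamma_2/\gamma_1$ (treating the boundary case $\gamma_1 = 0$ separately, where only $R_2$ is weighted).

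**Main steps.** First I would set $\mu = \gamma_2/\gamma_1 \ge 1$ when $\gamma_1 > 0$, and note that maximizing $\gamma_1 R_1 + \gamma_2 R_2$ over $\mathcal{R}^{G}(\mathbf{S},\mathbf{N_{1,2,3}})$ is the same (up to the positive scalar $\gamma_1$) as the optimization problem \eqref{op}; its maximizer $(\mathbf{B_1^*},\mathbf{B_2^*})$ realizes an optimal Gaussian rate vector lying on the hyperplane. Second, I would invoke Theorem \ref{t2} applied to the SADBC with noise covariances $(\mathbf{N_1},\mathbf{N_2},\mathbf{N_3})$ and these realizing matrices: this yields an enhanced SADBC $(\mathbf{N_1'},\mathbf{N_2'},\mathbf{N_3'})$ satisfying the enhancement property (so part 1 of the current theorem is immediate), the proportionality property, and — crucially — the rate-and-optimality-preservation property, namely $R_k^G(\mathbf{B_{1,2}^*},\mathbf{N_{1,2,3}}) = R_k^G(\mathbf{B_{1,2}^*},\mathbf{N_{1,2,3}'})$ and $(\mathbf{B_1^*},\mathbf{B_2^*})$ is optimal for the enhanced channel's weighted sum. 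Third, I would translate "optimal for the weighted sum $R_1^G + \mu R_2^G$ in the enhanced channel" back into "the hyperplane $\gamma_1 R_1 + \gamma_2 R_2 = b$ supports $\mathcal{R}^G(\mathbf{S},\mathbf{N_{1,2,3}'})$": optimality means $\gamma_1 R_1^G(\mathbf{B_{1,2}},\mathbf{N'}) + \gamma_2 R_2^G(\mathbf{B_{1,2}},\mathbf{N'}) \le \gamma_1 R_1^G(\mathbf{B_{1,2}^*},\mathbf{N'}) + \gamma_2 R_2^G(\mathbf{B_{1,2}^*},\mathbf{N'})$ for all feasible $(\mathbf{B_1},\mathbf{B_2})$, and by rate preservation the right-hand side equals $\gamma_1 R_1^G(\mathbf{B_{1,2}^*},\mathbf{N}) + \gamma_2 R_2^G(\mathbf{B_{1,2}^*},\mathbf{N}) = b$, which is exactly the supporting-hyperplane condition for part 2. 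For the degenerate case $\gamma_1 = 0$, one maximizes $R_2^G$ alone; the same machinery of Theorem \ref{t2} applies with $\mu \to \infty$ (or one argues directly that the single-layer configuration is optimal), and the conclusion follows identically.

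**Expected obstacle.** The genuinely delicate point is not the algebra but making sure the reduction is legitimate: Theorem \ref{t2} is stated for an SADBC — a \emph{degraded} channel with $0\prec\mathbf{N_1}\preceq\mathbf{N_2}\preceq\mathbf{N_3}$ — whereas the SAMBC here need not have ordered noise covariances. I would need to argue that the constraint $\gamma_1 \le \gamma_2$ (equivalently $\mu \ge 1$) forces the relevant optimization to behave as if the channel were degraded in the order $1 \to 2$, so that $\mathcal{R}^{SDPC}(\pi_I,\mathbf{S},\mathbf{N_{1,2,3}})$ genuinely coincides with the Gaussian rate region of a degraded channel for the purposes of this hyperplane — essentially because along $\pi_I$ the SDPC rates \emph{are} the secret-superposition rates $R_k^G$, and the hyperplane only "sees" those. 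A second, lesser subtlety is the boundary case $\gamma_1 = 0$, which must be checked separately since $\mu = \gamma_2/\gamma_1$ is then undefined; and one should confirm $b \ge 0$ is consistent (it is, since the all-zero rate vector is always achievable). Once these are dispatched, the theorem is essentially a corollary of Theorem \ref{t2} together with the $\pi_I$ identity.
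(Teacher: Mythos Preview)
Your proposal is correct and follows essentially the same route as the paper: exploit the identity $\mathcal{R}^{SDPC}(\pi_I,\mathbf{S},\mathbf{N_{1,2,3}}) = \mathcal{R}^{G}(\mathbf{S},\mathbf{N_{1,2,3}})$ to recast the supporting hyperplane as the weighted-sum optimization (\ref{op}) with $\mu=\gamma_2/\gamma_1\ge 1$, then apply the enhancement machinery of Theorem~\ref{t2} (rate and optimality preservation give exactly the hyperplane preservation you describe). The obstacle you flag---that Theorem~\ref{t2} is stated for a degraded channel whereas the SAMBC need not be---is real, and the paper handles it just as you anticipate: rather than invoking Theorem~\ref{t2} as a black box, it says to ``follow the steps of the proof of Theorem~\ref{t2}'' and re-use the KKT analysis and the explicit enhancement formulas~(\ref{en}), which themselves do not require the original noise ordering (the ordering $\mathbf{N_1'}\preceq\mathbf{N_2'}$ of the \emph{enhanced} channel comes out of the KKT identity (\ref{kkt}) together with $\mu\ge 1$).
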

\begin{proof}
To prove this theorem, we can follow the steps of the proof of
Theorem \ref{t2}. Assume that the hyperplane
$\{(R_{1},R_{2})|\gamma_{1}R_{1}+\gamma_{2}R_{2}=b\}$ touches the
region $\mathcal{R}^{SDPC}(\pi_{I},\mathbf{S},\mathbf{N_{1,2,3}})$
at the pint $(R_{1}^{*},R_{2}^{*})$. Let $\mathbf{B_{1}^{*}},
\mathbf{B_{2}^{*}}$ be two positive semi-definite matrices such that
$\mathbf{B_{1}^{*}}+\mathbf{B_{2}^{*}}\preceq \mathbf{S}$ and such
that
\begin{IEEEeqnarray}{rl}\nonumber
R_{k}^{SDPC}(\pi_{I},\mathbf{B_{1,2}^{*}},\mathbf{N_{1,2,3}})=R_{k}^{*},~~~~~k=1,2
\end{IEEEeqnarray}
By definition of the supporting hyperplane, the scalar $b$ and the
matrices $(\mathbf{B_{1}^{*}},\mathbf{B_{2}^{*}})$ are the solution
of the following optimization problem:
\begin{IEEEeqnarray}{rl}\nonumber
&\max_{\mathbf{B_{1}},\mathbf{B_{2}}}\gamma_{1}R_{1}^{SDPC}(\pi_{I},\mathbf{B_{1,2}},\mathbf{N_{1,2,3}})+\gamma_{2}R_{2}^{SDPC}(\pi_{I},\mathbf{B_{1,2}},\mathbf{N_{1,2,3}})\\
\nonumber
&\hbox{s.t}~~~\mathbf{B_{1}}+\mathbf{B_{2}}\preceq\mathbf{S}~~~\mathbf{B_{k}}\succeq0~~~~~k=1,2
\end{IEEEeqnarray}
We define the noise covariance matrices of the enhanced SADBC as
(\ref{en}). Since for the permutation $\pi=\pi_{I}$ we have
$\mathcal{R}^{SDPC}(\pi_{I},\mathbf{S},\mathbf{N_{1,2,3}})=\mathcal{R}^{G}(\mathbf{S},\mathbf{N_{1,2,3}})$,
the supporting hyperplane
$\{(R_{1},R_{2})|\gamma_{1}R_{1}+\gamma_{2}R_{2}=b\}$ is also a
supporting hyperplane of the rate region
$\mathcal{R}^{G}(\mathbf{S},\mathbf{N_{1,2,3}^{'}})$.
\end{proof}
We can now use Theorem \ref{t5} and the capacity result of the SADBC
to prove that $\mathcal{R}^{SDPC}(\mathbf{S},\mathbf{N_{1,2,3}})$ is
indeed the capacity region of the SAMBC. The following theorem
formally states the main result of this section.
\begin{thm}\label{t6}
Consider a SAMBC with positive definite noise covariance matrices
$(\mathbf{N_{1}},\mathbf{N_{2}},\mathbf{N_{3}})$. Let
$\mathcal{C}(\mathbf{S},\mathbf{N_{1,2,3}})$ denote the capacity
region of the SAMBC under a covariance matrix constraint
$\mathbf{S}\succ 0$ .Then,
$\mathcal{C}(\mathbf{S},\mathbf{N_{1,2,3}})=\mathcal{R}^{SDPC}(\mathbf{S},\mathbf{N_{1,2,3}})$.
\end{thm}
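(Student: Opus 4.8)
The plan is to establish both inclusions $\mathcal{R}^{SDPC}(\mathbf{S},\mathbf{N_{1,2,3}})\subseteq\mathcal{C}(\mathbf{S},\mathbf{N_{1,2,3}})$ and $\mathcal{C}(\mathbf{S},\mathbf{N_{1,2,3}})\subseteq\mathcal{R}^{SDPC}(\mathbf{S},\mathbf{N_{1,2,3}})$. The achievability direction is already in hand: the SDPC Rate Region lemma shows that every pair in $\mathcal{R}^{SDPC}(\mathbf{S},\mathbf{N_{1,2,3}})$ is achievable, since it arises from the scheme of Theorem~\ref{t4} by the particular Gaussian choice of auxiliary variables in (\ref{eq15}), taking the convex closure over both permutations. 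So the work is entirely in the converse.

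For the converse I would argue by contradiction in the style of Theorem~\ref{t3}. Suppose some achievable rate vector $\overline{R}^{o}=(R_{1}^{o},R_{2}^{o})$ lies strictly outside the closed convex set $\mathcal{R}^{SDPC}(\mathbf{S},\mathbf{N_{1,2,3}})$. Then there is a separating hyperplane $\{(R_{1},R_{2})\mid\gamma_{1}R_{1}+\gamma_{2}R_{2}=b\}$ with $\gamma_{1},\gamma_{2}\geq0$, $b\geq0$, and $\gamma_{1}R_{1}^{o}+\gamma_{2}R_{2}^{o}>b\geq\max_{\mathcal{R}^{SDPC}}(\gamma_{1}R_{1}+\gamma_{2}R_{2})$. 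Without loss of generality (by symmetry of the construction in the two users, since the SDPC region is the convex hull over both permutations $\pi_I$ and $\{2,1\}$) assume $0\leq\gamma_{1}\leq\gamma_{2}$ with $\gamma_{2}>0$; this hyperplane is then in particular a supporting hyperplane of $\mathcal{R}^{SDPC}(\pi_{I},\mathbf{S},\mathbf{N_{1,2,3}})=\mathcal{R}^{G}(\mathbf{S},\mathbf{N_{1,2,3}})$. By Theorem~\ref{t5} there is an enhanced SADBC with noise covariances $(\mathbf{N_{1}^{'}},\mathbf{N_{2}^{'}},\mathbf{N_{3}^{'}})$ satisfying the enhancement property and for which the same hyperplane supports $\mathcal{R}^{G}(\mathbf{S},\mathbf{N_{1,2,3}^{'}})$. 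Since the enhanced channel is a genuine SADBC and the eavesdropper channel is unchanged, every code for the original SAMBC is a code for the enhanced SADBC with at least as good reliability and equivocation; hence $\overline{R}^{o}$ is achievable on the enhanced SADBC. But by Theorem~\ref{t3}, $\mathcal{R}^{G}(\mathbf{S},\mathbf{N_{1,2,3}^{'}})$ is the capacity region of the enhanced SADBC, so $\gamma_{1}R_{1}^{o}+\gamma_{2}R_{2}^{o}\leq b$, contradicting the separation. Therefore no such $\overline{R}^{o}$ exists and $\mathcal{C}(\mathbf{S},\mathbf{N_{1,2,3}})\subseteq\mathcal{R}^{SDPC}(\mathbf{S},\mathbf{N_{1,2,3}})$.

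I expect the main obstacle to be the case analysis on the signs and ordering of $\gamma_{1},\gamma_{2}$. Theorem~\ref{t5} as stated only covers $0\le\gamma_1\le\gamma_2$; one must separately handle $\gamma_{2}\le\gamma_{1}$ (by swapping the roles of the two users and invoking the $\pi=\{2,1\}$ version of the SDPC region, i.e.\ the reflected version of Theorem~\ref{t5}), and the degenerate cases where one of $\gamma_{1},\gamma_{2}$ vanishes (where the single-user secrecy capacity, a limiting case of the region, gives the bound directly). A secondary subtlety is verifying carefully that degradedness of the enhanced channel ($\mathbf{N_{1}^{'}}\preceq\mathbf{N_{2}^{'}}\preceq\mathbf{N_{3}^{'}}$) actually holds so that Theorem~\ref{t3} applies — this is exactly what the enhancement property of Theorem~\ref{t5} guarantees, together with $\mathbf{N_{3}^{'}}=\mathbf{N_{3}}$ and the ordering built into (\ref{en}). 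Once the reduction to the enhanced degraded channel is clean, the contradiction closes immediately from Theorem~\ref{t3}.
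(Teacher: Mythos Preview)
Your approach is essentially identical to the paper's: separate an external point from $\mathcal{R}^{SDPC}$ by a hyperplane, reduce (via the permutation symmetry) to a supporting hyperplane of $\mathcal{R}^{SDPC}(\pi_I,\mathbf{S},\mathbf{N_{1,2,3}})$, invoke Theorem~\ref{t5} to pass to an enhanced SADBC, and finish with Theorem~\ref{t3}. One small slip to fix: the hyperplane $\{\gamma_1R_1+\gamma_2R_2=b\}$ that supports $\mathcal{R}^{SDPC}$ need not \emph{support} the smaller set $\mathcal{R}^{SDPC}(\pi_I,\mathbf{S},\mathbf{N_{1,2,3}})$; the paper handles this by passing to $b'=\max_{(R_1,R_2)\in\mathcal{R}^{SDPC}(\pi_I,\mathbf{S},\mathbf{N_{1,2,3}})}\gamma_1R_1+\gamma_2R_2\le b$ and noting $\gamma_1R_1^{o}+\gamma_2R_2^{o}>b\ge b'$, which is exactly the adjustment Theorem~\ref{t5} requires.
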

\begin{proof}
To proof this theorem, we use Theorem \ref{t5} to show that for
every rate vector $\overline{R}^{o}$, which lies outside the region
$\mathcal{R}^{SDPC}(\mathbf{S},\mathbf{N_{1,2,3}})$, we can find an
enhanced SADBC, whose capacity region does not contain
$\overline{R}^{o}$. As the capacity region of the enhanced channel
outer bounds that of the original channel, therefore,
$\overline{R}^{o}$ cannot be an achievable rate vector.

Let $\overline{R}^{o}=(R_{1}^{o},R_{2}^{o})$ be a rate vector which
lies outside the region
$\mathcal{R}^{SDPC}(\mathbf{S},\mathbf{N_{1,2,3}})$. There exists a
supporting and separating hyperplane
$\{(R_{1},R_{2})|\gamma_{1}R_{1}+\gamma_{2}R_{2}=b\}$ where
$\gamma_{1}\geq 0$, $\gamma_{2}\geq0$, and at least one of the
$\gamma_{k}$'s is positive. Without loose of generality, we assume
that $\gamma_{2}\geq \gamma_{1}$. If that is not the case, we can
always reorder the indices of the users such that this assumption
will hold. By definition of the region
$\mathcal{R}^{SDPC}(\mathbf{S},\mathbf{N_{1,2,3}})$, we have,
\begin{IEEEeqnarray}{rl}\nonumber
\mathcal{R}^{SDPC}(\pi_{I},\mathbf{S},\mathbf{N_{1,2,3}})\subseteq
\mathcal{R}^{SDPC}(\mathbf{S},\mathbf{N_{1,2,3}}).
\end{IEEEeqnarray}
Note that, as $\{(R_{1},R_{2})|\gamma_{1}R_{1}+\gamma_{2}R_{2}=b\}$
is a supporting hyperplane of
$\mathcal{R}^{SDPC}(\mathbf{S},\mathbf{N_{1,2,3}})$, we can wire,
\begin{IEEEeqnarray}{rl}\nonumber
b^{'}&=\max_{(R_{1},R_{2})\in\mathcal{R}^{SDPC}(\pi_{I},\mathbf{S},\mathbf{N_{1,2,3}})}\gamma_{1}R_{1}+\gamma_{2}R_{2}\\
\nonumber &\leq \max_{(R_{1},R_{2})\in
\mathcal{R}^{SDPC}(\mathbf{S},\mathbf{N_{1,2,3}})}\gamma_{1}R_{1}+\gamma_{2}R_{2}=b.
\end{IEEEeqnarray}
Furthermore, we can also write,
\begin{IEEEeqnarray}{rl}\nonumber
\gamma_{1}R_{1}^{o}+\gamma_{2}R_{2}^{o}>b\geq b^{'}.
\end{IEEEeqnarray}
Therefore, the hyperplane of $\{(R_{1},R_{2})|\gamma_{1}R_{1}+\gamma_{2}R_{2}=b^{'}\}$
is a supporting and separating hyperplane for the rate region
$\mathcal{R}^{SDPC}(\pi_{I},\mathbf{S},\mathbf{N_{1,2,3}})$. By
Theorem \ref{t5}, we know that there exists an enhanced SADBC whose
Gaussian rate region
$\mathcal{R}^{G}(\mathbf{S},\mathbf{N_{1,2,3}^{'}})$ lies under the
supporting hyperplane and hence
$(R_{1}^{o},R_{2}^{o})\notin\mathcal{R}^{G}(\mathbf{S},\mathbf{N_{1,2,3}^{'}})$.
Therefore, $(R_{1}^{o},R_{2}^{o})$ must lies outside the capacity
region of the enhanced SADBC. To complete the proof, note that the
capacity region of the enhanced SADBC contains that of the original
channel and therefore, $(R_{1}^{o},R_{2}^{o})$ must lies outside the
capacity region of the original SAMBC. As this statement is true for
all rate vectors which lie outside
$\mathcal{R}^{SDPC}(\mathbf{S},\mathbf{N_{1,2,3}})$, therefore we
have
$\mathcal{C}(\mathbf{S},\mathbf{N_{1,2,3}})\subseteq\mathcal{R}^{SDPC}(\mathbf{S},\mathbf{N_{1,2,3}})$.
However, $\mathcal{R}^{SDPC}(\mathbf{S},\mathbf{N_{1,2,3}})$ is the
set of achievable rates and therefore,
$\mathcal{C}(\mathbf{S},\mathbf{N_{1,2,3}})=\mathcal{R}^{SDPC}(\mathbf{S},\mathbf{N_{1,2,3}})$.
\end{proof}
With the same discussion of \cite{7}, the result of SAMBC can extend
to the SGMBC and may be omitted here. The results of the secrecy
capacity region for two receiver can be extended for $m$ receivers
as follows.

\begin{coro}\label{c1}
Consider a SGMBC with $m$ receivers and one external eavesdropper.
Let $\mathbf{S}$ be a positive semi-definite matrix. Then the SDPC
rate region of
$\mathcal{R}^{SDPC}(\mathbf{S},\mathbf{N_{1,...,m},\mathbf{H_{1,...,m}}})$,
which is defined by the following convex closure is indeed the
secrecy capacity region of the SGMBC under a covariance constraint
$\mathbf{S}$.
\begin{equation}
\mathcal{R}^{SDPC}(\mathbf{S},\mathbf{N_{1,...,m}},\mathbf{H_{1,...,m}})=conv\left\{\bigcup_{\pi\in\prod}\mathcal{R}^{SDPC}(\pi,\mathbf{S},\mathbf{N_{1,...,m}},\mathbf{H_{1,...,m}})\right\}
\end{equation}
where $\prod$ is the collection of all possible permutations of the
ordered set $\{1,...,m\}$, $conv$ is the convex closure operator and
$\mathcal{R}^{SDPC}(\pi,\mathbf{S},\mathbf{N_{1,...,m}},\mathbf{H_{1,...,m}})$
is given as follows:
\begin{IEEEeqnarray}{rl}\nonumber
\mathcal{R}^{SDPC}(\pi,\mathbf{S},\mathbf{N_{1,...,m}},\mathbf{H_{1,...,m}})=
\left\{
                                                       \begin{array}{ll}
                                                          \left(R_{1},R_{2}\right)\big|  R_{k}=R_{\pi^{-1}(k)}^{SDPC}(\pi,\mathbf{B_{1,...,m}},\mathbf{N_{1,...,m}},\mathbf{H_{1,...,m}})~ k=1,...,m\\ \hbox{s.t}~~\mathbf{S}-\sum_{i=1}^{m}\mathbf{B}_{i}\succeq 0,~\mathbf{B}_{i}\succeq
                                                          0,~i=1,...,m
                                                       \end{array}
                                                     \right\}.
\end{IEEEeqnarray}
where
\begin{IEEEeqnarray}{rl}\label{eq17}
R_{\pi^{-1}(k)}^{SDPC}\left(\pi,\mathbf{B_{1,...,m}},\mathbf{N_{1,...,m}},\mathbf{H_{1,...,m}}\right)&=\frac{1}{2}\big[\log\frac{\left|\mathbf{H_{k}}\left(\sum_{i=1}^{\pi^{-1}(k)}\mathbf{B_{\pi(i)}}\right)\mathbf{H_{k}^{\dag}}+\mathbf{N_{k}}\right|}{\left|\mathbf{H_{k}}\left(\sum_{i=1}^{\pi^{-1}(k)-1}\mathbf{B_{\pi(i)}}\right)\mathbf{H_{k}^{\dag}}+\mathbf{N_{k}}\right|}\\
\nonumber&-\frac{1}{2}\log\frac{\left|\mathbf{H_{3}}\left(\sum_{i=1}^{\pi^{-1}(k)}\mathbf{B_{\pi(i)}}\right)\mathbf{H_{3}^{\dag}}+\mathbf{N_{3}}\right|}{\left|\mathbf{H_{3}}\left(\sum_{i=1}^{\pi^{-1}(k)-1}\mathbf{B_{\pi(i)}}\right)\mathbf{H_{3}^{\dag}}+\mathbf{N_{3}}\right|}\big]^{+}
\end{IEEEeqnarray}

\end{coro}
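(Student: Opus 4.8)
The plan is to reduce the $m$-receiver case to the two-receiver SAMBC result (Theorem \ref{t6}) by exactly the device that Weingarten et al.\ use in \cite{7}: first handle the \emph{aligned} case with $m$ receivers, then lift to the general (non-aligned) case by the standard limiting argument that converts $\mathbf{H}_i\ne\mathbf{I}$ into an aligned channel with singular noise covariances, approximated by nonsingular ones. So the first step is to state the aligned $m$-receiver analogue: the achievable region is the SDPC region obtained by taking, for each permutation $\pi$ of $\{1,\dots,m\}$, Gaussian inputs $\mathbf{b}_1,\dots,\mathbf{b}_m$ with covariances $\mathbf{B}_{\pi(1)},\dots,\mathbf{B}_{\pi(m)}$ encoded successively via Gelfand--Pinsker (dirty-paper) binning against the previously chosen layers, together with random binning against the eavesdropper; the achievability of this region follows by the same equivocation computation as in Theorem \ref{t4} and its appendix, generalized from two auxiliaries to $m$. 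This gives the ``$\subseteq$'' direction of the claimed identity (the region is achievable) with no new ideas.

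For the converse I would mimic the proof of Theorem \ref{t6} almost verbatim. Fix a rate vector $\overline{R}^{o}$ outside $\mathcal{R}^{SDPC}(\mathbf{S},\mathbf{N}_{1,\dots,m})$; by convexity there is a supporting/separating hyperplane with normal $(\gamma_1,\dots,\gamma_m)$, $\gamma_i\ge 0$. Reorder the users so that $\gamma_1\le\gamma_2\le\cdots\le\gamma_m$; then on the face of the SDPC region that the hyperplane touches, the relevant permutation is the identity, so the point is attained (in a time-sharing sense) within $\mathcal{R}^{SDPC}(\pi_I,\mathbf{S},\mathbf{N}_{1,\dots,m})$, which for the identity permutation coincides with the ``Gaussian-superposition'' region. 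The key analytic step is the $m$-receiver generalization of Theorem \ref{t2}/Theorem \ref{t5}: given realizing covariance matrices $\mathbf{B}_1^{*},\dots,\mathbf{B}_m^{*}$ and a Lagrange-multiplier vector for the weighted-sum-rate problem, one constructs an enhanced \emph{degraded} SAMBC with noise covariances $\mathbf{N}_1'\preceq\cdots\preceq\mathbf{N}_m'$, $\mathbf{N}_3'=\mathbf{N}_3$ (eavesdropper unchanged), such that the KKT conditions are preserved, the rates $R_k^{G}$ are preserved for each $k$, and consecutive enhanced legitimate channels are pairwise proportional along the chain. One then runs Bergmans' entropy-power-inequality argument (as in Theorem \ref{t3}) layer by layer up the degradation chain to contradict $\overline{R}^{o}$ being achievable on the enhanced channel, hence on the original channel. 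Finally, since the enhanced channel's secrecy capacity region outer-bounds the original one, $\overline{R}^{o}$ is not achievable; as this holds for every point outside the SDPC region, we get ``$\supseteq$'' for the capacity region, completing the identity.

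The main obstacle is the enhancement/proportionality step for $m>2$. With two legitimate receivers there is a single matrix $\mathbf{A}$ interpolating between $\mathbf{N}_1'$ and $\mathbf{N}_3'$ and a single scalar $\alpha=1/(\mu-1)$ making the KKT identity collapse to a proportionality relation; with $m$ layers one needs a whole chain of interpolating matrices $\mathbf{A}_k$ and scalars $\alpha_k$ simultaneously consistent with (i) the ordering $\mathbf{N}_1'\preceq\cdots\preceq\mathbf{N}_m'\preceq\mathbf{N}_3'$, (ii) the complementary-slackness relations $\mathbf{B}_k^{*}\mathbf{O}_k=0$ that made the algebraic simplifications in \eqref{eq8} and \eqref{eq6} work, and (iii) the pairwise proportionality needed to invoke Costa's EPI at each stage of Bergmans' argument. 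Verifying that a single choice of $(\mathbf{N}_i')$ of the form \eqref{en} (extended to $m$ layers, with $\mathbf{N}_k'=((\sum_{i<k}\mathbf{B}_i^{*}+\mathbf{N}_k)^{-1}+\tfrac1\mu\mathbf{O}_k)^{-1}-\sum_{i<k}\mathbf{B}_i^{*}$) makes all of these hold at once is the crux; everything else is a routine, if lengthy, transcription of the two-user proof, and following \cite{7} one expects it to go through, whence this part may reasonably be stated as an extension and the detailed bookkeeping omitted.
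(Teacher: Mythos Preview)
Your proposal is correct and follows precisely the route the paper intends: the paper itself gives no proof for this corollary beyond the one-line remark ``With the same discussion of \cite{7}, the result of SAMBC can extend to the SGMBC and may be omitted here,'' and then states the $m$-receiver version without argument. Your sketch---achievability via the $m$-layer SDPC scheme, converse via separating hyperplanes, reordering so $\gamma_1\le\cdots\le\gamma_m$, constructing an enhanced degraded channel with the eavesdropper unchanged, and a layer-by-layer Bergmans/EPI contradiction, followed by the aligned-to-general limiting reduction of \cite{7}---is exactly the omitted elaboration, and your identification of the chain-proportionality bookkeeping as the only nontrivial step is accurate.
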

\section{Multiple-Input Single-Outputs Multiple Eavesdropper (MISOME) Channel}
In this section we investigate practical characterizations for the
specific scenario in which the transmitter and the eavesdropper have
multiple antennas, while both intended receivers have a single
antenna. We refer to this configuration as the MISOME case. The
significance of this model is when a base station wishes to
broadcast secure information for small mobile units. In this
scenario small mobile units have single antenna while the base
station and the eavesdropper can afford multiple antennas. We can
rewrite the signals received by the destination and the eavesdropper
for the MISOME channel as follows.
\begin{IEEEeqnarray}{lr}\nonumber
y_{1}=\mathbf{h}_{1}^{\dag}\mathbf{x}+n_{1},\\\label{misome}
y_{2}=\mathbf{h}_{2}^\dag\mathbf{x}+n_{2},\\
\mathbf{z}=\mathbf{H}_{3}\mathbf{x}+\mathbf{n_{3}}, \nonumber
\end{IEEEeqnarray}
where $\mathbf{h}_{1}$ and $\mathbf{h}_{2}$ are fixed, real gain
matrices which model the channel gains between the transmitter and
the legitimate receivers. These are matrices of size $t \times 1$.
The channel state information again is assumed to be known perfectly
at the transmitter and at all receivers. Here, the superscript
$\dag$ denotes the Hermitian transpose of a vector. Without lost of
generality, we assume that $n_{1}$ and $n_{2}$ are i.i.d real
Gaussian random variables with zero means unit covariances, i.e.,
$n_{1},n_{2}\sim\mathcal{N}(0,1)$. Furthermore, we assume that
$\mathbf{n_{3}}$ is a Gaussian random vector with zero mean and
covariance matrix $\mathbf{I}$. In this section, we assume that the
input $\mathbf{x}$ satisfies a total power constraint of $P$, i.e.,
\begin{equation}\nonumber
Tr\{E(\mathbf{x}\mathbf{x}^{T})\}\leq P
\end{equation}
Before we state our results for the MISOME channel, we need to
review some properties of generalized eigenvalues and eigenvectors.
For more details of this topic, see, e.g.,\cite{13}.
\begin{defi}(Generalized eigenvalue-eigenvector)
Let $\mathbf{A}$ be a Hermitian matrix and $\mathbf{B}$ be a
positive definite matrix. Then, $(\lambda,\mbox{\boldmath$\psi$})$
is a generalized eigenvalue-eigenvector pair if it satisfy the
following equation.
\begin{equation}\nonumber
\mathbf{A}\mbox{\boldmath$\psi$}=\lambda
\mathbf{B}\mbox{\boldmath$\psi$}
\end{equation}
\end{defi}
Note that as $\mathbf{B}$ is invertible, the generalized eigenvalues
and eigenvectors of the pair $(\mathbf{A},\mathbf{B})$ are the
regular eigenvalues and eigenvectors of the matrix
$\mathbf{B}^{-1}\mathbf{A}$. The following Lemma, describes the
variational characterization of the generalized
eigenvalue-eigenvector pair.
\begin{lem}(Variational Characterization)
Let $r(\mbox{\boldmath$\psi$})$ be the Rayleigh quotient defined as
the following.
\begin{equation}\nonumber
r(\mbox{\boldmath$\psi$})=\frac{\mbox{\boldmath$\psi^{\dag}$}\mathbf{A}\mbox{\boldmath$\psi$}}{\mbox{\boldmath$\psi^{\dag}$}\mathbf{B}\mbox{\boldmath$\psi$}}
\end{equation}
Then, the generalized eigenvectors of $(\mathbf{A},\mathbf{B})$ are
the stationary point solution of the Rayleigh quotient
$r(\mbox{\boldmath$\psi$})$. Specifically, the largest generalized
eigenvalue $\lambda_{\max}$ is the maximum of the Rayleigh quotient
$r(\mbox{\boldmath$\psi$})$ and the optimum is attained by the
eigenvector $\mbox{\boldmath$\psi$}_{\max}$ which is corresponded to
$\lambda_{\max}$, i.e.,
\begin{equation}\nonumber
\max_{\mbox{\boldmath$\psi$}}r(\mbox{\boldmath$\psi$})=\frac{\mbox{\boldmath$\psi^{\dag}$}_{\max}\mathbf{A}\mbox{\boldmath$\psi$}_{\max}}{\mbox{\boldmath$\psi^{\dag}$}_{\max}\mathbf{B}\mbox{\boldmath$\psi$}_{\max}}=\lambda_{\max}
\end{equation}
\end{lem}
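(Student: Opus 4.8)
The plan is to reduce the generalized eigenproblem for the pencil $(\mathbf{A},\mathbf{B})$ to an ordinary Hermitian eigenproblem by a symmetric change of variables, and then invoke the classical Rayleigh--Ritz theory. Since $\mathbf{B}\succ 0$, it has a unique Hermitian positive definite square root $\mathbf{B}^{1/2}$, which is invertible; I would substitute $\boldsymbol{\phi}=\mathbf{B}^{1/2}\boldsymbol{\psi}$, a bijection of $\mathbb{C}^{t}\setminus\{0\}$ (in the real model of this section, of $\mathbb{R}^{t}\setminus\{0\}$) onto itself. Under this substitution the Rayleigh quotient becomes
\begin{equation}\nonumber
r(\boldsymbol{\psi})=\frac{\boldsymbol{\phi}^{\dag}\widetilde{\mathbf{A}}\boldsymbol{\phi}}{\boldsymbol{\phi}^{\dag}\boldsymbol{\phi}},\qquad \widetilde{\mathbf{A}}:=\mathbf{B}^{-1/2}\mathbf{A}\mathbf{B}^{-1/2},
\end{equation}
and $\widetilde{\mathbf{A}}$ is Hermitian because $\mathbf{A}$ is and $\mathbf{B}^{-1/2}$ is Hermitian. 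Note also that $r$ is homogeneous of degree zero, so without loss of generality the optimization may be restricted to the compact unit sphere $\{\boldsymbol{\phi}:\boldsymbol{\phi}^{\dag}\boldsymbol{\phi}=1\}$, on which a maximizer exists.

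Next I would record the exact correspondence between the two eigenproblems: multiplying $\widetilde{\mathbf{A}}\boldsymbol{\phi}=\lambda\boldsymbol{\phi}$ on the left by $\mathbf{B}^{1/2}$ and writing $\boldsymbol{\psi}=\mathbf{B}^{-1/2}\boldsymbol{\phi}$ gives $\mathbf{A}\boldsymbol{\psi}=\lambda\mathbf{B}^{1/2}\boldsymbol{\phi}=\lambda\mathbf{B}\boldsymbol{\psi}$, and the steps reverse. Hence $(\lambda,\boldsymbol{\psi})$ is a generalized eigenpair of $(\mathbf{A},\mathbf{B})$ if and only if $(\lambda,\mathbf{B}^{1/2}\boldsymbol{\psi})$ is an ordinary eigenpair of the Hermitian matrix $\widetilde{\mathbf{A}}$; in particular all generalized eigenvalues are real, and $\widetilde{\mathbf{A}}$ admits an orthonormal eigenbasis $\mathbf{e}_{1},\dots,\mathbf{e}_{t}$ with eigenvalues $\lambda_{1}\geq\cdots\geq\lambda_{t}$.

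For the stationarity claim I would differentiate. Computing the gradient of $\boldsymbol{\phi}\mapsto\boldsymbol{\phi}^{\dag}\widetilde{\mathbf{A}}\boldsymbol{\phi}/\boldsymbol{\phi}^{\dag}\boldsymbol{\phi}$ and setting it to zero yields $\widetilde{\mathbf{A}}\boldsymbol{\phi}=r(\boldsymbol{\phi})\,\boldsymbol{\phi}$, i.e. a stationary point of the quotient is exactly an eigenvector of $\widetilde{\mathbf{A}}$ whose eigenvalue equals the value of the quotient there; since $\boldsymbol{\psi}\mapsto\mathbf{B}^{1/2}\boldsymbol{\psi}$ is a linear bijection, the chain rule transports stationarity verbatim, so by the correspondence above the stationary points of $r(\boldsymbol{\psi})$ are precisely the generalized eigenvectors, with stationary value the associated generalized eigenvalue. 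Finally, expanding an arbitrary $\boldsymbol{\phi}=\sum_{i}c_{i}\mathbf{e}_{i}$ gives $r=\big(\sum_{i}\lambda_{i}|c_{i}|^{2}\big)/\big(\sum_{i}|c_{i}|^{2}\big)$, a convex combination of the $\lambda_{i}$, whence $r\leq\lambda_{1}$ with equality iff $\boldsymbol{\phi}$ lies in the top eigenspace; pulling back, $\max_{\boldsymbol{\psi}}r(\boldsymbol{\psi})=\lambda_{1}=\lambda_{\max}$, attained at $\boldsymbol{\psi}_{\max}=\mathbf{B}^{-1/2}\mathbf{e}_{1}$, which by the correspondence is the generalized eigenvector of $(\mathbf{A},\mathbf{B})$ associated with $\lambda_{\max}$.

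There is no serious obstacle here; the only step needing care is the differentiation, where in the genuinely complex case one treats $\boldsymbol{\phi}$ and $\boldsymbol{\phi}^{\dag}$ as independent variables (Wirtinger calculus) to get the clean identity $\widetilde{\mathbf{A}}\boldsymbol{\phi}=r(\boldsymbol{\phi})\boldsymbol{\phi}$ without spurious factors, though in the real channel model of this section this is just $\nabla_{\boldsymbol{\phi}}(\boldsymbol{\phi}^{T}\widetilde{\mathbf{A}}\boldsymbol{\phi})=2\widetilde{\mathbf{A}}\boldsymbol{\phi}$ for symmetric $\widetilde{\mathbf{A}}$; one should also keep track of the degree-zero homogeneity of $r$, which is what permits restricting to the unit sphere so that compactness supplies the maximizer.
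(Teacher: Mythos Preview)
Your proof is correct and follows the standard textbook route: symmetrize via $\boldsymbol{\phi}=\mathbf{B}^{1/2}\boldsymbol{\psi}$, reduce to the ordinary Rayleigh--Ritz theorem for the Hermitian matrix $\widetilde{\mathbf{A}}=\mathbf{B}^{-1/2}\mathbf{A}\mathbf{B}^{-1/2}$, and transport stationarity and the maximum back through the bijection. The paper itself does not prove this lemma; it is stated as a known fact with a reference to a linear algebra text, so your argument is in fact more than the paper supplies and is exactly the kind of proof one would find in such a reference.
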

Now consider the MISOME channel of (\ref{misome}). Assume that
$0\leq\alpha\leq1$ and $P$ are fixed. Let define the following
matrices for this channel.
\begin{IEEEeqnarray}{rl}\nonumber
\mathbf{A_{1,1}}&=\mathbf{I}+\alpha
P\mathbf{h_{1}}\mathbf{h_{1}^{\dag}},\\ \nonumber
\mathbf{B_{1,1}}&=\mathbf{I}+\alpha
P\mathbf{H_{3}^{\dag}}\mathbf{H_{3}}
\end{IEEEeqnarray}
Suppose that $(\lambda_{(1,1)\max},\mbox{\boldmath$\psi$}_{1\max})$
is the largest generalized eigenvalue and the corresponding
eigenvector pair of the pencil
$(\mathbf{A_{1,1}},\mathbf{B_{1,1}})$. We furthermore define the
following matrices for the MISOME channel.
\begin{IEEEeqnarray}{rl}\nonumber
\mathbf{A_{2,2}}&=\mathbf{I}+\frac{(1-\alpha)P}{1+\alpha P|\mathbf{h_{2}^{\dag}}\mbox{\boldmath$\psi$}_{1\max}|^{2}}\mathbf{h_{2}}\mathbf{h_{2}^{\dag}},\\
\nonumber
\mathbf{B_{2,2}}&=\mathbf{I}+(1-\alpha)P\mathbf{H_{3}^{\dag}}\left(\mathbf{I}+\alpha
P\mathbf{H_{3}}\mbox{\boldmath$\psi$}_{1\max}\mbox{\boldmath$\psi$}_{1\max}^{\dag}\mathbf{H_{3}}\right)^{-1}\mathbf{H_{3}}
\end{IEEEeqnarray}
Assume that $(\lambda_{(2,2)\max},\mbox{\boldmath$\psi$}_{2\max})$
is the largest generalized eigenvalue and the corresponding
eigenvector pair of the pencil
$(\mathbf{A_{2,2}},\mathbf{B_{2,2}})$. Moreover, consider the
following matrices for this channel.
\begin{IEEEeqnarray}{rl}\nonumber
\mathbf{A_{2,1}}&=\mathbf{I}+(1-\alpha)
P\mathbf{h_{2}}\mathbf{h_{2}^{\dag}},\\ \nonumber
\mathbf{B_{2,1}}&=\mathbf{I}+(1-\alpha)
P\mathbf{H_{3}^{\dag}}\mathbf{H_{3}},\\
\nonumber\mathbf{A_{1,2}}&=\mathbf{I}+\frac{\alpha P}{1+(1-\alpha) P|\mathbf{h_{1}^{\dag}}\mbox{\boldmath$\psi$}_{3\max}|^{2}}\mathbf{h_{1}}\mathbf{h_{1}^{\dag}},\\
\nonumber \mathbf{B_{1,2}}&=\mathbf{I}+\alpha
P\mathbf{H_{3}^{\dag}}\left(\mathbf{I}+(1-\alpha)
P\mathbf{H_{3}}\mbox{\boldmath$\psi$}_{3\max}\mbox{\boldmath$\psi$}_{3\max}^{\dag}\mathbf{H_{3}}\right)^{-1}\mathbf{H_{3}},
\end{IEEEeqnarray}
where we assume that
$(\lambda_{(2,1)\max},\mbox{\boldmath$\psi$}_{3\max})$, and
$(\lambda_{(1,2)\max},\mbox{\boldmath$\psi$}_{4\max})$ are the
largest generalized eigenvalue and the corresponding eigenvector
pair of the pencils $(\mathbf{A_{2,1}},\mathbf{B_{2,1}})$, and
$(\mathbf{A_{1,2}},\mathbf{B_{1,2}})$ respectively. The following
theorem then characterizes the capacity region of the MISOME channel
under a total power constraint $P$ based on the above parameters.
\begin{thm}
Let $\mathcal{C}^{MISOME}$ denote the secrecy capacity region of the
the MISOME channel under an average total power constraint $P$. Let
$\prod$ be the collection of all possible permutations of the
ordered set $\{1,2\}$ and $conv$ be the convex closure operator,
then $\mathcal{C}^{MISOME}$ is given as follows.
\begin{equation}\nonumber
\mathcal{C}^{MISOME}=conv\left\{\bigcup_{\pi\in\prod}\mathcal{R}^{MISOME}(\pi)\right\}
\end{equation}
 where $\mathcal{R}^{MISOME}(\pi)$ is given as follows.
\begin{IEEEeqnarray}{rl}\nonumber
\mathcal{R}^{MISOME}(\pi)=\bigcup_{0\leq \alpha \leq
1}\mathcal{R}^{MISOME}(\pi,\alpha)
\end{IEEEeqnarray}
where $\mathcal{R}^{MISOME}(\pi,\alpha)$ is the set of all
$(R_{1},R_{2})$ satisfying the following condition.
\begin{IEEEeqnarray}{rl}\nonumber
    R_{k}&\leq \frac{1}{2}\left[\log\lambda_{(k,\pi^{-1}(k))\max}\right]^{+}, ~~~~~~~~
    k=1,2.
\end{IEEEeqnarray}
  \end{thm}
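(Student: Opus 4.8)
The plan is to \emph{not} re-derive an information-theoretic converse from scratch, but to obtain the MISOME region as a specialization of the SGMBC secrecy capacity region already established in Theorem~\ref{t6} (equivalently Corollary~\ref{c1} with $m=2$). Setting $\mathbf{H_1}=\mathbf{h_1^\dag}$, $\mathbf{H_2}=\mathbf{h_2^\dag}$, $\mathbf{N_1}=\mathbf{N_2}=1$, $\mathbf{N_3}=\mathbf{I}$, and passing from a covariance constraint $\mathbf{S}$ to the total-power constraint by taking the union over all $\mathbf{S}$ with $\mathrm{Tr}(\mathbf{S})\le P$ (which is the same as the union over all $\mathbf{B_1},\mathbf{B_2}\succeq0$ with $\mathrm{Tr}(\mathbf{B_1}+\mathbf{B_2})\le P$), Corollary~\ref{c1} gives $\mathcal{C}^{MISOME}=conv\{\bigcup_{\pi}\mathcal{R}^{SDPC}(\pi)\}$. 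So the entire task reduces to re-expressing, for each of the two permutations, the union $\mathcal{R}^{SDPC}(\pi)$ of the rate pairs in (\ref{eq17}) in the claimed form $\bigcup_{\alpha\in[0,1]}\mathcal{R}^{MISOME}(\pi,\alpha)$.

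The core computation is to simplify (\ref{eq17}) in the MISOME geometry. For a rank-one input $\mathbf{B}=q\,\mathbf{u}\mathbf{u}^\dag$ with $\|\mathbf{u}\|=1$ one has $\mathbf{h_k^\dag B h_k}=q|\mathbf{h_k^\dag u}|^2$, and by the matrix determinant lemma $|\mathbf{I}+\mathbf{H_3 B H_3^\dag}|=1+q\,\mathbf{u}^\dag\mathbf{H_3^\dag H_3 u}$; combining this with the identity $|\mathbf{I}+\mathbf{M}+\mathbf{N}|=|\mathbf{I}+\mathbf{M}|\,|\mathbf{I}+(\mathbf{I}+\mathbf{M})^{-1}\mathbf{N}|$ to peel the first-layer interference off the second-layer term collapses each layer's rate to $\tfrac12[\log(\text{a generalized Rayleigh quotient})]^+$. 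For the permutation $\pi_I=\{1,2\}$, the first-layer (user~1) rate depends only on $\mathbf{B_1}$, and for $\mathbf{B_1}=\alpha P\,\mathbf{u}\mathbf{u}^\dag$ it equals $\tfrac12[\log(\mathbf{u}^\dag\mathbf{A_{1,1}u}/\mathbf{u}^\dag\mathbf{B_{1,1}u})]^+$; maximizing over $\mathbf{u}$ via the variational characterization of the generalized eigenvalue (the Rayleigh-quotient lemma above) gives $\tfrac12[\log\lambda_{(1,1)\max}]^+$, attained at $\mathbf{u}=\mbox{\boldmath$\psi$}_{1\max}$. Freezing $\mathbf{B_1}=\alpha P\,\mbox{\boldmath$\psi$}_{1\max}\mbox{\boldmath$\psi$}_{1\max}^\dag$ and repeating the determinant computation for the second-layer rate produces exactly the pencil $(\mathbf{A_{2,2}},\mathbf{B_{2,2}})$ and the value $\tfrac12[\log\lambda_{(2,2)\max}]^+$ at $\mbox{\boldmath$\psi$}_{2\max}$; running the same steps with the two users swapped yields the pencils $(\mathbf{A_{2,1}},\mathbf{B_{2,1}})$ and $(\mathbf{A_{1,2}},\mathbf{B_{1,2}})$ for the permutation $\{2,1\}$.

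Achievability is then immediate: the corner $(\tfrac12[\log\lambda_{(1,1)\max}]^+,\tfrac12[\log\lambda_{(2,2)\max}]^+)$, and hence all of the rectangle $\mathcal{R}^{MISOME}(\pi_I,\alpha)$ since the SDPC rate region is closed downward, is realized by the rank-one Gaussian covariances above; taking the union over $\alpha\in[0,1]$ and then the convex closure over the two permutations shows $conv\{\bigcup_\pi\mathcal{R}^{MISOME}(\pi)\}\subseteq\mathcal{C}^{MISOME}$. For the converse I would take an arbitrary feasible $(\mathbf{B_1},\mathbf{B_2})$, set $\alpha=\mathrm{Tr}(\mathbf{B_1})/P$, and bound the first-layer rate by $\tfrac12[\log\lambda_{(\cdot,1)\max}]^+$ — this needs the standard MISOME-wiretap rank-reduction, i.e.\ that the maximum of the first-layer rate over \emph{all} $\mathbf{B_1}$ of trace $\alpha P$ is still attained by a rank-one matrix beamforming along the appropriate generalized eigenvector — and the second-layer rate by $\tfrac12[\log\lambda_{(\cdot,2)\max}]^+$, which gives $(R_1,R_2)\in\mathcal{R}^{MISOME}(\pi,\alpha)$ and therefore $\mathcal{C}^{MISOME}\subseteq conv\{\bigcup_\pi\mathcal{R}^{MISOME}(\pi)\}$.

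I expect the second-layer bound in the converse to be the genuine obstacle. It is \emph{not} automatic that, for a fixed power split $\alpha$, the first-layer covariance that maximizes $R_1$ is simultaneously the one that maximizes the best achievable $R_2$: pushing $\mathbf{B_1}$ toward $\mbox{\boldmath$\psi$}_{1\max}$ shrinks $\mathbf{A_{2,2}}$ (less post-cancellation signal power at user~2) but also shrinks $\mathbf{B_{2,2}}$ (a more degraded eavesdropper), so there is a real tension, and the Pareto boundary of $\mathcal{R}^{SDPC}(\pi_I)$ is a priori traced out by "non-greedy" first-layer covariances. The plan for closing this gap is to argue that no such point can escape $conv\{\bigcup_{\pi,\alpha}\mathcal{R}^{MISOME}(\pi)\}$: either via a supporting-hyperplane argument showing that for weights $\gamma_1\le\gamma_2$ the maximizer of $\gamma_1R_1+\gamma_2R_2$ over $\mathcal{R}^{SDPC}(\pi_I)$ is forced onto a greedy point (the higher-weighted user being encoded in the interference-free layer), together with the symmetric statement for $\{2,1\}$, or via a monotonicity argument for $\lambda_{(2,2)\max}$ as a function of $\mathbf{B_1}$ combined with the fact that the union over $\alpha$ already includes the extreme configurations. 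The determinant algebra of the earlier steps is routine; it is this dominance/bookkeeping step that I would expect to require the most care.
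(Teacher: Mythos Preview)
Your achievability argument---specializing Corollary~\ref{c1} to the MISOME geometry, choosing rank-one covariances $\mathbf{B_1}=\alpha P\,\mbox{\boldmath$\psi$}_{1\max}\mbox{\boldmath$\psi$}_{1\max}^\dag$ and $\mathbf{B_2}=(1-\alpha)P\,\mbox{\boldmath$\psi$}_{2\max}\mbox{\boldmath$\psi$}_{2\max}^\dag$, and reducing the SDPC rate expressions to generalized Rayleigh quotients via the determinant lemma---is exactly what the paper does. The paper's proof consists of precisely this computation for both permutations and nothing more; it opens with ``This theorem is a special case of Theorem~\ref{t6} and Corollary~\ref{c1}'' and then only verifies that the rank-one beamforming choices produce the claimed eigenvalue rates.

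You are therefore correct that the converse direction---showing that \emph{every} point of $\mathcal{R}^{SDPC}(\pi,\mathbf{S},\mathbf{N_{1,2,3}})$ under the trace constraint lies in $\bigcup_\alpha\mathcal{R}^{MISOME}(\pi,\alpha)$, i.e.\ that rank-one beamforming with the greedy first-layer direction is Pareto-optimal---is not addressed, and your diagnosis of the second-layer tension (the first-layer eigenvector that maximizes $R_1$ need not be the one that leaves the most favorable effective pencil for $R_2$) is exactly the missing piece. The paper does not supply this argument either: it silently treats the specific rank-one choices as if they exhaust the SDPC region, so the equality in the theorem statement is asserted rather than proved. Your proposed routes (a weighted-sum/supporting-hyperplane argument forcing the higher-weighted user into the interference-free layer, or a monotonicity analysis of $\lambda_{(2,2)\max}$ in $\mathbf{B_1}$) are reasonable attacks, but neither is carried out in the paper, and you should not expect to find the resolution there.
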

\begin{proof}
This theorem is a special case of Theorem \ref{t6} and corollary
\ref{c1}. First assume that the permutation $\pi=\pi_{I}=\{1,2\}$.
In the SDPC achievable rate region of (\ref{eq17}), we choose the
covariance matrices $\mathbf{B_{1}}$ and $\mathbf{B_{2}}$ are as
follows.
\begin{IEEEeqnarray}{rl}\nonumber
\mathbf{B_{1}}&=\alpha
P\mbox{\boldmath$\psi$}_{1\max}\mbox{\boldmath$\psi$}_{1\max}^{\dag},\\
\nonumber \mathbf{B_{2}}&=(1-\alpha)
P\mbox{\boldmath$\psi$}_{2\max}\mbox{\boldmath$\psi$}_{2\max}^{\dag}.
\end{IEEEeqnarray}
In other words, the channel input $\mathbf{x}$ is separated into two
vectors $\mathbf{b_{1}}$ and $\mathbf{b_{2}}$ such that
\begin{IEEEeqnarray}{rl}\nonumber
\mathbf{x}&=\mathbf{b_{1}}+\mathbf{b_{2}},\\ \nonumber
\mathbf{b_{1}}&=u_{1}\mbox{\boldmath$\psi$}_{1\max},\\
\nonumber \mathbf{b_{2}}&=u_{2}\mbox{\boldmath$\psi$}_{2\max}.
\end{IEEEeqnarray}
where $u_{1}\sim\mathcal{N}(0,\alpha P)$,
$u_{2}\sim\mathcal{N}(0,(1-\alpha) P)$, and $0\leq \alpha \leq 1$.
Using these parameters, the region of
$\mathcal{R}^{SDPC}(\pi_{I},\mathbf{S},\mathbf{N_{1,2,3}})$ becomes
as follows.
\begin{IEEEeqnarray}{rl}\nonumber
R_{1}&\leq\frac{1}{2}\left[\log\left|1+\mathbf{h_{1}^{\dag}}\mathbf{B_{1}}\mathbf{h_{1}}\right|-\frac{1}{2}\log\left|\mathbf{I}+\mathbf{H_{3}}\mathbf{B_{1}}\mathbf{H_{3}^{\dag}}\right|\right]^{+},\\
\nonumber &=\frac{1}{2}\left[\log\left|(1+\alpha
P\mathbf{h_{1}^{\dag}}\mbox{\boldmath$\psi$}_{1\max}\mbox{\boldmath$\psi$}_{1\max}^{\dag}\mathbf{h_{1}})\right|-\frac{1}{2}\log\left|(\mathbf{I}+\alpha
P\mathbf{H_{3}}\mbox{\boldmath$\psi$}_{1\max}\mbox{\boldmath$\psi$}_{1\max}^{\dag}\mathbf{H_{3}^{\dag}})\right|\right]^{+},\\
\nonumber
&=\frac{1}{2}\left[\log\frac{\mbox{\boldmath$\psi$}_{1\max}^{\dag}\left(\mathbf{I}+\alpha
P\mathbf{h_{1}}\mathbf{h_{1}^{\dag}}\right)\mbox{\boldmath$\psi$}_{1\max}}{\mbox{\boldmath$\psi$}_{1\max}^{\dag}\left(\mathbf{I}+\alpha
P\mathbf{H_{3}^{\dag}}\mathbf{H_{3}}\right)\mbox{\boldmath$\psi$}_{1\max}}\right]^{+},\\
\label{hsnr1}
&\stackrel{(a)}{=}\frac{1}{2}\left[\log\lambda_{(1,1)\max}\right]^{+}
\end{IEEEeqnarray}
where $(a)$ is due to the fact that
$|\mathbf{I}+\mathbf{A}\mathbf{B}|=|\mathbf{I}+\mathbf{B}\mathbf{A}|$
and the fact that
$\mbox{\boldmath$\psi$}_{1\max}^{\dag}\mbox{\boldmath$\psi$}_{1\max}=1$
Similarly for the $R_{2}$ we have,
\begin{IEEEeqnarray}{rl}\nonumber
R_{2}&\leq\frac{1}{2}\left[\log
\frac{\left|1+\mathbf{h_{2}^{\dag}}\left(\mathbf{B_{1}}+\mathbf{B_{2}}\right)\mathbf{h_{2}}\right|}{\left|1+\mathbf{h_{2}^{\dag}}\mathbf{B_{1}}\mathbf{h_{2}}\right|}-\frac{1}{2}\log\frac{\left|\mathbf{I}+\mathbf{H_{3}}\left(\mathbf{B_{1}}+\mathbf{B_{2}}\right)\mathbf{H_{3}^{\dag}}\right|}{\left|\mathbf{I}+\mathbf{H_{3}}\mathbf{B_{1}}\mathbf{H_{3}^{\dag}}\right|}\right]^{+}\\
\nonumber&=\frac{1}{2}\left[\log
\left|1+\frac{\mathbf{h_{2}^{\dag}}\mathbf{B_{2}}\mathbf{h_{2}}}{1+\mathbf{h_{2}^{\dag}}\mathbf{B_{1}}\mathbf{h_{2}}}\right|-\frac{1}{2}\log\left|\mathbf{I}+\frac{\mathbf{H_{3}}\mathbf{B_{2}}\mathbf{H_{3}^{\dag}}}{\mathbf{I}+\mathbf{H_{3}}\mathbf{B_{1}}\mathbf{H_{3}^{\dag}}}\right|\right]^{+}\\
\label{hsnr2} &=\frac{1}{2}\left[\log
\frac{\mbox{\boldmath$\psi$}_{2\max}^{\dag}\left(\mathbf{I}+\frac{(1-\alpha)
P\mathbf{h_{2}}\mathbf{h_{2}^{\dag}}}{1+\alpha
P|\mathbf{h_{2}^{\dag}}\mbox{\boldmath$\psi$}_{1\max}|^{2}}\right)\mbox{\boldmath$\psi$}_{2\max}}{\mbox{\boldmath$\psi$}_{2\max}^{\dag}\left(\mathbf{I}+(1-\alpha)P\mathbf{H_{3}^{\dag}}\left(\mathbf{I}+\alpha
P\mathbf{H_{3}}\mbox{\boldmath$\psi$}_{1\max}\mbox{\boldmath$\psi$}_{1\max}^{\dag}\mathbf{H_{3}^{\dag}}\right)^{-1}
\mathbf{H_{3}}\right)\mbox{\boldmath$\psi$}_{2\max}}\right]^{+}=\frac{1}{2}\left[\log\lambda_{(2,2)\max}\right]^{+}.
\end{IEEEeqnarray}
Similarly, when $\pi=\{2,1\}$, in the SDPC region, we choose
$\mathbf{b_{1}}=u_{1}\mbox{\boldmath$\psi$}_{4\max}$ and
$\mathbf{b_{2}}=u_{2}\mbox{\boldmath$\psi$}_{3\max}$. Then the SDPC
region is given as follows.
\begin{IEEEeqnarray}{rl}\nonumber
R_{1}&\leq\frac{1}{2}\left[\log
\frac{\left|1+\mathbf{h_{1}^{\dag}}\left(\mathbf{B_{1}}+\mathbf{B_{2}}\right)\mathbf{h_{1}}\right|}{\left|1+\mathbf{h_{1}^{\dag}}\mathbf{B_{2}}\mathbf{h_{1}}\right|}-\frac{1}{2}\log\frac{\left|\mathbf{I}+\mathbf{H_{3}}\left(\mathbf{B_{1}}+\mathbf{B_{2}}\right)\mathbf{H_{3}^{\dag}}\right|}{\left|\mathbf{I}+\mathbf{H_{3}}\mathbf{B_{2}}\mathbf{H_{3}^{\dag}}\right|}\right]^{+}\\
\nonumber&=\frac{1}{2}\left[\log
\left|1+\frac{\mathbf{h_{1}^{\dag}}\mathbf{B_{1}}\mathbf{h_{1}}}{1+\mathbf{h_{1}^{\dag}}\mathbf{B_{2}}\mathbf{h_{1}}}\right|-\frac{1}{2}\log\left|\mathbf{I}+\frac{\mathbf{H_{3}}\mathbf{B_{1}}\mathbf{H_{3}^{\dag}}}{\mathbf{I}+\mathbf{H_{3}}\mathbf{B_{2}}\mathbf{H_{3}^{\dag}}}\right|\right]^{+}\\
\nonumber&=\frac{1}{2}\left[\log
\frac{\mbox{\boldmath$\psi$}_{4\max}^{\dag}\left(\mathbf{I}+\frac{\alpha
P\mathbf{h_{1}}\mathbf{h_{1}^{\dag}}}{1+(1-\alpha)
P|\mathbf{h_{1}^{\dag}}\mbox{\boldmath$\psi$}_{3\max}|^{2}}\right)\mbox{\boldmath$\psi$}_{4\max}}{\mbox{\boldmath$\psi$}_{4\max}^{\dag}\left(\mathbf{I}+\alpha
P\mathbf{H_{3}^{\dag}}\left(\mathbf{I}+(1-\alpha)
P\mathbf{H_{3}}\mbox{\boldmath$\psi$}_{3\max}\mbox{\boldmath$\psi$}_{3\max}^{\dag}\mathbf{H_{3}^{\dag}}\right)^{-1}
\mathbf{H_{3}}\right)\mbox{\boldmath$\psi$}_{4\max}}\right]^{+}=\frac{1}{2}\left[\log\lambda_{(1,2)\max}\right]^{+},
\end{IEEEeqnarray}
and $R_{2}$ is bounded as follows
\begin{IEEEeqnarray}{rl}\nonumber
R_{2}&\leq\frac{1}{2}\left[\log\left|1+\mathbf{h_{2}^{\dag}}\mathbf{B_{2}}\mathbf{h_{2}}\right|-\frac{1}{2}\log\left|\mathbf{I}+\mathbf{H_{3}}\mathbf{B_{2}}\mathbf{H_{3}^{\dag}}\right|\right]^{+},\\
\nonumber &=\frac{1}{2}\left[\log\left|(1+(1-\alpha)
P\mathbf{h_{2}^{\dag}}\mbox{\boldmath$\psi$}_{3\max}\mbox{\boldmath$\psi$}_{3\max}^{\dag}\mathbf{h_{2}})\right|-\frac{1}{2}\log\left|(\mathbf{I}+(1-\alpha)
P\mathbf{H_{3}}\mbox{\boldmath$\psi$}_{3\max}\mbox{\boldmath$\psi$}_{3\max}^{\dag}\mathbf{H_{3}^{\dag}})\right|\right]^{+},\\
\nonumber
&=\frac{1}{2}\left[\log\frac{\mbox{\boldmath$\psi$}_{3\max}^{\dag}\left(\mathbf{I}+(1-\alpha)
P\mathbf{h_{2}}\mathbf{h_{2}^{\dag}}\right)\mbox{\boldmath$\psi$}_{3\max}}{\mbox{\boldmath$\psi$}_{3\max}^{\dag}\left(\mathbf{I}+(1-\alpha)
P\mathbf{H_{3}^{\dag}}\mathbf{H_{3}}\right)\mbox{\boldmath$\psi$}_{3\max}}\right]^{+},\\
\nonumber &=\frac{1}{2}\left[\log\lambda_{(2,1)\max}\right]^{+}.
\end{IEEEeqnarray}
\end{proof}
Note that the eigenvalues
$\lambda_{(l,k)\max}=\lambda_{(l,k)\max}(\alpha,P)$ and the
eigenvector
$\mbox{\boldmath$\psi$}_{k\max}=\mbox{\boldmath$\psi$}_{k\max}(\alpha,P),~~l,k=1,2$
are the functions of $\alpha$ and $P$. The following corollary
characterizes the secrecy capacity region of the MISOME channel in
high SNR regime.
\begin{coro}
In the high SNR regime, the secrecy capacity region of the MISOME
channel is given as follows.
\begin{equation}\nonumber
\lim_{P\rightarrow\infty}\mathcal{C}^{MISOME}=conv\left\{\bigcup_{\pi\in\prod}\mathcal{R}^{MISOME}_{\infty}(\pi)\right\}
\end{equation}
where
\begin{IEEEeqnarray}{rl}\nonumber
&\mathcal{R}^{MISOME}_{\infty}(\pi=\{1,2\})=\\
\nonumber&\left\{(R_{1},R_{2}), R_{1}\leq
\frac{1}{2}\left[\log\lambda_{\max}\left(\mathbf{h_{1}}\mathbf{h_{1}^{\dag}},\mathbf{H_{3}^{\dag}}\mathbf{H_{3}}\right)\right]^{+},R_{2}\leq
\frac{1}{2}\left[\log\frac{\lambda_{\max}\left(\mathbf{h_{2}}\mathbf{h_{2}^{\dag}},\mathbf{H_{3}^{\dag}}\mathbf{H_{3}}\right)}{b}\right]^{+}\right\}\\
\nonumber &\mathcal{R}^{MISOME}_{\infty}(\pi=\{2,1\})=\\
\nonumber&\left\{(R_{1},R_{2}), R_{1}\leq
\frac{1}{2}\left[\log\frac{\lambda_{\max}\left(\mathbf{h_{1}}\mathbf{h_{1}^{\dag}},\mathbf{H_{3}^{\dag}}\mathbf{H_{3}}\right)}{a}\right]^{+},R_{2}\leq
\frac{1}{2}\left[\log\lambda_{\max}\left(\mathbf{h_{2}}\mathbf{h_{2}^{\dag}},\mathbf{H_{3}^{\dag}}\mathbf{H_{3}}\right)\right]^{+}\right\}
\end{IEEEeqnarray}
where
$(\lambda_{\max}(\mathbf{A_{i}},\mathbf{B}),\mbox{\boldmath$\psi$}_{i\max})$
denotes the largest eigenvalue and corresponding eigenvector of the
pencil $(\mathbf{A_{i}},\mathbf{B})$ and
$b=\frac{|\mathbf{h_{2}^{\dag}}\mbox{\boldmath$\psi$}_{1\max}|^{2}}{\|\mathbf{H_{3}}\mbox{\boldmath$\psi$}_{1\max}\|^{2}}
$,
$a=\frac{|\mathbf{h_{1}^{\dag}}\mbox{\boldmath$\psi$}_{2\max}|^{2}}{\|\mathbf{H_{3}}\mbox{\boldmath$\psi$}_{2\max}\|^{2}}
$.
\end{coro}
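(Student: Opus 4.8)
The plan is to start from the exact description of $\mathcal{C}^{MISOME}$ in the preceding theorem and let $P\to\infty$ inside it. Since $\mathcal{C}^{MISOME}=conv\{\bigcup_{\pi}\bigcup_{0\leq\alpha\leq1}\mathcal{R}^{MISOME}(\pi,\alpha)\}$ and each $\mathcal{R}^{MISOME}(\pi,\alpha)$ is the rectangle cut out by $R_k\leq\frac12[\log\lambda_{(k,\pi^{-1}(k))\max}(\alpha,P)]^{+}$, everything reduces to evaluating the limits as $P\to\infty$ of the four generalized eigenvalues $\lambda_{(1,1)\max},\lambda_{(2,2)\max},\lambda_{(2,1)\max},\lambda_{(1,2)\max}$, and then checking that the $\alpha$-union followed by the convex closure over $\pi$ passes to the limit. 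I would treat $\pi=\pi_I=\{1,2\}$ in detail and obtain $\pi=\{2,1\}$ by the relabelling $1\leftrightarrow2$, $\alpha\leftrightarrow1-\alpha$, using throughout the variational (Rayleigh quotient) characterization of the generalized eigenvalue recalled above, and tacitly assuming $\mathbf{H_3^{\dag}}\mathbf{H_3}\succ0$ so that the limiting pencils are well defined.

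For the first layer, writing $\mathbf{A_{1,1}}=\mathbf{I}+\alpha P\,\mathbf{h_1}\mathbf{h_1^{\dag}}$ and $\mathbf{B_{1,1}}=\mathbf{I}+\alpha P\,\mathbf{H_3^{\dag}}\mathbf{H_3}$ and dividing the numerator and denominator of the Rayleigh quotient by $\alpha P$ shows that, for every fixed $\alpha\in(0,1]$, the pencil $(\mathbf{A_{1,1}},\mathbf{B_{1,1}})$ converges, eigenvalue and eigenvector, to $(\mathbf{h_1}\mathbf{h_1^{\dag}},\mathbf{H_3^{\dag}}\mathbf{H_3})$; hence $\lambda_{(1,1)\max}(\alpha,P)\to\lambda_{\max}(\mathbf{h_1}\mathbf{h_1^{\dag}},\mathbf{H_3^{\dag}}\mathbf{H_3})$ and the optimal first-layer direction converges to the high-SNR eigenvector $\mbox{\boldmath$\psi$}_{1\max}$, which is independent of $\alpha$. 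Since $\lambda_{(1,1)\max}$ depends on $(\alpha,P)$ only through $\alpha P$ and is monotone in it, this convergence is uniform on $\{\alpha\geq\alpha_0>0\}$, which lets me substitute $\mbox{\boldmath$\psi$}_{1\max}$ for the finite-$P$ first-layer direction in the second layer.

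The substantive step is the second layer. I would substitute $\mbox{\boldmath$\psi$}_{1\max}$ into $\mathbf{A_{2,2}}$ and $\mathbf{B_{2,2}}$, expand $(\mathbf{I}+\alpha P\,\mathbf{H_3}\mbox{\boldmath$\psi$}_{1\max}\mbox{\boldmath$\psi$}_{1\max}^{\dag}\mathbf{H_3^{\dag}})^{-1}$ by the Sherman--Morrison identity, and track which terms survive division by $(1-\alpha)P$; the computation should show that the rank-one interference created by the first layer saturates exactly one direction of $\mathbf{B_{2,2}}$ at the eavesdropper while scaling the legitimate link of user $2$ only by the finite factor $b=|\mathbf{h_2^{\dag}}\mbox{\boldmath$\psi$}_{1\max}|^{2}/\|\mathbf{H_3}\mbox{\boldmath$\psi$}_{1\max}\|^{2}$, so that after normalization the surviving pencil converges to $(\mathbf{h_2}\mathbf{h_2^{\dag}},\,b\,\mathbf{H_3^{\dag}}\mathbf{H_3})$ and $\lambda_{(2,2)\max}(\alpha,P)\to\lambda_{\max}(\mathbf{h_2}\mathbf{h_2^{\dag}},\mathbf{H_3^{\dag}}\mathbf{H_3})/b$ for every $\alpha\in(0,1)$. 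The twin computation for $\pi=\{2,1\}$ gives $\lambda_{(2,1)\max}\to\lambda_{\max}(\mathbf{h_2}\mathbf{h_2^{\dag}},\mathbf{H_3^{\dag}}\mathbf{H_3})$ and $\lambda_{(1,2)\max}\to\lambda_{\max}(\mathbf{h_1}\mathbf{h_1^{\dag}},\mathbf{H_3^{\dag}}\mathbf{H_3})/a$ with $a=|\mathbf{h_1^{\dag}}\mbox{\boldmath$\psi$}_{2\max}|^{2}/\|\mathbf{H_3}\mbox{\boldmath$\psi$}_{2\max}\|^{2}$. Because these limits are constant in $\alpha$ over $(0,1)$, the $\alpha$-union of the limiting rectangles reduces, for each $\pi$, to the single rectangle $\mathcal{R}^{MISOME}_{\infty}(\pi)$; the endpoints $\alpha\in\{0,1\}$ contribute only degenerate axis rectangles, which already lie inside $conv\{\mathcal{R}^{MISOME}_{\infty}(\{1,2\})\cup\mathcal{R}^{MISOME}_{\infty}(\{2,1\})\}$ since each of those two rectangles contains a full segment of the form $[0,\cdot]\times\{0\}$ and of the form $\{0\}\times[0,\cdot]$. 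Taking the convex closure over the two permutations then yields the claimed formula for $\lim_{P\to\infty}\mathcal{C}^{MISOME}$.

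The only real obstacle I anticipate is the second-layer limit: one has to justify interchanging $P\to\infty$ with the maximization defining $\lambda_{(2,2)\max}$, and to cope with the fact that several eigen-directions of $\mathbf{B_{2,2}}$ diverge while the one relevant to user $2$ stays bounded. A safe route is to sandwich $\lambda_{(2,2)\max}$ between the Rayleigh quotients evaluated at $\mbox{\boldmath$\psi$}_{1\max}$ and at an $O(P^{-1/2})$ perturbation of it, show that both bounds converge to the stated value, and combine this with the uniform first-layer convergence obtained above; the boundary cases $\alpha=0$ and $\alpha=1$ are then disposed of by direct substitution.
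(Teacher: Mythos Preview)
Your proposal follows essentially the same strategy as the paper: reduce the high-SNR region to the limits of the four generalized eigenvalues $\lambda_{(k,\pi^{-1}(k))\max}(\alpha,P)$ as $P\to\infty$, and argue that these limits are independent of $\alpha$. The paper carries this out by a direct sandwiching of the Rayleigh quotient---an upper bound obtained by dropping the $1$'s in numerator and denominator, and a lower bound obtained by evaluating the finite-$P$ quotient at the limiting eigenvector $\mbox{\boldmath$\psi$}(\infty)$---which is exactly the ``safe route'' you describe at the end. Your primary route via pencil convergence and the Sherman--Morrison expansion of $(\mathbf{I}+\alpha P\,\mathbf{H_3}\mbox{\boldmath$\psi$}_{1\max}\mbox{\boldmath$\psi$}_{1\max}^{\dag}\mathbf{H_3^{\dag}})^{-1}$ is a cleaner way to see what survives in the second layer than the paper's somewhat informal scalar rewriting of the $\mathbf{B_{2,2}}$ quadratic form, and it makes the role of the constant $b$ more transparent; but it is the same computation dressed differently.

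Where you go beyond the paper is in worrying about uniformity in $\alpha$, the endpoints $\alpha\in\{0,1\}$, and the interchange of $P\to\infty$ with the $\alpha$-union and convex closure. The paper simply asserts the limiting region without addressing these points, so your extra care here is warranted rather than superfluous.
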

Note that the above secrecy rate region is independent of $\alpha$
and therefore is a convex closure of two rectangular regions.
\begin{proof}
We restrict our attention to the case that
$\lambda_{(l,k)\max}(\alpha,P)>1$ for $l,k=1,2$ where the rates
$R_{1}$ and $R_{2}$ are nonzero. First suppose that
$\pi=\pi_{I}=\{1,2\}$. We show that
\begin{IEEEeqnarray}{rl}\label{tmp1}
\lim_{P\rightarrow\infty}\lambda_{(1,1)\max}(\alpha,P)&=\lambda_{\max}\left(\mathbf{h_{1}}\mathbf{h_{1}^{\dag}},\mathbf{H_{3}^{\dag}}\mathbf{H_{3}}\right)\\\label{tmp2}
\lim_{P\rightarrow\infty}\lambda_{(2,2)\max}(\alpha,P)&=\frac{\lambda_{\max}\left(\mathbf{h_{2}}\mathbf{h_{2}^{\dag}},\mathbf{H_{3}^{\dag}}\mathbf{H_{3}}\right)}{b}.
\end{IEEEeqnarray}
Note that since
\begin{equation}\nonumber
\lambda_{(1,1)\max}(\alpha,P)=\frac{1+\alpha P
|\mathbf{h_{1}^{\dag}}\mbox{\boldmath$\psi$}_{1\max}(\alpha,P)|^{2}}{1+\alpha
P \|\mathbf{H_{3}}\mbox{\boldmath$\psi$}_{1\max}(\alpha,P)\|^{2}}>1
\end{equation}
where
\begin{equation}\nonumber
\mbox{\boldmath$\psi$}_{1\max}(\alpha,P)=\arg
\max_{\left\{\mbox{\boldmath$\psi$}_{1}:\|\mbox{\boldmath$\psi$}_{1}\|^{2}=1\right\}}\frac{1+\alpha
P
|\mathbf{h_{1}^{\dag}}\mbox{\boldmath$\psi$}_{1}(\alpha,P)|^{2}}{1+\alpha
P \|\mathbf{H_{3}}\mbox{\boldmath$\psi$}_{1}(\alpha,P)\|^{2}}
\end{equation}
for all $P>0$ we have,
\begin{equation}\nonumber
|\mathbf{h_{1}^{\dag}}\mbox{\boldmath$\psi$}_{1\max}(\alpha,P)|^{2}>|\mathbf{H_{3}}\mbox{\boldmath$\psi$}_{1\max}(\alpha,P)\|^{2}
\end{equation}
Therefore, $\lambda_{(1,1)\max}$ is an increasing function of $P$.
Thus,
\begin{IEEEeqnarray}{rl}\nonumber
\lambda_{(1,1)\max}(\alpha,P)&\leq\frac{|\mathbf{h_{1}^{\dag}}\mbox{\boldmath$\psi$}_{1\max}(\alpha,P)|^{2}}{\|\mathbf{H_{3}}\mbox{\boldmath$\psi$}_{1\max}(\alpha,P)\|^{2}}\\
\nonumber&\leq\lambda_{\max}\left(\mathbf{h_{1}}\mathbf{h_{1}^{\dag}},\mathbf{H_{3}^{\dag}}\mathbf{H_{3}}\right)
\end{IEEEeqnarray}
Since
$\lambda_{\max}\left(\mathbf{h_{1}}\mathbf{h_{1}^{\dag}},\mathbf{H_{3}^{\dag}}\mathbf{H_{3}}\right)$
is independent of $P$ we have
\begin{IEEEeqnarray}{rl}\nonumber
\lim_{P\rightarrow\infty}\lambda_{(1,1)\max}\leq\lambda_{\max}\left(\mathbf{h_{1}}\mathbf{h_{1}^{\dag}},\mathbf{H_{3}^{\dag}}\mathbf{H_{3}}\right)
\end{IEEEeqnarray}
Next, defining
\begin{equation}\nonumber
\mbox{\boldmath$\psi$}_{1}(\infty)=\arg
\max_{\left\{\mbox{\boldmath$\psi$}_{1}:\|\mbox{\boldmath$\psi$}_{1}\|^{2}=1\right\}}\frac{
|\mathbf{h_{1}^{\dag}}\mbox{\boldmath$\psi$}_{1}|^{2}}{
\|\mathbf{H_{3}}\mbox{\boldmath$\psi$}_{1}\|^{2}}
\end{equation}
we have the following lower bound
\begin{IEEEeqnarray}{rl}\nonumber
\lim_{P\rightarrow\infty}\lambda_{(1,1)\max}(\alpha,P)&\geq\lim_{P\rightarrow\infty}\frac{\frac{1}{P}+\alpha|\mathbf{h_{1}^{\dag}}\mbox{\boldmath$\psi$}_{1\max}(\infty)|^{2}}{\frac{1}{P}+\alpha\|\mathbf{H_{3}}\mbox{\boldmath$\psi$}_{1\max}(\infty)\|^{2}}\\
\nonumber&=\lambda_{\max}\left(\mathbf{h_{1}}\mathbf{h_{1}^{\dag}},\mathbf{H_{3}^{\dag}}\mathbf{H_{3}}\right)
\end{IEEEeqnarray}
As the lower bound and upper bound coincide then we obtain
(\ref{tmp1}). Similarly to obtain (\ref{tmp2}) note that since
\begin{IEEEeqnarray}{rl}\nonumber
\lambda_{(2,2)\max}(\alpha,P)=\frac{1+\frac{(1-\alpha)P|\mathbf{h_{2}^{\dag}}\mbox{\boldmath$\psi$}_{2\max}(\alpha,P)|^{2}}{1+\alpha
P|\mathbf{h_{2}^{\dag}}\mbox{\boldmath$\psi$}_{1\max}(\alpha,P)|^{2}}}{1+\frac{(1-\alpha)P\|\mathbf{H_{3}}\mbox{\boldmath$\psi$}_{2\max}(\alpha,P)\|^{2}}{1+\alpha
P\|\mathbf{H_{3}}\mbox{\boldmath$\psi$}_{1\max}(\alpha,P)\|^{2}}}>1
\end{IEEEeqnarray}
where
\begin{equation}\nonumber
\mbox{\boldmath$\psi$}_{2\max}(\alpha,P)=\arg
\max_{\left\{\mbox{\boldmath$\psi$}_{2}:\|\mbox{\boldmath$\psi$}_{2}\|^{2}=1\right\}}\frac{1+\frac{(1-\alpha)P|\mathbf{h_{2}^{\dag}}\mbox{\boldmath$\psi$}_{2\max}(\alpha,P)|^{2}}{1+\alpha
P|\mathbf{h_{2}^{\dag}}\mbox{\boldmath$\psi$}_{1\max}(\alpha,P)|^{2}}}{1+\frac{(1-\alpha)P\|\mathbf{H_{3}}\mbox{\boldmath$\psi$}_{2\max}(\alpha,P)\|^{2}}{1+\alpha
P\|\mathbf{H_{3}}\mbox{\boldmath$\psi$}_{1\max}(\alpha,P)\|^{2}}}
\end{equation}
for all $P>0$ we have,
\begin{equation}\nonumber
\frac{(1-\alpha)P|\mathbf{h_{2}^{\dag}}\mbox{\boldmath$\psi$}_{2\max}(\alpha,P)|^{2}}{1+\alpha
P|\mathbf{h_{2}^{\dag}}\mbox{\boldmath$\psi$}_{1\max}(\alpha,P)|^{2}}>\frac{(1-\alpha)P\|\mathbf{H_{3}}\mbox{\boldmath$\psi$}_{2\max}(\alpha,P)\|^{2}}{1+\alpha
P\|\mathbf{H_{3}}\mbox{\boldmath$\psi$}_{1\max}(\alpha,P)\|^{2}}
\end{equation}
Therefore, we have
\begin{IEEEeqnarray}{rl}
\lim_{P\rightarrow\infty}\lambda_{(2,2)\max}(\alpha,P)&\leq\frac{\frac{|\mathbf{h_{2}^{\dag}}\mbox{\boldmath$\psi$}_{2\max}(\infty)|^{2}}{|\mathbf{h_{2}^{\dag}}\mbox{\boldmath$\psi$}_{1\max}(\infty)|^{2}}}{\frac{\|\mathbf{H_{3}}\mbox{\boldmath$\psi$}_{2\max}(\infty)\|^{2}}{\|\mathbf{H_{3}}\mbox{\boldmath$\psi$}_{1\max}(\infty)\|^{2}}}\\
\nonumber&\leq
\frac{\lambda_{\max}\left(\mathbf{h_{2}}\mathbf{h_{2}^{\dag}},\mathbf{H_{3}^{\dag}}\mathbf{H_{3}}\right)}{b}
\end{IEEEeqnarray}
where
\begin{IEEEeqnarray}{rl}\nonumber
b=\frac{|\mathbf{h_{2}^{\dag}}\mbox{\boldmath$\psi$}_{1\max}(\infty)|^{2}}{\|\mathbf{H_{3}}\mbox{\boldmath$\psi$}_{1\max}(\infty)\|^{2}}\\
\nonumber \mbox{\boldmath$\psi$}_{2}(\infty)=\arg
\max_{\left\{\mbox{\boldmath$\psi$}_{2}:\|\mbox{\boldmath$\psi$}_{2}\|^{2}=1\right\}}\frac{|\mathbf{h_{2}^{\dag}}\mbox{\boldmath$\psi$}_{2\max}|^{2}}{\|\mathbf{H_{3}}\mbox{\boldmath$\psi$}_{2\max}\|^{2}}
\end{IEEEeqnarray}
On the other hand we have the following lower bound
\begin{IEEEeqnarray}{rl} \label{eq18}
\lim_{P\rightarrow\infty}\lambda_{(2,2)\max}(\alpha,P)\geq\frac{1+\frac{(1-\alpha)|\mathbf{h_{2}^{\dag}}\mbox{\boldmath$\psi$}_{2\max}(\infty)|^{2}}{\alpha|\mathbf{h_{2}^{\dag}}\mbox{\boldmath$\psi$}_{1\max}(\infty)|^{2}}}{1+\frac{(1-\alpha)\|\mathbf{H_{3}}\mbox{\boldmath$\psi$}_{2\max}(\infty)\|^{2}}{\alpha\|\mathbf{H_{3}}\mbox{\boldmath$\psi$}_{1\max}(\infty)\|^{2}}}
\end{IEEEeqnarray}
 Note that $0\leq
\alpha \leq 1$. It is easy to show that the right side of equation
of (\ref{eq18}) is a decreasing function of $\alpha$ and therefore
the maximum value of this function is when $\alpha=0$. Thus we have,
\begin{IEEEeqnarray}{rl}\nonumber
\lim_{P\rightarrow\infty}\lambda_{(2,2)\max}(\alpha,P)\geq
\frac{\lambda_{\max}\left(\mathbf{h_{2}}\mathbf{h_{2}^{\dag}},\mathbf{H_{3}^{\dag}}\mathbf{H_{3}}\right)}{b}
\end{IEEEeqnarray}
As the lower bound and upper bound coincide then we obtain
(\ref{tmp2}). When $\pi={2,1}$, the proof is similar and may be
omitted here.
\end{proof}
Now consider the MISOME channel with $m$ single antenna receivers
and an external eavesdropper. Let
$\mathbf{x}=\sum_{k=1}^{m}\mathbf{b_{k}}$, where
$\mathbf{b_{k}}=u_{k}\psi_{k\max}$, $u_{k}\sim
\mathcal{N}(0,\alpha_{k}P)$, and $\sum_{k=1}^{m}\alpha_{k}=1$.
Assume that $(\lambda_{k\max},\mbox{\boldmath$\psi$}_{k\max})$ is
the largest generalized eigenvalue and the corresponding eigenvector
pair of the pencil
\begin{equation}\nonumber
\left(\mathbf{I}+\frac{\alpha_{k}P
\mathbf{h_{k}}\mathbf{h_{k}^{\dag}}}{1+\mathbf{h_{k}^{\dag}}\mathbf{A}\mathbf{h_{k}}},\mathbf{I}+\alpha_{k}P\mathbf{H_{3}^{\dag}}\left(\mathbf{I}+\mathbf{H_{3}}\mathbf{A}\mathbf{H_{3}^{\dag}}\right)^{-1}\mathbf{H_{3}}\right)
\end{equation}
where
$\mathbf{A}=(\sum_{i=1}^{\pi^{-1}(k)-1}\alpha_{\pi(i)}P\psi_{\pi(i)\max}\psi^{\dag}_{\pi(i)\max})$.
The following corollary then characterizes the capacity region of
the MISOME channel with $m$ receivers under a total power constraint
$P$.

\begin{coro}
Let $\prod$ be the collection of all possible permutations of the
ordered set $\{1,...,m\}$ and $conv$ be the convex closure operator,
then $\mathcal{C}^{MISOME}$ is given as follows.
\begin{equation}\nonumber
\mathcal{C}^{MISOME}=conv\left\{\bigcup_{\pi\in\prod}\mathcal{R}^{MISOME}(\pi)\right\}
\end{equation}
 where $\mathcal{R}^{MISOME}(\pi)$ is given as follows.
\begin{IEEEeqnarray}{rl}\nonumber
\mathcal{R}^{MISOME}(\pi)=\bigcup_{0\leq \alpha_{k} \leq
1,\sum_{k=1}^{m}\alpha_{k}=1}\mathcal{R}^{MISOME}(\pi,\alpha_{1},...,\alpha_{m})
\end{IEEEeqnarray}
where $\mathcal{R}^{MISOME}(\pi,\alpha_{1},...,\alpha_{m})$ is the
set of all $(R_{1},...,R_{m})$ satisfying the following condition.
\begin{IEEEeqnarray}{rl}\nonumber
    R_{k}&\leq \frac{1}{2}\left[\log\lambda_{k\max}\right]^{+}, ~~~~~~~~
    k=1,...,m.
\end{IEEEeqnarray}
\end{coro}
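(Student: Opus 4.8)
The plan is to derive this corollary as a direct specialization of the SDPC capacity region of Corollary~\ref{c1} to the MISOME model, in exactly the way the two-receiver MISOME theorem specializes Theorem~\ref{t6}. Under a total power constraint $P$ the capacity region is $\bigcup_{\mathrm{Tr}(\mathbf{S})\le P}\mathcal{C}(\mathbf{S},\cdot)=\bigcup_{\mathrm{Tr}(\mathbf{S})\le P}\mathcal{R}^{SDPC}(\mathbf{S},\cdot)$, and since both this set and the claimed region are convex closures over the permutations $\pi$, it suffices to identify, for each fixed $\pi$, the per-permutation SDPC region with $\bigcup_{\alpha:\sum_k\alpha_k=1}\mathcal{R}^{MISOME}(\pi,\alpha_1,\dots,\alpha_m)$.

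Achievability (the inclusion $\supseteq$) is the concrete computation. Fix $\pi$ and a split $(\alpha_1,\dots,\alpha_m)$ and set $\mathbf{B_k}=\alpha_kP\,\psi_{k\max}\psi_{k\max}^{\dag}$, with the $\psi_{k\max}$ defined recursively along $\pi(1),\pi(2),\dots$ as in the statement. Writing $\mathbf{A}=\sum_{i<\pi^{-1}(k)}\mathbf{B_{\pi(i)}}$ for the interference from the already-decoded layers, the legitimate-receiver term $1+\mathbf{h_k^{\dag}}(\mathbf{A}+\mathbf{B_k})\mathbf{h_k}$ is a scalar, so after cancelling common factors between the numerators and denominators of~(\ref{eq17}) and using $|\mathbf{I}+\mathbf{CD}|=|\mathbf{I}+\mathbf{DC}|$ together with $\|\psi_{k\max}\|^2=1$, the layer-$k$ rate collapses to one half the logarithm of the Rayleigh quotient of $\psi_{k\max}$ for the pencil written out in the statement (with $\mathbf{A}$ in place of the displayed sum). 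The variational characterization of generalized eigenvalues then yields exactly $\frac{1}{2}[\log\lambda_{k\max}]^{+}$. This repeats, with only notational changes, the two-receiver computations~(\ref{hsnr1})--(\ref{hsnr2}), so the entire claimed region is achievable.

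For the converse (the inclusion $\subseteq$) I would show that beamforming along the generalized eigenvectors is optimal layer by layer. Given arbitrary $\mathbf{B_1},\dots,\mathbf{B_m}\succeq0$ with $\sum_k\mathrm{Tr}(\mathbf{B_k})\le P$ and the order $\pi$, the rate of layer $k$ with past interference $\mathbf{F}$ can be written as $\frac{1}{2}\bigl[\log\frac{1+\mathrm{Tr}(\mathbf{M}\mathbf{B_k})}{|\mathbf{I}+\mathbf{N}\mathbf{B_k}|}\bigr]^{+}$ with $\mathbf{M}=\mathbf{h_k}\mathbf{h_k^{\dag}}/(1+\mathbf{h_k^{\dag}}\mathbf{F}\mathbf{h_k})$ (rank one) and $\mathbf{N}=\mathbf{H_3^{\dag}}(\mathbf{I}+\mathbf{H_3}\mathbf{F}\mathbf{H_3^{\dag}})^{-1}\mathbf{H_3}$; using $|\mathbf{I}+\mathbf{N}\mathbf{B_k}|\ge 1+\mathrm{Tr}(\mathbf{N}\mathbf{B_k})$ and the monotonicity of $t\mapsto(1+\lambda t)/(1+t)$ on $\lambda\ge1$, this is at most $\frac{1}{2}[\log\lambda]^{+}$ for $\lambda$ the top generalized eigenvalue of $\bigl(\mathbf{I}+\mathrm{Tr}(\mathbf{B_k})\mathbf{M},\ \mathbf{I}+\mathrm{Tr}(\mathbf{B_k})\mathbf{N}\bigr)$, a bound attained by any rank-one $\mathbf{B_k}$; setting $\alpha_k=\mathrm{Tr}(\mathbf{B_k})/P$ makes this pencil coincide with the one in the statement. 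The main obstacle is the recursion: replacing $\mathbf{B_{\pi(j)}}$ by a rank-one matrix alters $\mathbf{F}$ for every later layer, so the reduction must be carried out inductively along $\pi(1),\pi(2),\dots$, and at each step one has to verify that concentrating that layer's power onto $\psi_{\pi(j)\max}$ does not decrease the rates of the remaining layers — this is where the MISO structure is essential, since a later legitimate receiver sees the altered layer only through the scalar $\mathbf{h^{\dag}}\mathbf{F}\mathbf{h}$ and the eavesdropper's dependence on $\mathbf{F}$ cancels between the numerator and denominator of its secrecy-rate expression. Taking the union over splits $\alpha$ and the convex closure over $\pi$ then completes the identification, and the SGMBC extension of Corollary~\ref{c1} transfers the statement to the non-aligned case.
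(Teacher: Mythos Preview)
Your achievability argument is exactly what the paper does: plug the rank-one covariances $\mathbf{B_k}=\alpha_kP\,\psi_{k\max}\psi_{k\max}^{\dag}$ into the SDPC formula of Corollary~\ref{c1} and reduce each layer's rate to a Rayleigh quotient via $|\mathbf{I}+\mathbf{CD}|=|\mathbf{I}+\mathbf{DC}|$ and $\|\psi_{k\max}\|=1$. The paper gives no separate proof of the $m$-user corollary; for the two-user theorem it carries out precisely this achievability computation and nothing more, declaring the result ``a special case of Theorem~\ref{t6} and Corollary~\ref{c1}'' without ever arguing that rank-one beamforming along the recursively defined eigenvectors exhausts the SDPC region under the total-power constraint. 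So on the achievability side you match the paper, and on the converse side you are attempting something the paper simply omits.

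That converse attempt, however, has a genuine gap at the inductive step. You assert that replacing $\mathbf{B_{\pi(j)}}$ by the rank-one matrix $\alpha_{\pi(j)}P\,\psi_{\pi(j)\max}\psi_{\pi(j)\max}^{\dag}$ does not decrease the rates of the later layers, because (i) a later legitimate receiver sees the change only through the scalar $\mathbf{h}^{\dag}\mathbf{F}\mathbf{h}$ and (ii) the eavesdropper's dependence on $\mathbf{F}$ ``cancels between numerator and denominator.'' Neither claim does what you need. The scalar $\mathbf{h_{\pi(i)}}^{\dag}\mathbf{F}\mathbf{h_{\pi(i)}}$ depends on the \emph{direction} of $\mathbf{B_{\pi(j)}}$, not merely on its trace, so it is altered by your replacement; and the eavesdropper's contribution to layer $\pi(i)$ is
\[
\log\bigl|\mathbf{I}+\mathbf{H_3}(\mathbf{F}+\mathbf{B_{\pi(i)}})\mathbf{H_3}^{\dag}\bigr|-\log\bigl|\mathbf{I}+\mathbf{H_3}\mathbf{F}\mathbf{H_3}^{\dag}\bigr|,
\]
which depends on the full matrix $\mathbf{H_3}\mathbf{F}\mathbf{H_3}^{\dag}$ and does not cancel. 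Hence there is no monotonicity of later-layer rates under your rank-one replacement, and the induction does not close. Your determinant bound $|\mathbf{I}+\mathbf{N}\mathbf{B_k}|\ge 1+\mathrm{Tr}(\mathbf{N}\mathbf{B_k})$ together with the mediant inequality does give single-layer rank-one optimality when $\mathbf{F}$ is \emph{fixed}; the difficulty is precisely that $\mathbf{F}$ is not fixed across the induction, and neither your sketch nor the paper supplies the missing argument.
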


\section{Conclusion} A scenario where a source
node wishes to broadcast two confidential messages for two
respective receivers via a Gaussian MIMO broadcast channel, while a
wire-tapper also receives the transmitted signal via another MIMO
channel is considered. We considered the secure vector Gaussian
degraded  broadcast channel and established its capacity region. Our
achievability scheme was the secret superposition of Gaussian codes.
Instead of solving a nonconvex problem, we used the notion of an
enhanced channel to show that secret superposition of Gaussian codes
is optimal. To characterize the secrecy capacity region of the
vector Gaussian degraded broadcast channel, we only enhanced the
channels for the legitimate receivers, and the channel of the
eavesdropper remained unchanged. Then we extended the result of the
degraded case to non-degraded case. We showed that the secret
superposition of Gaussian codes along with successive decoding
cannot work when the channels are not degraded. we developed a
Secret Dirty Paper Coding (SDPC) scheme and showed that SDPC is
optimal for this channel. Finally, We investigated practical
characterizations for the specific scenario in which the transmitter
and the eavesdropper can afford multiple antennas, while both
intended receivers have a single antenna. We characterized the
secrecy capacity region in terms of generalized eigenvalues of the
receivers' channels and the eavesdropper channel. In high SNR we
showed that the capacity region is a convex closure of two
rectangular regions.
\appendix
\subsection{Equivocation Calculation}
The proof of secrecy requirement for each individual message (\ref{l1}) and (\ref{l2}) is
straightforward and may therefore be omitted.

To prove the requirement of (\ref{l3}) from $H(W_{1},W_{2}|Z^{n})$,
we have
\begin{eqnarray}\nonumber \label{l4}
nR_{e12}&=&H(W_{1},W_{2}|Z^{n})\\
\nonumber  &=& H(W_{1},W_{2},Z^{n})-H(Z^{n})\\
\nonumber &=& H(W_{1},W_{2},V_{1}^{n},V_{2}^{n},Z^{n})- H(V_{1}^{n},V_{2}^{n}|W_{1},W_{2},Z^{n})-H(Z^{n})\\
\nonumber &=&H(W_{1},W_{2},V_{1}^{n},V_{2}^{n})+ H(Z^{n}|W_{1},W_{2},V_{1}^{n},V_{2}^{n})- H(V_{1}^{n},V_{2}^{n}|W_{1},W_{2},Z^{n})-H(Z^{n})\\
\nonumber &\stackrel{(a)}{\geq}&
H(W_{1},W_{2},V_{1}^{n},V_{2}^{n})+H(Z^{n}|W_{1},W_{2},V_{1}^{n},V_{2}^{n})-n\epsilon_{n}-H(Z^{n})\\
\nonumber
&\stackrel{(b)}{=}&H(W_{1},W_{2},V_{1}^{n},V_{2}^{n})+H(Z^{n}|V_{1}^{n},V_{2}^{n})-n\epsilon_{n}-H(Z^{n}) \\
\nonumber&\stackrel{(c)}{\geq}&
H(V_{1}^{n},V_{2}^{n})+H(Z^{n}|V_{1}^{n},V_{2}^{n})- n\epsilon_{n}-H(Z^{n}) \\
\nonumber &=& H(V_{1}^{n})+H(V_{2}^{n})-I(V_{1}^{n};V_{2}^{n})- I(V_{1}^{n},V_{2}^{n};Z^{n})- n\epsilon_{n}\\
\nonumber &\stackrel{(d)}{\geq}& I(V_{1}^{n};Y_{1}^{n})+I(V_{2}^{n};Y_{2}^{n})-I(V_{1}^{n};V_{2}^{n})-I(V_{1}^{n},V_{2}^{n};Z^{n})- n\epsilon_{n}\\
\nonumber &\stackrel{(e)}{\geq}& nR_{1}+nR_{2}-n\epsilon_{n},
\end{eqnarray}
where $(a)$ follows from Fano's inequality, which states that for
sufficiently large $n$, $H(V_{1}^{n},V_{2}^{n}|W_{1},W_{2},Z^{n})$
$\leq h(P_{we}^{(n)})$ $+nP_{we}^{n}R_{w}\leq n\epsilon_{n}$. Here
$P_{we}^{n}$ denotes the wiretapper's error probability of decoding
$(v_{1}^{n},v_{2}^{n})$ in the case that the bin numbers $w_{1}$ and
$w_{2}$ are known to the eavesdropper. Since the sum rate is small
enough, then $P_{we}^{n}\rightarrow 0$ for sufficiently large $n$.
$(b)$ follows from the following Markov chain:
$(W_{1},W_{2})\rightarrow (V_{1}^{n},V_{2}^{n})\rightarrow$ $Z^{n}$.
Hence, we have
$H(Z^{n}|W_{1},W_{2},V_{1}^{n},V_{2}^{n})=H(Z^{n}|V_{1}^{n},V_{2}^{n})$.
$(c)$ follows from the fact that
$H(W_{1},W_{2},V_{1}^{n},V_{2}^{n})\geq H(V_{1}^{n},V_{2}^{n})$.
$(d)$ follows from that fact that $H(V_{1}^{n})\geq
I(V_{1}^{n};Y_{1}^{n})$ and $H(V_{2}^{n})\geq
I(V_{2}^{n};Y_{2}^{n})$. $(e)$ follows from the following lemmas.

\begin{lem}\label{lem3}
Assume $V_{1}^{n},V_{2}^{n}$ and $Z^{n}$ are generated according to
the achievablity scheme of Theorem \ref{t4}, then we have,
\begin{IEEEeqnarray}{lr}\nonumber
I(V_{1}^{n},V_{2}^{n};Z^{n})\leq nI(V_{1},V_{2};Z)+n\delta_{1n},\\
\nonumber I(V_{1}^{n};V_{2}^{n})\leq nI(V_{1};V_{2})+n\delta_{2n}.
\end{IEEEeqnarray}
\end{lem}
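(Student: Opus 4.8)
The plan is to exploit the fact that the encoder of Theorem~\ref{t4} makes $(V_{1}^{n},V_{2}^{n})$ the input of a \emph{memoryless} channel to $Z^{n}$: since $x^{n}$ is drawn as $\prod_{i}P(x_{i}|v_{1,i},v_{2,i})$ and the channel to the eavesdropper is memoryless, $P(z^{n}|v_{1}^{n},v_{2}^{n})=\prod_{i=1}^{n}W(z_{i}|v_{1,i},v_{2,i})$ with $W(z|v_{1},v_{2})=\sum_{x}P(x|v_{1},v_{2})P(z|x)$. I would carry out the whole argument for a fixed deterministic codebook obtained by the usual expurgation, so that for every message pair the joint-typicality step succeeds, the selected pair $(V_{1}^{n},V_{2}^{n})$ lies in $A_{\epsilon}^{(n)}(P_{V_{1},V_{2}})$ with probability at least $1-\epsilon_{n}$, and the number of jointly $\epsilon$-typical codeword pairs whose bins are compatible with a message pair concentrates about its ensemble mean --- these are precisely the mutual-covering properties already invoked in the achievability proof.

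For the first inequality I would single-letterise. By memorylessness the $Z_{i}$ are conditionally independent given $(V_{1}^{n},V_{2}^{n})$ with transition law $W$, so $H(Z^{n}|V_{1}^{n},V_{2}^{n})=\sum_{i}H(Z_{i}|V_{1,i},V_{2,i})$ while $H(Z^{n})\le\sum_{i}H(Z_{i})$, giving $I(V_{1}^{n},V_{2}^{n};Z^{n})\le\sum_{i=1}^{n}I(V_{1,i},V_{2,i};Z_{i})$. Denoting by $Q_{i}$ the distribution of $(V_{1,i},V_{2,i})$ induced by the scheme, the $i$-th summand is $I_{Q_{i}}(V_{1},V_{2};Z)$ for the fixed channel $W$; since mutual information is concave in the input law, $\tfrac{1}{n}\sum_{i}I_{Q_{i}}(V_{1},V_{2};Z)\le I_{\bar{Q}}(V_{1},V_{2};Z)$ with $\bar{Q}=\tfrac{1}{n}\sum_{i}Q_{i}=\E[\widehat{P}_{V_{1}^{n}V_{2}^{n}}]$ the expected empirical distribution of the selected pair. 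The typicality property forces $\|\bar{Q}-P_{V_{1},V_{2}}\|_{1}\to 0$, and uniform continuity of $Q\mapsto I_{Q}(V_{1},V_{2};Z)$ on the simplex then gives $I_{\bar{Q}}(V_{1},V_{2};Z)\le I(V_{1},V_{2};Z)+\delta_{1n}$ with $\delta_{1n}\to 0$, which is the first claim.

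For the second inequality I would write $I(V_{1}^{n};V_{2}^{n})=H(V_{1}^{n})+H(V_{2}^{n})-H(V_{1}^{n},V_{2}^{n})$. The two positive terms are controlled by support size: $V_{1}^{n}$ ranges over the $2^{n(I(V_{1};Y_{1})-\epsilon)}$ codewords of the first book and $V_{2}^{n}$ over the $2^{n(I(V_{2};Y_{2})-\epsilon)}$ codewords of the second, so $H(V_{1}^{n})\le nI(V_{1};Y_{1})$ and $H(V_{2}^{n})\le nI(V_{2};Y_{2})$. For the negative term I would argue that, conditioned on the codebook, the selected pair is (approximately) uniform over the set $\mathcal{P}$ of jointly $\epsilon$-typical codeword pairs with message-compatible bins; the covering count gives $|\mathcal{P}|\ge 2^{\,n(I(V_{1};Y_{1})+I(V_{2};Y_{2})-I(V_{1};V_{2})-\epsilon)}$, hence $H(V_{1}^{n},V_{2}^{n})\ge n\big(I(V_{1};Y_{1})+I(V_{2};Y_{2})-I(V_{1};V_{2})\big)-n\eta_{n}$ with $\eta_{n}\to 0$. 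Subtracting the three estimates yields $I(V_{1}^{n};V_{2}^{n})\le nI(V_{1};V_{2})+n\delta_{2n}$ with $\delta_{2n}=\epsilon+\eta_{n}+o(1)\to 0$.

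The routine part is the single-letterisation and the entropy bookkeeping; the hard part will be the lower bound on $H(V_{1}^{n},V_{2}^{n})$, that is, showing the encoder has at least $2^{n(I(V_{1};Y_{1})+I(V_{2};Y_{2})-I(V_{1};V_{2})-\epsilon)}$ admissible jointly-typical pairs to choose from and that its choice is nearly uniform over them. This is a second-moment/expurgation estimate on the random codebook, the same one that underlies Marton's scheme, and it is also what compels us to fix a good codebook so that the supports of $V_{1}^{n}$ and $V_{2}^{n}$ are honestly confined to $2^{n(I(V_{k};Y_{k})-\epsilon)}$ points. With the lemma in hand, it plugs into step $(e)$ of the equivocation chain to produce $R_{e12}\ge R_{1}+R_{2}-\epsilon_{n}$ and hence the secrecy requirement~(\ref{l3}).
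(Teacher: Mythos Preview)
Your argument is correct but follows a genuinely different route from the paper's.

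For the first inequality the paper does not single-letterise. Instead it introduces an indicator
$\zeta=\mathbf{1}\{(V_{1}^{n},V_{2}^{n},Z^{n})\notin A_{\epsilon}^{(n)}(P_{V_{1},V_{2},Z})\}$, writes
$I(V_{1}^{n},V_{2}^{n};Z^{n})\le I(\zeta;Z^{n})+I(V_{1}^{n},V_{2}^{n};Z^{n}\mid\zeta)$, bounds the atypical branch by $n\epsilon_{n}\log|\mathcal{Z}|$, the term $I(\zeta;Z^{n})$ by $1$, and on the typical event uses the AEP estimates
$\log P(v_{1}^{n},v_{2}^{n},z^{n})\approx -nH(V_{1},V_{2},Z)$ etc.\ to get $n(I(V_{1},V_{2};Z)+3\epsilon_{n})$. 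Your chain---memorylessness of $W(z\mid v_{1},v_{2})$, the inequality $I(V_{1}^{n},V_{2}^{n};Z^{n})\le\sum_{i}I_{Q_{i}}(V_{1},V_{2};Z)$, concavity in the input law, and continuity in $\bar{Q}$---is at least as clean and avoids the slightly informal step in the paper where the conditional mutual information given $\zeta=0$ is evaluated using unnormalised restricted probabilities.

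For the second inequality the difference matters more. The paper simply repeats the indicator trick with $\zeta=\mathbf{1}\{(V_{1}^{n},V_{2}^{n})\notin A_{\epsilon}^{(n)}(P_{V_{1},V_{2}})\}$; since the encoder only outputs jointly typical pairs (up to an event of vanishing probability), the atypical branch is negligible and the typical branch gives $n(I(V_{1};V_{2})+O(\epsilon_{n}))$ directly from AEP. This sidesteps entirely the step you flagged as hard---the lower bound on $H(V_{1}^{n},V_{2}^{n})$ via concentration of the covering count and near-uniformity of the selected pair. Your entropy bookkeeping works, but it buys you nothing here and costs you a second-moment argument that the paper never needs. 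If you want a single technique for both parts, the indicator split is the simpler one; if you prefer your route, note that the second inequality can also be obtained by your own first method, since $(V_{1}^{n},V_{2}^{n})\in A_{\epsilon}^{(n)}(P_{V_{1},V_{2}})$ with probability $1-o(1)$ forces each $Q_{i}$ close to $P_{V_{1},V_{2}}$ and hence $\frac{1}{n}\sum_{i}I(V_{1,i};V_{2,i})\le I(V_{1};V_{2})+o(1)$, and $I(V_{1}^{n};V_{2}^{n})\le\sum_{i}I(V_{1,i};V_{2,i})$ does \emph{not} hold in general---so you really would need either the codebook counting or the indicator split there.
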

\begin{proof}
Let $A_{\epsilon}^{n}(P_{V_{1},V_{2},Z})$ denote the set of typical
sequences $(V_{1}^{n},V_{2}^{n},Z^{n})$ with respect to
$P_{V_{1},V_{2},Z}$, and
\begin{equation}\nonumber
\zeta=\left\{
       \begin{array}{ll}
         1, & (V_{1}^{n},V_{2}^{n},Z^{n})\notin A_{\epsilon}^{n}(P_{V_{1},V_{2},Z}); \\
         0, & \hbox{otherwise},
       \end{array}
     \right.
\end{equation}
be the corresponding indicator function. We expand
$I(V_{1}^{n},V_{2}^{n};Z^{n})$ as follow,
\begin{IEEEeqnarray}{rl}\label{l31}
I(V_{1}^{n},V_{2}^{n};Z^{n})&\leq
I(V_{1}^{n},V_{2}^{n},\zeta;Z^{n})\\ \nonumber
&=I(V_{1}^{n},V_{2}^{n};Z^{n},\zeta)+I(\zeta;Z^{n})\\
\nonumber
&=\sum_{j=0}^{1}P(\zeta=j)I(V_{1}^{n},V_{2}^{n};Z^{n},\zeta=j)+I(\zeta;Z^{n}).
\end{IEEEeqnarray}
According to the joint typicality property, we have
\begin{IEEEeqnarray}{rl}\label{l32}
P(\zeta=1)I(V_{1}^{n},V_{2}^{n};Z^{n}|\zeta=1)&\leq
nP((V_{1}^{n},V_{2}^{n},Z^{n})\notin
A_{\epsilon}^{n}(P_{V_{1},V_{2},Z}))\log\|\mathcal{Z}\|\\
\nonumber &\leq n\epsilon_{n}\log\|\mathcal{Z}\|.
\end{IEEEeqnarray}
Note that,
\begin{equation}\label{l33}
I(\zeta;Z^{n})\leq H(\zeta)\leq 1
\end{equation}
Now consider the term
$P(\zeta=0)I(V_{1}^{n},V_{2}^{n};Z^{n}|\zeta=0)$. Following the
sequence joint typicality properties, we have
\begin{IEEEeqnarray}{rl}\label{l34}
P(\zeta=0)I(V_{1}^{n},V_{2}^{n};Z^{n}|\zeta=0)&\leq
I(V_{1}^{n},V_{2}^{n};Z^{n}|\zeta=0)\\ \nonumber &=
\sum_{(V_{1}^{n},V_{2}^{n},Z^{n})\in
A_{\epsilon}^{n}}P(V_{1}^{n},V_{2}^{n},Z^{n})\big(\log
P(V_{1}^{n},V_{2}^{n},Z^{n})-\log
P(V_{1}^{n},V_{2}^{n})\\ \nonumber &-\log P(Z^{n})\big),\\
\nonumber &\leq
n\left[-H(V_{1},V_{2},Z)+H(V_{1},V_{2})+H(Z)+3\epsilon_{n}\right],\\
\nonumber &=n\left[I(V_{1},V_{2};Z)+3\epsilon_{n}\right].
\end{IEEEeqnarray}
By substituting (\ref{l32}), (\ref{l33}), and (\ref{l34}) into
(\ref{l31}), we get the desired reasult,
\begin{IEEEeqnarray}{rl}
I(V_{1}^{n},V_{2}^{n};Z^{n})&\leq
nI(V_{1},V_{2};Z)+n\left(\epsilon_{n}\log\|\mathcal{Z}\|+3\epsilon_{n}+\frac{1}{n}\right),\\
\nonumber &=nI(V_{1},V_{2};Z)+n\delta_{1n},
\end{IEEEeqnarray}
where,
\begin{equation}\nonumber
\delta_{1n}=\epsilon_{n}\log\|\mathcal{Z}\|+3\epsilon_{n}+\frac{1}{n}.
\end{equation}
Following the same steps, one can prove that
\begin{IEEEeqnarray}{lr}
I(V_{1}^{n};V_{2}^{n})\leq nI(V_{1};V_{2})+n\delta_{2n}.
\end{IEEEeqnarray}
\end{proof}
Using the same approach as in Lemma \ref{lem3}, we can prove the
following lemmas.
\begin{lem}\label{lem4}
Assume $ V_{1}^{n}, Y_{1}^{n}$ and $Y_{2}^{n}$ are generated
according to the achievablity scheme of Theorem \ref{t4}, then we
have,
\begin{IEEEeqnarray}{lr}\nonumber
I(V_{1}^{n};Y_{1}^{n})\leq nI(V_{1};Y_{1})+n\delta_{3n},\\
\nonumber I(V_{2}^{n};Y_{2}^{n})\leq nI(V_{1};Z)+n\delta_{4n}.
\end{IEEEeqnarray}
\end{lem}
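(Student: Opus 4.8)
The plan is to reuse, almost word for word, the argument that proves Lemma~\ref{lem3}, replacing the triple $(V_1^n,V_2^n,Z^n)$ by the pair $(V_1^n,Y_1^n)$ for the first bound and by $(V_2^n,Y_2^n)$ for the second (where the right-hand side is to be read as $nI(V_2;Y_2)+n\delta_{4n}$). Fix the single-letter law $P(v_1,v_2)P(x|v_1,v_2)P(y_1,y_2,z|x)$ of Theorem~\ref{t4}, let $P_{V_1,Y_1}$ be its $(V_1,Y_1)$-marginal, let $A_\epsilon^n(P_{V_1,Y_1})$ be the associated jointly typical set, and let $\zeta$ be the indicator of the event $(V_1^n,Y_1^n)\notin A_\epsilon^n(P_{V_1,Y_1})$. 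Then, exactly as in~(\ref{l31}),
\begin{IEEEeqnarray}{rl}\nonumber
I(V_1^n;Y_1^n)&\leq I(V_1^n,\zeta;Y_1^n)=I(V_1^n;Y_1^n,\zeta)+I(\zeta;Y_1^n)\\ \nonumber
&=\sum_{j=0}^{1}P(\zeta=j)\,I(V_1^n;Y_1^n|\zeta=j)+I(\zeta;Y_1^n).
\end{IEEEeqnarray}

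The only step carrying genuine content is the claim $P(\zeta=1)\to0$, i.e. that the stochastic encoder of Theorem~\ref{t4} renders $(V_1^n,Y_1^n)$ jointly typical with probability approaching one. I would deduce this from the typicality chain already invoked in the error-probability analysis of Theorem~\ref{t4}: the encoder selects $(V_1^n,V_2^n)\in A_\epsilon^{(n)}(P_{V_1,V_2})$; conditioned on this, $X^n$ is drawn according to $\prod_iP(x_i|v_{1,i},v_{2,i})$, so $(V_1^n,V_2^n,X^n)$ is jointly typical; and $Y_1^n$ is the output of the memoryless channel $\prod_iP(y_{1,i}|x_i)$, so $(V_1^n,V_2^n,X^n,Y_1^n)$ is jointly typical. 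Marginalizing gives $(V_1^n,Y_1^n)\in A_\epsilon^n(P_{V_1,Y_1})$ with high probability, hence $P(\zeta=1)\leq\epsilon_n$.

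With this in hand the remaining estimates are the routine ones from Lemma~\ref{lem3}: (i) by the uniform bound on conditional mutual information, $P(\zeta=1)I(V_1^n;Y_1^n|\zeta=1)\leq nP(\zeta=1)\log\|\mathcal{Y}_1\|\leq n\epsilon_n\log\|\mathcal{Y}_1\|$; (ii) $I(\zeta;Y_1^n)\leq H(\zeta)\leq1$; (iii) using the AEP estimates $-\log P(v_1^n,y_1^n)\in n(H(V_1,Y_1)\pm\epsilon_n)$, $-\log P(v_1^n)\in n(H(V_1)\pm\epsilon_n)$, and $-\log P(y_1^n)\in n(H(Y_1)\pm\epsilon_n)$ valid on the typical set,
\begin{IEEEeqnarray}{rl}\nonumber
P(\zeta=0)I(V_1^n;Y_1^n|\zeta=0)&\leq n\bigl[-H(V_1,Y_1)+H(V_1)+H(Y_1)+3\epsilon_n\bigr]\\ \nonumber
&=n\bigl[I(V_1;Y_1)+3\epsilon_n\bigr].
\end{IEEEeqnarray}
Adding the three contributions gives $I(V_1^n;Y_1^n)\leq nI(V_1;Y_1)+n\delta_{3n}$ with $\delta_{3n}=\epsilon_n\log\|\mathcal{Y}_1\|+3\epsilon_n+\frac{1}{n}$; the identical argument with $(V_2,Y_2)$ and $\|\mathcal{Y}_2\|$ in place of $(V_1,Y_1)$ and $\|\mathcal{Y}_1\|$ yields the second inequality with $\delta_{4n}=\epsilon_n\log\|\mathcal{Y}_2\|+3\epsilon_n+\frac{1}{n}$.

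I do not expect a serious obstacle: the sole point demanding care is the concentration of the \emph{encoder-induced} law of $(V_1^n,Y_1^n)$ on $A_\epsilon^n(P_{V_1,Y_1})$ — that law is not literally the i.i.d.\ product $\prod_iP(v_{1,i},y_{1,i})$, since $V_1^n$ is selected uniformly among the jointly typical codewords of a sub-bin — but this is exactly what the typicality chain above (equivalently, the reliability analysis of Theorem~\ref{t4}) secures, so once it is cited the rest is bookkeeping identical to Lemma~\ref{lem3}.
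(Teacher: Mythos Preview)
Your proposal is correct and follows exactly the approach the paper indicates: the paper's own proof of Lemma~\ref{lem4} simply states that the steps are very similar to those of Lemma~\ref{lem3} and omits the details, and you have carried out precisely those steps with the appropriate substitutions (including the sensible correction of the evident typo $nI(V_1;Z)$ to $nI(V_2;Y_2)$). Your brief discussion of why the encoder-induced law concentrates on the typical set is a reasonable elaboration of the one point the paper leaves implicit.
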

\begin{proof}
The steps of the proof are very similar to the steps of proof of
Lemma \ref{lem3} and may be omitted here.
\end{proof}

\end{document}